\renewcommand{\algorithmicrequire}{\textbf{Input:}}
\renewcommand{\algorithmicensure}{\textbf{Output:}}
\algnewcommand{\LINECOMMENT}[1]{\STATE\(\triangleright\) #1}
\newcommand{\mathleft}{\@fleqntrue\@mathmargin20pt}
\newcommand{\mathcenter}{\@fleqnfalse}
\theoremstyle{definition}
\newtheorem{Theorem}{Theorem}[section]
\newtheorem{Corollary}{Corollary}[section]
\newtheorem{Proposition}{Proposition}[section]
\newtheorem{Remark}{Remark}[section]
\newtheorem{Example}{Example}[section]
\newtheorem{defn}{Definition}[section]
\newtheorem*{prf}{Proof}
\begin{document}

\begin{frontmatter}
\title{Metric-Based Granular Computing in Networks}
%\author{Hibba Arshad, Imran Javaid*}
%%\subjclass{Primary: , Secondary: }
%
%%2017 {\it Mathematics Subject Classification.} 05C25\\
%%$^*$ Corresponding author: ijavaidbzu@gmail.com\\
%\address{Centre for advanced studies in Pure and Applied Mathematics,
%Bahauddin Zakariya University Multan, Pakistan}
%\newline {Email:ijavaidbzu@gmail.com}
\author[1]{Hibba Arshad}
\affiliation[1]{organization={Centre for Advanced Studies in Pure and Applied Mathematics},
            addressline={Bahauddin Zakariya University},
            city={Multan},
            postcode={6000},
            country={Pakistan.}}
\ead{hibbaarshad87@gmail.com}
\cortext[cor1]{Corresponding author}
\author[1]{Imran Javaid\corref{cor1}}
\ead{imran.javaid@bzu.edu.pk}
\begin{abstract}
Networks can be highly complex systems with numerous interconnected components and interactions. Granular computing offers a framework to manage this complexity by decomposing networks into smaller, more manageable components or granules. In this article, we introduce metric-based granular computing technique to study networks.
This technique can be applied to the analysis of networks where granules can represent subsets of nodes or edges and their interactions can be studied at different levels of granularity. We model the network as an information system and investigate its granular structures using metric representation. We establish that the concepts of reducts in rough set theory and resolving sets in networks are equivalent. Through this equivalence, we present a novel approach for computing all the minimal resolving sets of these networks.

\end{abstract}
\begin{keyword}
Granular Computing\sep Network\sep Rough Set\sep Reduct\sep Resolving Set\sep  Discernibility Matrix.
\end{keyword}
\end{frontmatter}
%%%---------------------------------------------------------------------
\section{Introduction}
Networks provide a framework for modeling and analyzing relationships and interactions in various fields. A network consists of interconnected elements, typically referred to as nodes or vertices, linked by edges or connections. Networks are commonly used to represent systems where entities interact, such as social networks, where individuals are connected through friendships, communication, or collaborations.
Granular computing is an information processing framework that simplifies complex networks, enabling easier analysis and understanding of their structural dynamics and relationships. Granular computing was initially proposed by Zadeh in 1979 within the framework of fuzzy set theory \cite{zadeh}. Granular computing is an emerging knowledge retrieval model, that handles information processing at different levels of granularity, focusing on the organization of complex entities, known as information granules.
These granules represent groups of individuals in a network who share similar features or behaviors. Granular computing has many applications in network analysis, providing solutions to problems in various domains, including social networks \cite{lisenti, raj}. One well-known application of granular computing in social network analysis is the classification and clustering of nodes according to their connectivity patterns. Raj et al. \cite{raj} presented the use of granular computing approaches for finding community structures within social networks, where nodes are classified based on their similarities and interaction behaviors. Li et al. \cite{lisenti} proposed a granular computing paradigm for sentiment analysis in social networks.
\par

Granular computing incorporates various computational intelligence techniques to solve complex problems involving uncertainty and imprecision. Rough set theory (RST) \cite{lin, yao} is a granular computing technique with applications in various fields, including decision studies \cite{liu}, information sciences \cite{zhang}, and data mining \cite{liao}. The main concept of RST is the indiscernibility relation which is strongly related to data granularity. This relation is defined as an equivalence relation, meaning it satisfies three properties: reflexivity, symmetry, and transitivity. Equivalence relations naturally lead to the partitioning of data into disjoint sets. In the context of networks, these sets can represent communities or clusters of nodes with similar properties or behavior. Gupta and Kumar \cite{Gupta} proposed a rough set-based approach for community detection in networks. Dutta et al. \cite{dutta} introduced an algorithm based on RST for attribute selection and implemented it in the context of spam classification within online social networks. RST helps analyze networks by identifying important nodes that provide information about the overall structure of the network.
\par
Graph theory plays an important role in the study of networks by providing a rich set of tools and techniques for modeling, analyzing, visualizing, predicting, and optimizing complex systems \cite{west}. In practical scenarios, graphs with irregular structures are commonly found in various domains including social networks \cite{raj}, and protein networks \cite{pro}. Numerous real-world applications require graph analysis, which includes tasks like graph classification \cite{chenz}, node classification \cite{zhux}, and link prediction \cite{daud}. Prabhu et al. \cite{prabhu} determined the fault-tolerant metric dimension (FTMD) for butterfly, benes, and silicate networks, highlighting that twin vertices are essential in all fault-tolerant resolving sets. Gargantini et al. \cite{garg} analyzed how certain graph operations, like adding twins or simplicial vertices, impact the rotation graph structure of a given graph. Honkala et al. \cite{honkala} explored the ensemble of optimal identifying codes in twin-free graphs, analyzing how unique vertex identifiers can be constructed with minimal codes, advancing the understanding of graph-based cryptographic applications and efficient network labeling.
\par
The integration of RST and graph theory provides a powerful framework for analyzing and understanding complex systems with applications in various fields. RST can be applied to analyze the structure of graphs and helps in performing granulation of vertices using indiscernibility relation. This involves partitioning the vertex set into indiscernibility classes such that objects within the same class have similar properties. Numerous research efforts have been taken, particularly focusing on undirected graphs. Stell \cite{stell1, stell} was the first to apply granular computing and RST concepts to undirected graphs and proposed a novel approach for analyzing the granularity of hypergraphs \cite{stell2}.
Chiaselotti et al. \cite{Chiaselotti1, Chiaselotti2, Chiaselotti3, Chiaselotti4} investigated granular computing in the framework of graphs and digraphs, utilizing the adjacency matrix as an information table. They also introduced the concept of the metric information table for simple undirected graphs in \cite{Chiaselotti5}. Xu et al. \cite{xu} investigated the connections between generalized RST and graph theory using the mutual representation of binary relations. Javaid et al. \cite{Javaid} studied graphs by using the concept of orbits and RST. Arshad et al. \cite{hibba} studied the zero-divisor graphs of rings of integers modulo $n$ using the concept of granular computing. Fatima et al. \cite{fatima} applied the concept of RST to finite dimensional vector spaces.
\par
Distance metrics in graphs or networks measure the distance between pairs of nodes, providing information about the structural features of graphs. Metrics such as shortest path lengths, commute times, and resistance distances, are important for analyzing the structure and dynamics of a network. These metrics are critical for determining centrality measures, which evaluate the relevance of nodes in a network. The number of shortest paths passing through a node determines betweenness centrality, which helps to identify influential nodes \cite{Freeman}. Closeness centrality measures how close a node is to all others, highlighting those that can quickly interact with the network \cite{Sabidussi}. Newman \cite{newman} introduced techniques of modularity optimization with distance metrics for finding clusters of closely connected nodes in networks.
%Distance-based parameters applied earlier to the study of user behavior in online social networks and improvement of recommendation systems. Liben-Nowell and Kleinberg \cite{liben} examined how geographic distances and social distances between friends affect the creation and strength of social ties.
The metric dimension and related distance-based parameters of networks play a significant role in various real-life applications, including image processing and pattern recognition. The concept of the metric dimension of a graph was first introduced by Slater \cite{Slater} and then independently by Harary and Melter \cite{Harary}. Harary and Melter \cite{Harary} used the concepts of location number and locating set, while Slater \cite{Slater} utilized the terminology of resolving set and metric dimension. Throughout this paper, we will adopt Slater's terms, specifically metric dimension and resolving set, whenever referring to these concepts. In  \cite{Harary}, it is stated that finding the metric dimension of a graph is NP-complete problem. Several algorithms have been developed to compute the metric dimension and resolving sets of a graph. These algorithms can be exact, approximate, or heuristic in nature. Exact algorithms aim to find the exact metric dimension, but they are often computationally expensive, especially for large graphs due to the NP-hardness of the problem. Approximation algorithms provide solutions that are guaranteed to be close to the optimal metric dimension. Heuristic algorithms offer quick, but not necessarily optimal, solutions to the problem.
\par
Granular computing provides a systematic approach to simplify and manage complexity in networks by breaking them down into smaller, more manageable units. In this paper, we study metric-based granular computing techniques for network analysis. We introduce an information system using the distance between vertices and define an indiscernibility relation on the vertex set $V$ with respect to a subset $\mathbb{A} \subseteq V$. We study indiscernibility partitions induced by this indiscernibility relation and establish that reducts and resolving sets are equivalent. Using this equivalence, we introduce a novel method for computing all the minimal resolving sets in networks. We also present two different algorithms for finding resolving sets in networks. The first algorithm is based on the definition of reduct, and identifies different subsets that can serve as resolving sets. The second algorithm uses the discernibility function derived from the discernibility matrix to compute all the minimal resolving sets. We lay foundations for this study for simple undirected graphs and then apply these concepts to two families of zero-divisor graphs associated with $\mathbb{Z}_n$ and $\prod_{i=1}^{k}\mathbb{Z}_2$, representing a family of graphs with twin vertices and a family of twin-free graphs respectively to establish the effectiveness of the proposed techniques in analyzing network structure and properties.
\par
The paper is structured as follows. In section 2, we recall some basic terminology and concepts. Moreover, we associate a political network with zero-divisor graphs and relate it to the problem of finding important nodes in the network. In section 3, we define an indiscernibility relation on the vertex set $V$ with respect to a subset $\mathbb{A} \subseteq V$ using the concept of distance and investigate indiscernibility partitions formed by different vertex subsets.  We also analyze how various vertex subsets contribute to reducts. Moreover, we study the granulation of networks associated with commutative rings. We also study the relationship between positive region, dependency measures, and partial order relation. In section 4, we present the distance-based discernibility matrix to study the properties of networks and show that this matrix can be used to compute all the possible resolving sets. Furthermore, we examine the structure of essential sets and observe that these sets correspond to all the minimal entries of the distance-based discernibility matrix. In Section 5, we illustrate the application of the proposed concepts, and in the final section, we present the conclusion of the paper.
\smallskip
\section{Preliminaries}
This section provides a review of some fundamental concepts, which will be useful in subsequent sections.
\subsection{Information System and Set Partitions}
Information system serves as a formal structure to represent objects, their attributes and the values of these attributes corresponding to each object.
An information system \( \mathcal{I} \) is defined as a quadruple \( (U, Att, f, Val) \), where \( U \) represents a non-empty finite set of objects, known as the universe, \( Att \) is a non-empty set of attributes, and \( f: U \times Att \to Val \) is an information mapping function, with \( Val \) being the set of possible values. An information system \( \mathcal{I} \) is referred to as a Boolean information system if its set of values is \( Val = \{0,1\} \).

\par
Information systems are also known as information tables, data tables, and attribute-value systems. By examining the attribute-value pairs for each object, we can determine which objects are indistinguishable with respect to a given set of attributes $\mathbb{A}\subseteq Att$. The indiscernibility relation between the objects $v_i$ and $v_j$ of $U$ is an equivalence relation defined as \cite{ind}:
\begin{center}
$v_{i}\equiv_{\mathbb{A}} v_{j}\Leftrightarrow f(v_{i}, a)=f(v_{j}, a)\, \forall \, a\in \mathbb{A}$.
\end{center}
This $\mathbb{A}$-indiscernibility relation divides the object set $U$ into disjoint sets such that it induces a partition $\pi_{\mathbb{A}}$ on $U$. We denote by \( [v_i]_{\mathbb{A}} \) the granule of \( v_i \) under the equivalence relation \( \equiv_{\mathbb{A}} \). Any change in the attribute set $\mathbb{A}$ leads to a corresponding change in the composition of the information granules. We say $\mathbb{G}_{\mathbb{A}}(U)=(U, \mathbb{A},\pi_{\mathbb{A}}(U))$ is the granular referencing system of $U$ induced by $\mathbb{A}$.
\par
In RST, each subset $X\subseteq U$ is associated with two subsets of $U$ as lower and upper approximations. Pawlak \cite{ind} defined the lower and upper approximations of $X$ as:
\begin{center}
$\mathbb{L}_{\mathbb{A}}(X)= \{v_{i} \in U: [v_i]_{\mathbb{A}} \subseteq X\}=\bigcup\{[v_i]_{\mathbb{A}}: [v_i]_{\mathbb{A}} \subseteq X\}$\\
$\mathbb{U}_{\mathbb{A}}(X)= \{v_{i} \in U: [v_i]_{\mathbb{A}}\cap X\neq\emptyset\}=\bigcup\{[v_i]_{\mathbb{A}}: [v_i]_{\mathbb{A}}\cap X\neq\emptyset\}$.
\end{center}
A subset $X\subseteq U$ is referred to as $\mathbb{A}$-definable/$\mathbb{A}$-exact if $\mathbb{L}_{\mathbb{A}}(X)=\mathbb{U}_{\mathbb{A}}(X)$; otherwise the set is known as $\mathbb{A}$-rough \cite{ind}.
\par
An important aspect of RST involves identifying and removing redundant or irrelevant attributes while preserving important information. In this process, the concept of reduct plays a significant role. A subset $\mathbb{A} \subseteq Att$ is said to be a reduct of an information system $\mathcal{I}$ if $\pi_{\mathbb{A}}=\pi_{Att}$ and $\pi_{\mathbb{A}\setminus\{v_{i}\}}\neq \pi_{Att}$ for all $v_{i} \in \mathbb{A}$ \cite{ind}.
An information system can possess multiple reducts, where \( RED \) denotes the set of all reducts associated with \( \mathcal{I} \). The core is defined as the intersection of all reducts.
For any pair $v_i, v_j \in U$, the discernibility matrix is a $|U|\times|U|$ matrix whose entries $(i, j)$ are defined as $\Delta_{\mathcal{I}}(v_{i}, v_{j}) :=\{a\in Att: f(v_{i}, a) \neq f(v_{j}, a)\}$.

\subsection{Networks}
A network is a collection of objects, often called nodes or vertices, that are connected by relationships, referred to as edges. Networks appears in different contexts such as social relationships in social networks, hyperlinks between web pages, transportation routes between cities, and molecular interactions in biological systems. Graphs are visual representations of these types of networks. Graph theory provides a set of mathematical tools and parameters for studying and analyzing networks. Assuming the reader has prior knowledge of the terminology, we provide a brief overview of the basic notations and refer to \cite{west} for more detailed explanations.
\par
Formally, network (or graph) $G$ is defined as an ordered pair consisting of a vertex set $V$ and edge set $E$, written as $G=(V, E)$ where each edge connects two vertices. The degree of a vertex is the number of edges that are incident to it. The total number of vertices in $G$ is referred to as the order of $G$, and the total number of edges in $G$ is known as its size. The neighbourhood of a vertex $v_{i}$ is defined as $N_{G}(v_{i}) = \{ v_{j} \in V : (v_{i}, v_{j})\in E\}$. For a subset $X$ of $V$, the neighbourhood of $X$ is defined as $N_{G}(X) = \bigcup_{v_{i} \in X} N_{G}(v_{i})$. Two vertices in $G$ are said to be true twins if they share the same neighborhood and are also connected by an edge. The vertices are called false twins if they share the same neighborhood but are not directly connected.
\par
The shortest path length, or the minimum number of edges, between two vertices in a network is referred to as their distance and is denoted by $d(v_{i}, v_{j})$. The diameter of a connected graph is described as the maximum distance between any two vertices. For $v_{i}, v_{j} \in V$, if $d(v_{i}, u) \neq d(v_{j}, u)$, we say that $u \in V$ resolves or distinguishes $v_{i}$ and $ v_{j}$ in $V$. A subset $A$ of vertex set $V$ is said to be a resolving set of $G$ if there exists at least one vertex $u \in A$ that resolves $v_{i}$ and $v_{j}$ for each pair $v_{i}, v_{j} \in V$ \cite{Slater}. The metric dimension of \( G \), denoted by \( \dim(G) \), is the cardinality of a minimal resolving set with the smallest number of vertices, known as a metric basis of \( G \). An upper basis of \( G \) is a minimal resolving set that contains the maximum number of vertices. Its cardinality is referred to as the upper dimension of \( G \), represented by \( \dim^{+}(G) \) \cite{redmond}.
\par
We apply our methodology to two families of zero-divisor graphs associated with commutative rings.
For a ring $R$, Beck \cite{Beck} proposed the concept of zero-divisor graphs, in which the vertices correspond to the elements of a ring. Later, Livingston \cite{Livingston} proposed a new definition of zero-divisor graphs, where the vertices are the zero-divisors of the ring. For the ring $\mathbb{Z}_{n}$, zero-divisor graph is $\Gamma(\mathbb{Z}_{n})$, where the vertices are the zero-divisors of $\mathbb{Z}_{n}$, denoted by $Z(\mathbb{Z}_{n})$. For any two vertices $v_{i}, v_{j} \in Z(\mathbb{Z}_{n})$, there is an edge between $v_{i}$ and $v_{j}$ if $v_{i}\cdot v_{j}=0$. According to  Rather et. al \cite{Rather}, the ring $\mathbb{Z}_{n}$ contains $n-\phi(n)-1$ zero-divisors. The zero-divisor graph of the ring $\prod_{i=1}^{k}\mathbb{Z}_2$, where $k \geq 2$, is denoted by $\Gamma(\prod_{i=1}^{k}\mathbb{Z}_2)$ and has a vertex set consisting of all zero-divisors of $\prod_{i=1}^{k}\mathbb{Z}_2$, (i.e., $Z(\prod_{i=1}^{k}\mathbb{Z}_2) = \prod_{i=1}^{k}\mathbb{Z}_2 \setminus \{(0, 0, \cdots, 0), (1, 1, \cdots, 1)$)\}. The order of this graph is $2^k - 2$. The graphs associated with the ring $\mathbb{Z}_{n}$  are classified as graphs with twins, while those associated with $\prod_{i=1}^{k}\mathbb{Z}_2$ form twin-free graphs. Throughout this paper, we assume $R$ to be a commutative ring and $V$ be a vertex set of $G$ associated with $R$, unless stated otherwise. Moreover, the terms "network" and "graph" are used interchangeably, indicating the same mathematical structure and concept.
\par
We now illustrate the concept of resolving set using a specific example from social network analysis.
\subsection{Political Network Modeled by Graphs Associated to Algebraic Structure}\label{pn}
Social interaction encompasses both positive and negative relationships. People establish connections to express friendship, support, or approval, while also forming links to signify disapproval, disagreement, or distrust. Research on social networks has predominantly focused on positive relationships, with negative relationships receiving comparatively less attention \cite{zare}.  In this example, we study a network in which relationships are negative, which indicates that the individuals are in opposition.
\par
Consider a political network modeled as a graph $G$ where the vertex set consists of individuals in the political party, and there is an edge between two individuals if the corresponding individuals hold opposing views on certain issues. Let $V=\{v_1, v_2, \cdots, v_n\}$ be a set of individuals (or agents) expressing views on a particular issue. Each agent $v_i$ holds a view that contrasts with the views of other agents, referred to as opponents.
Now, suppose there are $k$ distinct issues, each of which can be supported or opposed. Each individual’s opinion can be represented as a binary string of length $k$, where 1 denotes support and 0 denotes opposition. If an individual supports issue 1 and issue 4 and opposes all other issues then that individual is represented by a vector of length $k$ as $(1, 0, 0, 1, 0, \cdots, 0)$.
\par
We model this situation using the concept of zero-divisors, where we interpret an individual’s binary opinion vector as an element in the ring $\prod_{i=1}^{k}\mathbb{Z}_{2}$, the direct product of $k$ copies of the ring $\mathbb{Z}_2$. In this case, the individuals are modeled as elements in this ring, and two individuals are connected by an edge if the product of their opinion vectors (interpreted as elements of $\prod_{i=1}^{k}\mathbb{Z}_{2}$) is zero. For example, consider two individuals, $v_1$ and $v_2$, where $v_1$ holds the opinion vector $(1, 0, 0)$ and $v_2$ holds $(0, 1, 1)$. The product of these two vectors is $(0, 0, 0)$, indicating that their views on certain issues (in this case, all issues) are mutually exclusive. Thus, they are connected by an edge in the graph, and they behave as zero-divisors.

\par
We consider the zero-divisor graph of $\prod_{i=1}^{3}\mathbb{Z}_{2}$ where each vertex in the zero-divisor graph represents a combination of binary values $(a, b, c)$, corresponding to opinions on three political issues: $a$ for taxation, $b$ for healthcare, and $c$ for education. An edges between vertices signify complete opposition or disagreement on all issues. For example, a vertex labeled as $(1, 0, 0)$ might represent an individual who is in support of taxation, but opposes healthcare and education.
Suppose there are six individuals, each with opinions on three different issues (taxation, healthcare, education):\\
David: Supports taxation, opposes healthcare, supports education.\\
Sarah: Opposes taxation, supports healthcare, opposes education. \\
Emily: Supports taxation, supports healthcare, opposes education. \\
Maria: Opposes taxation, opposes healthcare, supports education. \\
John: Supports taxation, opposes healthcare, opposes education. \\
Jessica: Opposes taxation, supports healthcare, supports education.\\
The political network is illustrated in Figure \ref{fig11}.
\begin{figure}[h!]
  \centering
\includegraphics[width=5cm]{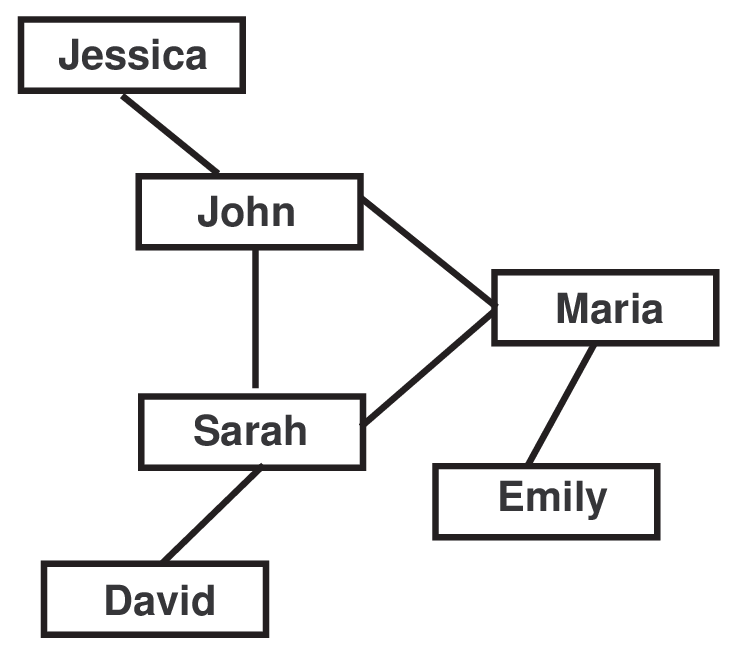}
%    \qquad
%    \subfloat[\centering $\mathcal{I}(\Gamma(\mathbb{Z}_{n}))$]{{\includegraphics[width=4cm]{information table.PNG} }}%
   \caption{Political Network}%
   \label{fig11}%
\end{figure}
\par
In network analysis, one of the fundamental tasks is identifying the smallest subsets of nodes in a network that uniquely distinguish every other node in the network. This is the problem of finding resolving sets in a network. The task of finding all resolving sets of a network is generally challenging because it involves an exhaustive search over all possible subsets of vertices, and the number of such subsets can be exponential large in size. Therefore, we use RST to identify all resolving sets. A resolving set in the political network $G$ is defined as a subset of individuals whose perspectives or opinions uniquely differentiate every other individual in the network based on their opposition relationships. For instance, a possible resolving set might be $\{John, Maria\}$. When compared to an agent like Emily, whose opinion vector is
$(1,1,0)$ (supports taxation and healthcare, opposes education), John and Maria’s opposing views allow them to uniquely identify Emily. John agrees with Emily on taxation but differs on healthcare, while Maria differs on taxation and education. This ability to differentiate based on opposition holds for any other agent in the network, ensuring that John and Maria's collective perspectives can distinguish all individuals based on their views on the three issues.

\par
By using this framework, we can model social and political networks as graphs where opposition relationships are represented through zero-divisors in a ring structure, highlighting how contrasting opinions lead to disjoint or opposing positions within the network.
\smallskip
\section{Granular Computing in Networks}
In this section, we represent finite simple undirected graphs (or networks) as an information table by using the distance between vertices as an information map
to obtain the indiscernibility partitions. In this context, the network is considered as a system, where the nodes represent objects, and the edges denote relationships or interactions between these objects. We define an indiscernibility relation on the vertex set $V$ of a simple undirected graph with respect to $\mathbb{A}\subseteq V$, and study the resulting indiscernibility partitions on $V$. We also establish that reducts and minimal resolving sets are equivalent. Moreover, we study the granulation of networks modeled by zero-divisor graphs.
\par
We start by defining the concept of an information table for simple undirected graphs, which serves as the foundation for this study.
\begin{defn}
An information table $\mathcal{I}(\mathcal{G})$ of a graph $\mathcal{G}$ is a quadruple $(V, \mathbb{A}, \mathcal{F}, Val)$, where  $V=\{v_1, v_2, \cdots v_n\}$ is a vertex set of $\mathcal{G}$ and $\mathbb{A}\subseteq V$, $Val = \{0, 1, 2, \cdots , diam(\mathcal{G})\}$ and information map $\mathcal{F}: V\times V\rightarrow Val$ is defined as:
\[
 \mathcal{F}(v_{i}, v_{j}) = \left.
  \begin{cases}
   d(v_{i}, v_{j}), & \text{if }  v_{i}\neq v_{j}  \\
     0, & \text{if }  v_{i}= v_{j}.
    \end{cases}
  \right.
\]
\end{defn}
From now, we assume $\mathcal{I}= \mathcal{I}(\mathcal{G})$ unless mentioned otherwise. In a traditional information table, each row label represents an object, and each column label represents an attribute. The values in the table represent the attribute values associated with each object. In the context of graph analysis, the information table can be represented as a matrix, where both the rows and columns correspond to vertices, and the entries represent relationships between pairs of objects. We consider a distance matrix as an information table where rows and columns represent objects, and the entries represent distances between pairs of objects.
\par
Two vertices \( v_i \) and \( v_j \) are considered indiscernible with respect to a vertex \( u \) if \( d(v_i, u) = d(v_j, u) \). More generally, \( v_i \) and \( v_j \) are indiscernible with respect to \( \mathbb{A} \) if this condition holds for all \( u \in \mathbb{A} \). The vector of distances $(d(_{i},u_{1}), d(v_{i}, u_{2}), \cdots, d(v_{i}, u_n))$ is the ordered tuple and is denoted by $\gamma(v_{i}|\mathbb{A})$. If $\mathbb{A}=V$, then the row of the information table corresponding to $v_i$ is same as $\gamma(v_{i}|V)$. Note that, the $i^{th}$ vertex in $\mathbb{A}$ has $0$ in its $i^{th}$ coordinate and all other coordinates are non-zero. As a result, the vertices of $\mathbb{A}$ must have different representations.
Two vertices $v_{i}$ and $v_j$ in $V$ are $\mathbb{A}$-indiscernible denoted as $v_{i} \equiv_{\mathbb{A}} v_j$ if and only if $\gamma(v_{i}|\mathbb{A})=\gamma(v_j|\mathbb{A})$. Equivalently, $v_{i} \equiv_{\mathbb{A}} v_j\Longleftrightarrow \mathcal{F}(v_{i}, a)=\mathcal{F}(v_j, a)$\, for all $a\in \mathbb{A}$.
The set of all the vertices indiscernible to a vertex $v_{i}$ with respect to $\equiv_{\mathbb{A}}$ is defined as:
\begin{equation}\label{eq2}
C_{\mathbb{A}}(v_{i})=\{v_j \in V:\mathcal{F}(v_j, a)=\mathcal{F}(v_{i}, a)\,\, \forall\, a \in \mathbb{A} \}.
\end{equation}
that is equivalent to
\begin{equation}\label{eq3}
  v_{i} \equiv_{\mathbb{A}} v_j\Longleftrightarrow C_{\mathbb{A}}(v_{i})=C_{\mathbb{A}}(v_j).
\end{equation}
When $\mathbb{A}=V$, then $C_{\mathbb{A}}(v_{i})=\{v_i\}$. Each equivalence class represents an information granule, consisting of vertices that are indistinguishable from one another. That is if $C_i\subseteq V$ such that
$ C_i=C_{\mathbb{A}}(v)$, for some $v \in V$, we say that $C_i$ is an $\mathbb{A}-granule$ of $\mathcal{I}$.
As $C_1, C_2, \cdots, C_s$ are the distinct granules of $\mathcal{I}$, we use the notation $\pi_{\mathbb{A}}=C_1| C_2| \cdots|C_s$ to represent the indiscernibility partition of the vertex set $V$.
\par
Two vertices $v_{i}$ and $v_j$ are distance-similar if $d(v_{i},w) = d(v_j,w)$ $\forall \, w \in V \setminus \{v_{i}, v_j\}$. The set of all distance-similar vertices is denoted by $\mathcal{B}$. If a graph consists of $k$ distance-similar groups, we say $\pi$ is a distance-similar partition and $\pi=\mathcal{B}_{1}|\mathcal{B}_{2}|\cdots|\mathcal{B}_{k}$. The indiscernibility relation $\equiv_{\mathbb{A}}$ is based on the distance between vertices with respect to a subset $\mathbb{A}$, the distance-similarity condition considers the distances to all other vertices in $V$.
\begin{Example}
Consider a customer purchase network in Figure \ref{figc}. This network is a representation of the relationships between customers based on the products they purchase, where nodes represent customers and edges represent different purchases. We have seven customers $c_1, c_2, c_3, c_4, c_5, c_6, c_7$ and four products milk, bread, cheese and sugar. In this scenario, customers $c_1$ and $c_6$ have purchased milk and cheese, $c_3$ and $c_5$ have purchased milk, $c_2$ and $c_7$ have purchased bread and cheese, and $c_4$ has purchased bread and sugar.
\begin{figure}[h!]
  \centering
\includegraphics[width=5cm]{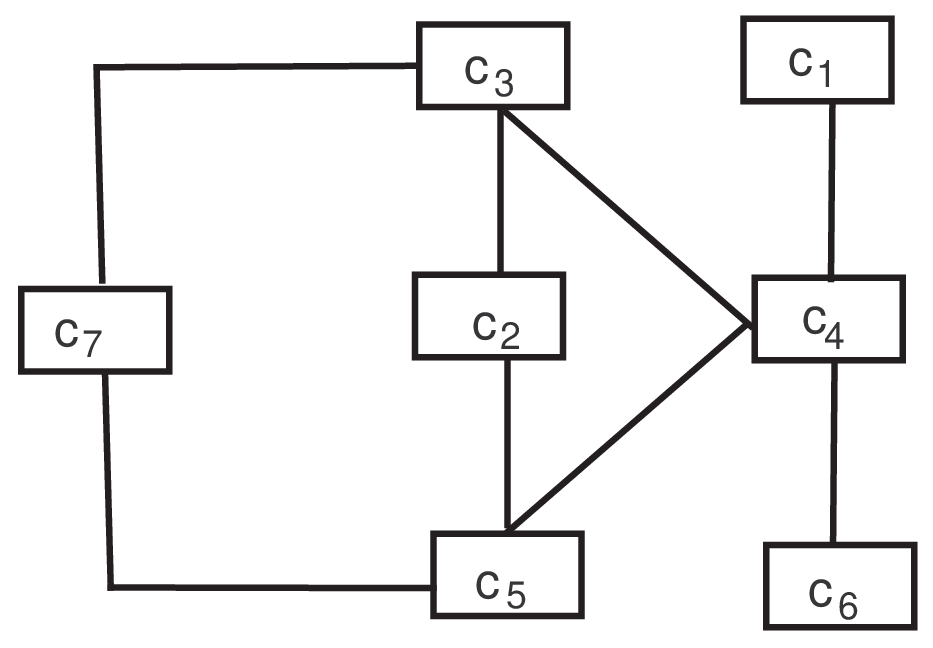}
%    \qquad
%    \subfloat[\centering $\mathcal{I}(\Gamma(\mathbb{Z}_{n}))$]{{\includegraphics[width=4cm]{information table.PNG} }}%
   \caption{Customer Purchase Network}%
   \label{figc}%
\end{figure}
When analyzing this network, it becomes evident from the distances between the vertices that vertices representing $c_1$ and $c_6$ are equivalent, $c_2$ and $c_7$ are equivalent, $c_3$ and $c_5$ are equivalent. In RST, such equivalence indicates that these two customers cannot be distinguished from each other. Therefore the distance-based partition of the vertex set is $c_1, c_6|c_2, c_7|c_3, c_5|c_4$.
\begin{table}[h!]
\scriptsize
\centering
\caption{Information System for Customer Purchase Network}
\begin{tabular}{c|ccccccc}
  \hline
  % after \\: \hline or \cline{col1-col2} \cline{col3-col4} ...
   & $c_1$ & $c_2$ & $c_3$ &$ c_4$ & $c_5$ & $c_6$ & $c_7$\\
\hline
  $c_1$ & 0 & 3 & 2 & 1 & 2 & 2 & 3\\
  $c_2$ & 3 &0 & 1 & 2& 1& 3 &2  \\
  $c_3$ & 2& 1 & 0 & 1& 2 & 2& 1 \\
 $c_4$ & 1 & 2 & 1 & 0 & 1& 1& 2\\
  $c_5$  & 2& 1 & 2 & 1& 0 & 2& 1 \\
$c_6$ & 2 &3 & 2 & 1& 2& 0 & 3  \\
$c_7$& 3& 2 & 1 & 2 & 1 & 3 & 0\\
\hline
\end{tabular}

\label{Table1}
\end{table}
%\FloatBarrier
\par
Rows of this information table give representation of vertices with respect to $V$ and columns represent attributes. If we consider $\mathbb{A}\subseteq V$, the representation of vertices with respect to the attributes in $\mathbb{A}$ is the collection of those entries in the information system that correspond to $\mathbb{A}$. For instance, the vertices \( c_2 \) and \( c_7 \) are indiscernible with respect to \( \mathbb{A} = \{c_1\} \), but they do not remain indiscernible when considered with respect to the entire set \( V \).
\end{Example}
In social networks, the indiscernibility relation can partially order nodes (individuals) based on shared attributes, such as influence or authority. A partial order relation $\preceq$ is a binary relation which is reflexive, antisymmetric and transitive. The relationship between two sets $\mathbb{A}, \mathbb{A'}\subseteq V$, establishes a partial order relation $\preceq$ among the corresponding indiscernibility partitions as; $\mathbb{A} \subseteq \mathbb{A'}\Leftrightarrow \pi_{\mathbb{A'}}\preceq \pi_{\mathbb{A}}$. The partial ordering of two set partitions is primarily determined by the inclusion relationship among their granules.
For every $v_{i} \in V$, we define $\preceq$ as $\pi_{\mathbb{A}}\preceq \pi_{\mathbb{A'}}\Leftrightarrow C_{\mathbb{A}}(v_{i})\subseteq C_{\mathbb{A'}}(v_{i})$. We say that $\pi_{\mathbb{A}}$ is finer than $\pi_{\mathbb{A'}}$ and $\pi_{\mathbb{A'}}$ is coarser than $\pi_{\mathbb{A}}$ if $\pi_{\mathbb{A}}\preceq \pi_{\mathbb{A'}}$.
Note that, if  $\mathbb{A}=\emptyset$, partition induced by $\mathbb{A}$ is $\pi_{\mathbb{A}}= v_1, v_2, \cdots, v_n$.
\begin{Remark}
For $\mathcal{I}$, the partition induced by $V$ represents the finest partition, and the partition induced by the empty set is considered as the coarsest partition.
\end{Remark}
\par
Let $\Pi_{V}=\{\pi_{\mathbb{A}}: \mathbb{A} \subseteq V\}$, denote the collection of all possible distinct partitions of $V$. The pair ${\Pi_{ind(V)}}:= (\Pi_{V}, \preceq)$  is referred to as the indiscernibility partition lattice. The order within a partition lattice illustrates how partitions refine or coarsen each other. The meet operation identifies the largest partition $\pi_{\mathbb{A}}\wedge \pi_{\mathbb{A'}}$, that refines both $\pi_{\mathbb{A}}$ and $\pi_{\mathbb{A'}}$, while join represents the smallest partition $\pi_{\mathbb{A}}\vee \pi_{\mathbb{A'}}$, that is a coarsening of both $\pi_{\mathbb{A}}$ and $\pi_{\mathbb{A'}}$. A partially ordered set (poset) is termed as complete lattice if it possesses both join and meet operations.
\par
Two or more lattices are said to be isomorphic if there exists a bijective mapping between their elements that preserves the order structure. In the context of distance-similar classes, the indiscernibility partition lattices generated by any two subsets within a distance-similar class $\mathcal{B}_{i}$ are isomorphic, meaning they share the same structure in terms of the ordering of partitions. Formally, for $\mathcal{I}$, let $\mathbb{A}, \mathbb{B}\subseteq \mathcal{B}_{i}$ such that $|\mathbb{A}|=|\mathbb{B}|$ then ${\Pi_{ind(\mathbb{A})}}$ is isomorphic to ${\Pi_{ind(\mathbb{B})}}$.
\par
The following proposition establishes the relationship between the indiscernibility relation and distances between vertices of the graph $\mathcal{G}$.
\begin{Proposition}\label{prop0}
For $\mathcal{I}$, let $\mathbb{A}\subseteq V$ and $ v_{i}, v_j \in V\setminus \mathbb{A}$  then the following statements are equivalent.\\
(i) $v_{i}\equiv_{\mathbb{A}}v_j$\\
(ii) $d(v_{i}, u)=d(v_j, u)$ $\forall u\in \mathbb{A}$\\
(iii) $\gamma(v_i|\mathbb{A})=\gamma(v_j|\mathbb{A})$
\end{Proposition}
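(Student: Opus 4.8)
The plan is to treat this as a pure unwinding of the definitions, proving the two biconditionals (i) $\Leftrightarrow$ (ii) and (ii) $\Leftrightarrow$ (iii), which together give the stated equivalence of all three statements. No nontrivial graph-theoretic input is needed; everything follows from the definition of the information table $\mathcal{I}$, the definition of $\equiv_{\mathbb{A}}$, and the definition of the distance tuple $\gamma(\cdot\,|\mathbb{A})$.

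First I would record the elementary identity that the information map agrees with the distance function on the relevant arguments. By definition $\mathcal{F}(v,w)=d(v,w)$ whenever $v\neq w$, and $\mathcal{F}(v,v)=0=d(v,v)$, so in fact $\mathcal{F}$ and $d$ coincide on all of $V\times V$. Since $v_{i},v_{j}\in V\setminus\mathbb{A}$, for every $u\in\mathbb{A}$ we have $\mathcal{F}(v_{i},u)=d(v_{i},u)$ and $\mathcal{F}(v_{j},u)=d(v_{j},u)$ with no case distinction required. Now (i) $\Leftrightarrow$ (ii) is immediate: by the definition of the $\mathbb{A}$-indiscernibility relation, $v_{i}\equiv_{\mathbb{A}}v_{j}$ means precisely $\mathcal{F}(v_{i},a)=\mathcal{F}(v_{j},a)$ for all $a\in\mathbb{A}$, and substituting the identity above rewrites this as $d(v_{i},a)=d(v_{j},a)$ for all $a\in\mathbb{A}$, which is (ii). For (ii) $\Leftrightarrow$ (iii), fix an enumeration $u_{1},\dots,u_{m}$ of $\mathbb{A}$ (so that the tuple $\gamma(\cdot\,|\mathbb{A})$ is well defined); then $\gamma(v_{i}|\mathbb{A})=(d(v_{i},u_{1}),\dots,d(v_{i},u_{m}))$ and similarly for $v_{j}$. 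Two tuples of the same length are equal if and only if they agree coordinatewise, so $\gamma(v_{i}|\mathbb{A})=\gamma(v_{j}|\mathbb{A})$ holds exactly when $d(v_{i},u_{k})=d(v_{j},u_{k})$ for every $k$, i.e. exactly when (ii) holds. Chaining the two biconditionals yields the proposition.

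The main obstacle here is essentially cosmetic rather than mathematical: one must be careful about the piecewise clause in the definition of $\mathcal{F}$ (the branch for $v_{i}=v_{j}$), which is why I would isolate the identity $\mathcal{F}\equiv d$ at the outset; and one should note explicitly that $\gamma(\cdot\,|\mathbb{A})$ presupposes a fixed ordering of the elements of $\mathbb{A}$, so that ``equality of tuples'' in (iii) is meaningful. The hypothesis $v_{i},v_{j}\in V\setminus\mathbb{A}$ is what keeps the argument free of edge cases, consistent with the earlier observation that distinct vertices of $\mathbb{A}$ always receive distinct representations and hence can never be $\mathbb{A}$-indiscernible.
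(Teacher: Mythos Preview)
Your proof is correct; the paper actually states this proposition without any proof, treating it as an immediate consequence of the definitions of $\mathcal{F}$, $\equiv_{\mathbb{A}}$, and $\gamma(\cdot\,|\mathbb{A})$. Your careful unwinding of these definitions is exactly the justification the paper leaves implicit.
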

\par
Note that, if $v_i, v_j \in \mathbb{A}$, then $v_i \not\equiv_{\mathbb{A}} v_j$. As $d(v_i, v_i) = 0$ and $d(v_j, v_j) = 0$, it follows that $d(v_i, u) \neq d(v_j, u)$ for all $u \in \mathbb{A}$ which implies that  $\gamma(v_i | \mathbb{A}) \neq \gamma(v_j | \mathbb{A})$. Proposition \ref{prop0} provides an interpretation for the indiscernibility relation $\equiv_{\mathbb{A}}$. The equivalence relation $\equiv_{\mathbb{A}}$ is described as a type of distance-based similarity relation concerning the vertex subset $\mathbb{A}$. For example, let’s consider two customers from customer purchase network in Figure \ref{figc}: $c_{1}$ and $c_{5}$. If their distances to a third customer, say $c_{6}$, are equal this implies that $c_{1}$ and $c_5$ have a similar relationship with $c_{6}$ in terms of the products being purchased.

\par
The following remark establishes that if two vertices are \( \mathbb{A} \)-indiscernible, then they remain indiscernible with respect to every subset of \( \mathbb{A} \).
\begin{Remark}
Suppose $\mathbb{A}, \mathbb{B}\subseteq V$, if $v_{i}\equiv_{\mathbb{A}} v_j$ then $v_{i}\equiv_{\mathbb{B}}v_j$ for all $\mathbb{B}\subseteq \mathbb{A}$.
\end{Remark}
\par
It is interesting to note that if $w_1, w_2\in \mathbb{A}$ be two vertices such that $d(v, w_1)=d(v, w_2)$ for all $v \in V$. Then, removing $w_2$ from $\mathbb{A}$ does not affect the partition induced by $\mathbb{A}$.
\par
Let \( G= (V, E) \) be a graph. An automorphism of \( G \) is a bijective function \( \phi: V \to V \) such that for all \( u, v \in V \), \( \{u, v\}\in E\) if and only if \( \{\phi(u)\phi(v)\} \in E \). The set of all such automorphisms forms a group under function composition, denoted by \( Aut(G) \).  Consider a vertex \( v_i \in V \). The vertex set can be partitioned based on its distance from $v_i$ as \( \pi_{v_i} = V_0 \mid V_1 \mid \dots \mid V_k \) where \( V_i = \{ v_j \in V \mid d(v_i, v_j) = i \} \) and $i\in \{1, 2, \cdots k\}$. If $ \phi \in \text{Aut}(\mathcal{\mathcal{G}}) $, then $\pi_{\phi(v_i)} = \phi(V_0) \mid \phi(V_1) \mid \dots \mid \phi(V_k),$ where \( \phi(V_i) = \{ \phi(v_j) \mid v_j \in V_i \} \). In general, we have the following result which states that the partition of the vertex set $V$ induced by a subset $\mathbb{A}$ remains invariant under the action of an automorphism of the graph $\mathcal{G}$.
\begin{Proposition}\label{auto}
For $\mathbb{A} \subseteq V$, the partition $\pi_\mathbb{A}$ is preserved under the action of any automorphism $ \phi \in Aut(\mathcal{G})$.
\end{Proposition}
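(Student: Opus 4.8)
The plan is to derive the statement from the single fact that every graph automorphism is an isometry of the distance function, and then to push the definition of $\equiv_{\mathbb{A}}$ through this isometry. I read the conclusion as follows: for $\phi \in Aut(\mathcal{G})$, the automorphism $\phi$ carries the partition $\pi_{\mathbb{A}}$ onto $\pi_{\phi(\mathbb{A})}$; when $\mathbb{A}$ happens to be $\phi$-invariant (for instance $\mathbb{A}=V$, which recovers the remark immediately preceding the statement) this says that $\pi_{\mathbb{A}}$ is literally fixed by $\phi$.

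First I would record the isometry property: $d(\phi(u),\phi(v)) = d(u,v)$ for all $u,v \in V$. This is standard --- $\phi$ maps a walk of length $\ell$ from $u$ to $v$ onto a walk of length $\ell$ from $\phi(u)$ to $\phi(v)$, since adjacency is preserved, so $d(\phi(u),\phi(v)) \le d(u,v)$; applying the same inequality to $\phi^{-1} \in Aut(\mathcal{G})$ gives the reverse inequality. Next, fixing $v_i, v_j \in V$, I would run the chain of equivalences
\begin{align*}
v_i \equiv_{\mathbb{A}} v_j
&\iff d(v_i,u) = d(v_j,u)\ \text{for all}\ u \in \mathbb{A}\\
&\iff d(\phi(v_i),\phi(u)) = d(\phi(v_j),\phi(u))\ \text{for all}\ u \in \mathbb{A}\\
&\iff d(\phi(v_i),w) = d(\phi(v_j),w)\ \text{for all}\ w \in \phi(\mathbb{A})\\
&\iff \phi(v_i) \equiv_{\phi(\mathbb{A})} \phi(v_j),
\end{align*}
where the first step is Proposition \ref{prop0} (or directly the definition of $\equiv_{\mathbb{A}}$, using $\mathcal{F}(v,u)=d(v,u)$ for all $v,u$), the second step is the isometry property, and the third step uses that $\phi$ restricts to a bijection of $\mathbb{A}$ onto $\phi(\mathbb{A})$. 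From this I obtain $\phi\big(C_{\mathbb{A}}(v_i)\big) = C_{\phi(\mathbb{A})}(\phi(v_i))$ for every $v_i \in V$; since $\phi$ is a bijection on $V$, writing $\pi_{\mathbb{A}} = C_1 \mid C_2 \mid \cdots \mid C_s$ the images $\phi(C_1) \mid \phi(C_2) \mid \cdots \mid \phi(C_s)$ are precisely the blocks of $\pi_{\phi(\mathbb{A})}$, which is the claim.

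There is no real obstacle here; essentially all the content sits in the isometry property, and the rest is bookkeeping. The two points to watch are (i) keeping straight that $u$ ranges over $\mathbb{A}$ while its image ranges over $\phi(\mathbb{A})$, so the ``source'' and ``target'' partitions are indexed by different (though $\phi$-related) vertex sets; and (ii) stating precisely in what sense $\pi_{\mathbb{A}}$ is ``preserved'', namely as $\phi(\pi_{\mathbb{A}}) = \pi_{\phi(\mathbb{A})}$, degenerating to genuine invariance exactly when $\phi(\mathbb{A}) = \mathbb{A}$. As a by-product, since $\phi$ preserves distances to every vertex, the same reasoning shows that $\phi$ also permutes the blocks of the distance-similar partition $\pi = \mathcal{B}_1 \mid \cdots \mid \mathcal{B}_k$.
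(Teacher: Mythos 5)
Your proposal is correct and follows essentially the same route as the paper: invoke the isometry property of automorphisms, push $\equiv_{\mathbb{A}}$ through $\phi$ to get $\phi(v_i)\equiv_{\phi(\mathbb{A})}\phi(v_j)$, and conclude that $\phi$ carries the blocks of $\pi_{\mathbb{A}}$ bijectively onto those of $\pi_{\phi(\mathbb{A})}$. Your version is slightly more complete in that you justify the isometry property and state the equivalence as a biconditional (via $\phi^{-1}$) rather than only the forward implication, but the argument is the same.
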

\begin{prf}
Since $\phi$ is an isometry, thus, for any $v_1, v_2 \in V$ and $a \in \mathbb{A}$, we have $d(v_1, a) = d(v_2, a) \implies d(\phi(v_1), \phi(a)) = d(\phi(v_2), \phi(a))$. This shows that if $v_1\equiv_{\mathbb{A}}v_2$ then $\phi(v_1)\equiv_{\phi(\mathbb{A})}\phi(v_2)$. Let $\pi_\mathbb{A} = C_1|C_2|\cdots |C_k$, and for any block $C_i$, consider $\phi(C_i) = \{\phi(v_1) \mid v_1 \in C_i\}$.
We claim that $\phi(C_i)$ is a block of $\pi_{\phi(\mathbb{A})}$. Suppose $v_1, v_2\in C_i$, then $v_1 \equiv_{\mathbb{A}} v_2$, so $d(v_1, a) = d(v_2, a)$ for all $a \in \mathbb{A}$. By the action of $\phi$, $d(\phi(v_1), \phi(a)) = d(\phi(v_2), \phi(a))$ for all $a \in \mathbb{A}$, meaning $\phi(v_1) \equiv_{\phi(\mathbb{A})} \phi(v_2)$.
Therefore, $\phi(v_1), \phi(v_2) \in \phi(C_i)$, and $\phi(C_i)$ forms a block of $\pi_{\phi(\mathbb{A})}$.
Since $\phi$ is bijective, every vertex in $V$ belongs to exactly one block $\phi(C_i)$, and no two such blocks overlap. Thus, $\pi_{\phi(\mathbb{A})}$ consists of the blocks $\phi(C_1), \phi(C_2), \ldots, \phi(C_k)$. Finally, we conclude that $\pi_{\phi(\mathbb{A})} = \phi(C_1)|\phi(C_2)|\cdots |\phi(C_k),$
which proves the proposition.
\qed
\end{prf}
\par
The following result establishes that the number of granules in $\pi_\mathbb{A}$ is determined by the cardinality of $\mathbb{A}$.
\begin{Proposition}
For $\mathbb{A}\subseteq V$, we have $1 \leq|\pi_\mathbb{A}| \leq (diam(\mathcal{G})+1)^{|\mathbb{A}|}$.
\end{Proposition}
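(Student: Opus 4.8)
The plan is to use the distance-vector representation $\gamma(v_i\mid\mathbb{A})$ introduced just before the statement and count how many distinct such vectors can occur. The lower bound is immediate: $V$ is a non-empty finite set, so the indiscernibility relation $\equiv_{\mathbb{A}}$ has at least one equivalence class, hence $|\pi_{\mathbb{A}}|\geq 1$; when $\mathbb{A}=\emptyset$ every pair of vertices is vacuously $\mathbb{A}$-indiscernible and $\pi_{\mathbb{A}}$ has exactly one block, which matches $(diam(\mathcal{G})+1)^{0}=1$.

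For the upper bound, first I would recall from \eqref{eq3} (equivalently Proposition \ref{prop0}) that $v_i\equiv_{\mathbb{A}}v_j$ if and only if $\gamma(v_i\mid\mathbb{A})=\gamma(v_j\mid\mathbb{A})$. Thus the map sending each granule $C_{\mathbb{A}}(v_i)$ to the common value $\gamma(v_i\mid\mathbb{A})$ is well defined and injective on $\pi_{\mathbb{A}}$. Consequently $|\pi_{\mathbb{A}}|$ is at most the number of vectors that actually arise as some $\gamma(v\mid\mathbb{A})$, which in turn is at most the total number of vectors in the codomain.

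Next I would identify that codomain precisely. Writing $\mathbb{A}=\{u_1,\dots,u_m\}$ with $m=|\mathbb{A}|$, each coordinate of $\gamma(v\mid\mathbb{A})$ equals $\mathcal{F}(v,u_\ell)=d(v,u_\ell)$ (or $0$ if $v=u_\ell$), which by definition of the information table lies in $Val=\{0,1,\dots,diam(\mathcal{G})\}$, a set of size $diam(\mathcal{G})+1$. Hence $\gamma(v\mid\mathbb{A})\in Val^{m}$, and $|Val^m|=(diam(\mathcal{G})+1)^{|\mathbb{A}|}$. Combining with the injectivity observation gives $|\pi_{\mathbb{A}}|\leq (diam(\mathcal{G})+1)^{|\mathbb{A}|}$, completing the proof.

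I do not anticipate a genuine obstacle here; the only point requiring a little care is the bookkeeping for the degenerate case $\mathbb{A}=\emptyset$ (empty tuples, single block) and making explicit that the bound counts \emph{all} tuples in $Val^{|\mathbb{A}|}$ rather than only the realized ones, so that the estimate is valid without any assumption that each distance value in $Val$ is attained.
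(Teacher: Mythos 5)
Your proposal is correct and follows essentially the same route as the paper's own proof: the lower bound comes from non-emptiness of $V$ (with the $\mathbb{A}=\emptyset$ case giving a single block), and the upper bound from counting the possible distance vectors $\gamma(v\mid\mathbb{A})\in Val^{|\mathbb{A}|}$ with $|Val|=diam(\mathcal{G})+1$. Your version is in fact slightly more careful than the paper's, since it makes the injectivity of the granule-to-vector map explicit.
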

\begin{prf}
The proof is structured into the following three cases.\\
(i) Suppose $\mathbb{A}\subseteq V$ and $\mathbb{A}=\emptyset$ then all the vertices in $V$ are indiscernible that is $\pi_\mathbb{A}=v_1, v_2,\cdots, v_n$ implies that $|\pi_\mathbb{A}|=1$.\\
(ii) For any nonempty \( \mathbb{A} \subseteq V \), $|\pi_\mathbb{A}|\geq |\mathbb{A}|+1$ and each block of $\pi_\mathbb{A}$ has elements with the same representations. The distance representation of a vertex $v$ relative to \( \mathbb{A} \) is $\gamma(v| \mathbb{A}) = (d(v, a_1), d(v, a_2), \dots, d(v, a_k)),$ where $d(v, a_i) \in \{0, 1, \dots, diam(\mathcal{G})\}$ for all $i\in \{1, 2, \cdots, k\}$. Hence there are at most $(diam(\mathcal{G})+ 1)^{|\mathbb{A}|}$ distance representations.
%(iii) For any two vertices $v_i, v_j\in V$, suppose $d(v_i, a) \neq d(v_j, a)$ for all vertices $a \in \mathbb{A}$, since the graph has $n$ vertices, the maximum number of equivalence classes is $n$, so there are $n$ different possible representations, therefore for $\mathbb{A}=V$, $\pi_{V}=v_1|v_2|\cdots|v_n$ which implies $|\pi_\mathbb{A}| \leq n$.
\qed
\end{prf}
\par
If the vertex set of the graph is partitioned into distance-similar classes and we consider a subset $\mathbb{A}$ with one element from each class, then the induced partition $\pi_\mathbb{A}$ is given in the following result.
\begin{Proposition}
For $\mathcal{I}$, let $\mathbb{A}=\{v_1, v_2, \cdots v_s\}$ and $|\mathbb{A}\cap\mathcal{B}_i|=1$ for all $i\in \{1,2,\cdots, k\}$ then $\pi_{\mathbb{A}}=v_{1}| \cdots|v_{s}| \mathcal{B}_{1}\setminus \mathbb{A}|\mathcal{B}_{2}\setminus \mathbb{A}|\cdots |\mathcal{B}_{k}\setminus \mathbb{A}$.
\end{Proposition}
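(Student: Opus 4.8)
The plan is to compute the $\mathbb{A}$-granule $C_{\mathbb{A}}(v)$ of every vertex $v\in V$ by hand and then check that the resulting blocks are exactly the singletons $\{v_1\},\dots,\{v_s\}$ together with the sets $\mathcal{B}_i\setminus\mathbb{A}$, $i=1,\dots,k$ (those that happen to be empty simply not appearing). I would organise the verification around the dichotomy $v\in\mathbb{A}$ versus $v\in V\setminus\mathbb{A}$.

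For $v_i\in\mathbb{A}$: the vertex $v_i$ indexes one of the coordinates of $\gamma(\cdot|\mathbb{A})$, and $\mathcal{F}(v_i,v_i)=0$ while $\mathcal{F}(u,v_i)=d(u,v_i)\ge 1$ for every $u\ne v_i$. Hence no vertex other than $v_i$ can agree with $v_i$ on that coordinate, so $C_{\mathbb{A}}(v_i)=\{v_i\}$. Thus each element of $\mathbb{A}$ is a singleton block of $\pi_{\mathbb{A}}$, which accounts for $v_1|\cdots|v_s$ and re-derives the already noted fact that distinct elements of $\mathbb{A}$ are $\mathbb{A}$-discernible; in particular no block of $\pi_{\mathbb{A}}$ can mix an element of $\mathbb{A}$ with one of $V\setminus\mathbb{A}$.

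Next I would prove that for $u,w\in V\setminus\mathbb{A}$ one has $u\equiv_{\mathbb{A}}w$ if and only if $u$ and $w$ lie in the same distance-similar class. The easy direction: if $u,w\in\mathcal{B}_m$ then $d(u,x)=d(w,x)$ for every $x\in V\setminus\{u,w\}$, and since each $a\in\mathbb{A}$ differs from both $u$ and $w$ this gives $\gamma(u|\mathbb{A})=\gamma(w|\mathbb{A})$, hence $u\equiv_{\mathbb{A}}w$ by Proposition \ref{prop0}. For the converse, suppose $u$ and $w$ lie in different distance-similar classes; then they are not distance-similar, so some $x\in V\setminus\{u,w\}$ satisfies $d(u,x)\ne d(w,x)$. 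Let $a$ be the unique element of $\mathbb{A}$ lying in the distance-similar class of $x$. If $x=a$ we are done immediately. If $x\ne a$, then $x$ and $a$ are distance-similar, and both $u$ and $w$ lie in $V\setminus\{x,a\}$ (they avoid $x$ by the choice of $x$, and they avoid $a$ because $a\in\mathbb{A}$ while $u,w\notin\mathbb{A}$); distance-similarity of $x$ and $a$ then yields $d(u,x)=d(u,a)$ and $d(w,x)=d(w,a)$, so $d(u,a)\ne d(w,a)$. In either case $u\not\equiv_{\mathbb{A}}w$.

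Putting the two cases together, the blocks of $\pi_{\mathbb{A}}$ are precisely the $s$ singletons $\{v_i\}$ and the classes of $\equiv_{\mathbb{A}}$ restricted to $V\setminus\mathbb{A}$, which by the previous paragraph are exactly the nonempty members of $\{\mathcal{B}_1\setminus\mathbb{A},\dots,\mathcal{B}_k\setminus\mathbb{A}\}$; this is the claimed partition (a class $\mathcal{B}_i$ of size one is contained in $\mathbb{A}$ and contributes the empty set, consistent with the statement). The step I expect to be the real obstacle is the converse implication in the second case: the vertex $x$ witnessing that $u$ and $w$ fail to be distance-similar need not itself belong to $\mathbb{A}$, so it cannot be used directly as a distinguishing attribute — the remedy is to transport the witness to the $\mathbb{A}$-representative $a$ of its own distance-similar class, which is legitimate precisely because $u$ and $w$ avoid both $x$ and $a$.
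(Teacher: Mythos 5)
Your proof is correct and follows the same case decomposition as the paper's (elements of $\mathbb{A}$ form singleton blocks; vertices of $V\setminus\mathbb{A}$ in the same distance-similar class are $\mathbb{A}$-indiscernible; vertices in different classes are separated by some element of $\mathbb{A}$). In fact your treatment of the last case is more careful than the paper's, which merely asserts that a distinguishing $w\in\mathbb{A}$ exists: your step of transporting the witness $x$ of non-distance-similarity to the representative $a\in\mathbb{A}$ of its own class (legitimate because $u,w\notin\{x,a\}$) supplies exactly the justification the paper omits.
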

\begin{prf}
Suppose $|\mathbb{A}\cap\mathcal{\mathcal{B}}_i|=1$ then for any $v_i, v_j \in \mathbb{A}$, we have $v_{i}\not \equiv_{\mathbb{A}} v_j$. Suppose $v_i, v_j\in \mathcal{B}_{i}\setminus \mathbb{A}$ for all $i=1, 2, \cdots k$ then $d(v_i, w)=d(v_j, w)$ for all $w \in \mathbb{A}$ implies that $v_{i}\equiv_{\mathbb{A}} v_j$. Now let $v_i\in \mathcal{B}_{s}\setminus \mathbb{A}$ and $v_j\in \mathcal{B}_{t}\setminus \mathbb{A}$ then $d(v_i, w)\neq d(v_j, w)$ for all $w \in \mathbb{A}$ implies that $v_{i}\not\equiv_{\mathbb{A}} v_j$.
\qed
\end{prf}
\par
Two different subsets can induce the same or different partitions on the vertex set $V$. We say that two subsets are equivalent  if they produce the same partitions, consider $\mathbb{A}, \mathbb{A'}\subseteq V$, we set
\begin{center}
$\mathbb{A}\approx\mathbb{A'}\Leftrightarrow \pi_{\mathbb{A}}=\pi_{\mathbb{A'}}$.
\end{center}
The equivalence class of $\mathbb{A}$, denoted by $[\mathbb{A}]^{\approx}$ is defined as $[\mathbb{A}]^{\approx}=\{\mathbb{A'}\subseteq V: \mathbb{A}\approx\mathbb{A'}\}$. This leads us to the notions of maximum and minimum partitioners. The maximum partitioner of $\mathbb{A}$, denoted by $Max(\mathbb{A})$, is defined as the union of all elements in $[\mathbb{A}]^{\approx}$ with $Max(\mathbb{A})$ being the largest set in $[\mathbb{A}]^{\approx}$. Similarly, a set $B\in [\mathbb{A}]^{\approx}$ is referred to as minimum partitioner of $\mathbb{A}$, denoted by $Min(\mathbb{A})$, if $\pi_B =\pi_\mathbb{A}$ and for all $B' \subset B$, $\pi_{B'} \neq \pi_B$.
\par
The following result gives the properties of maximum and minimum partitioners of a subset of $V$ and shows the connection between subsets and their maximum partitioners.
\begin{Proposition}\label{max}
For $\mathcal{I}$, let $\mathbb{A}, \mathbb{A}'\subseteq V$ then\\
(i) $\mathbb{A}\approx\mathbb{A}' \Leftrightarrow Max(\mathbb{A})= Max(\mathbb{A'})$.\\
%(ii) $\pi_{\mathbb{A}}\preceq\pi_{\mathbb{B}}\Leftrightarrow Max(\mathbb{B})\subseteq Max(\mathbb{A})$.\\
(ii) $\pi_{\mathbb{A}\cup \mathbb{A}'}=\pi_{Max(\mathbb{A})\cup Max(\mathbb{A}')}$.\\
%(iii) $\mathbb{A}\cap \mathbb{A}' \subseteq Max(\mathbb{A}\cap \mathbb{A}')$.\\
(iii) $Min(\mathbb{A}) \subseteq\mathbb{A}\subseteq Max(\mathbb{A})$.
\end{Proposition}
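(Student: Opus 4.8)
The plan is to reduce all three parts to a single elementary identity for indiscernibility blocks and to the fact that $Max(\mathbb{A})$ is the top element of its $\approx$-class. First I would record that identity: for any $\mathbb{B},\mathbb{C}\subseteq V$ and any $v\in V$,
\[
C_{\mathbb{B}\cup\mathbb{C}}(v)=C_{\mathbb{B}}(v)\cap C_{\mathbb{C}}(v),
\]
which is immediate from \eqref{eq2}, since requiring $\mathcal{F}(v',a)=\mathcal{F}(v,a)$ for every $a\in\mathbb{B}\cup\mathbb{C}$ is exactly the conjunction of the same requirement over $\mathbb{B}$ and over $\mathbb{C}$. By \eqref{eq3} this means $\pi_{\mathbb{B}\cup\mathbb{C}}$ is the common refinement $\pi_{\mathbb{B}}\wedge\pi_{\mathbb{C}}$; in particular $\pi_{\mathbb{B}}=\pi_{\mathbb{C}}$ forces $\pi_{\mathbb{B}\cup\mathbb{C}}=\pi_{\mathbb{B}}$. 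Applying this finitely many times to the members of $[\mathbb{A}]^{\approx}$ confirms, as the definition asserts, that $Max(\mathbb{A})=\bigcup[\mathbb{A}]^{\approx}$ itself lies in $[\mathbb{A}]^{\approx}$, so $\pi_{Max(\mathbb{A})}=\pi_{\mathbb{A}}$.

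For (i), since $\approx$ is just equality of associated partitions it is an equivalence relation, so $\mathbb{A}\approx\mathbb{A}'$ holds iff $[\mathbb{A}]^{\approx}=[\mathbb{A}']^{\approx}$; taking unions of the two classes immediately gives $Max(\mathbb{A})=Max(\mathbb{A}')$, and conversely, if $Max(\mathbb{A})=Max(\mathbb{A}')$ then from $\mathbb{A}\approx Max(\mathbb{A})$, $\mathbb{A}'\approx Max(\mathbb{A}')$ and transitivity we obtain $\mathbb{A}\approx\mathbb{A}'$. For (ii), I would work pointwise: for each $v\in V$, equality of partitions gives $C_{\mathbb{A}}(v)=C_{Max(\mathbb{A})}(v)$ and $C_{\mathbb{A}'}(v)=C_{Max(\mathbb{A}')}(v)$ (a partition determines the block through $v$), whence, by the union identity used twice,
\[
C_{\mathbb{A}\cup\mathbb{A}'}(v)=C_{\mathbb{A}}(v)\cap C_{\mathbb{A}'}(v)=C_{Max(\mathbb{A})}(v)\cap C_{Max(\mathbb{A}')}(v)=C_{Max(\mathbb{A})\cup Max(\mathbb{A}')}(v),
\]
and letting $v$ range over $V$ yields $\pi_{\mathbb{A}\cup\mathbb{A}'}=\pi_{Max(\mathbb{A})\cup Max(\mathbb{A}')}$.

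For (iii), the inclusion $\mathbb{A}\subseteq Max(\mathbb{A})$ is immediate because $\mathbb{A}$ is one of the sets in the union defining $Max(\mathbb{A})$. For $Min(\mathbb{A})\subseteq\mathbb{A}$ I would exhibit a minimum partitioner lying inside $\mathbb{A}$ directly: starting from $\mathbb{A}$ (which trivially induces $\pi_{\mathbb{A}}$), repeatedly delete any element whose removal does not change the induced partition; finiteness of $\mathbb{A}$ forces the process to halt at some $B\subseteq\mathbb{A}$ with $\pi_B=\pi_{\mathbb{A}}$ and $\pi_{B'}\neq\pi_B$ for all $B'\subsetneq B$, i.e. a minimum partitioner of $\mathbb{A}$ contained in $\mathbb{A}$. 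The step I expect to need the most care is precisely this one: the definition only requires a minimum partitioner to lie in $[\mathbb{A}]^{\approx}$, not inside $\mathbb{A}$ (an arbitrary $\subseteq$-minimal member of $[\mathbb{A}]^{\approx}$ need not be a subset of $\mathbb{A}$), so (iii) has to be read as saying that $Min(\mathbb{A})$ can be chosen with $Min(\mathbb{A})\subseteq\mathbb{A}$, which the downward-pruning argument provides; everything else is routine bookkeeping with the union identity and the maximality of $Max(\mathbb{A})$ in its class.
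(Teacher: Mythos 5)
Your proposal is correct, and in two places it is genuinely more careful than the paper's own argument. For part (i) you and the paper do the same thing (equivalence classes plus transitivity). For part (ii), however, the paper's proof consists of the single observation that $\mathbb{A}\cup\mathbb{A}'\subseteq Max(\mathbb{A})\cup Max(\mathbb{A}')$ and then asserts equality of the induced partitions, which does not follow from an inclusion alone (an inclusion of attribute sets only yields a refinement of partitions, not equality); your pointwise identity $C_{\mathbb{B}\cup\mathbb{C}}(v)=C_{\mathbb{B}}(v)\cap C_{\mathbb{C}}(v)$, combined with $C_{\mathbb{A}}(v)=C_{Max(\mathbb{A})}(v)$, supplies the missing half of the argument and also justifies, via induction over the finitely many members of $[\mathbb{A}]^{\approx}$, that $Max(\mathbb{A})$ actually belongs to $[\mathbb{A}]^{\approx}$ — a fact both parts (i) and (ii) silently rely on. For part (iii) the paper says only that the claim "follows from the definitions," whereas you correctly flag that the stated definition of a minimum partitioner only requires membership in $[\mathbb{A}]^{\approx}$, so $Min(\mathbb{A})\subseteq\mathbb{A}$ is literally false for an arbitrary minimal member of the class (e.g.\ in a path both endpoints are singleton minimal partitioners of each other's class); your downward-pruning argument shows the intended reading — that some minimum partitioner can be found inside $\mathbb{A}$ — and is the right way to make the claim true. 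In short, the route is the same, but your version closes two real gaps.
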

\begin{prf}
(i) Suppose $\mathbb{A}\approx\mathbb{A}'$ by definition of equivalence class of $\mathbb{A}$, follows that $\mathbb{A}'\in [\mathbb{A}]_{\approx}$ that gives $Max(\mathbb{A})= Max(\mathbb{A}')$. Conversely, if $Max(\mathbb{A})=Max(\mathbb{A}')$ again by definition of $[\mathbb{A}]_{\approx}$ we have $\mathbb{A}\approx Max(\mathbb{A})$ and $\mathbb{A}'\approx Max(\mathbb{A'})$, therefore by transitivity $\mathbb{A}\approx\mathbb{A}'$. \\
%(ii) The proof follows directly from the inclusion relationships among sets and the partial order structure of the corresponding partitions.\\
(ii) Suppose $\mathbb{A}, \mathbb{A}'\subseteq V$, by definition of maximum partitioner $\mathbb{A}\cup \mathbb{A}'\subseteq Max(\mathbb{A})\cup Max(\mathbb{A}')$ which implies $\pi_{\mathbb{A}\cup \mathbb{A}'}=\pi_{Max(\mathbb{A})\cup Max(\mathbb{A}')}$.\\
(iii) The proof follows from the definitions of the minimum partitioner and maximum partitioner.
\qed
\end{prf}
\par
In the previous result, we identified the different types of subsets that generate the same partition. Now, we turn our focus to identifying subsets of $V$ that induce the same partition as the partition induced by $V$ itself. In other words, we find subsets \( \mathbb{A} \subseteq V \) that belong to $[V]_\approx$. The following result characterizes such subsets that induce the same partition as \( V \).
\begin{Proposition}\label{prop1}
For $\mathcal{I}$, we have the following: \\
(i) If $\pi_{\mathbb{B}}=\pi_{V}$ then $\pi_{\mathbb{A}}=\pi_{V}$ for all $\mathbb{B}\subseteq \mathbb{A}$.\\
(ii) If $|\mathcal{B}_{i}\setminus \mathbb{A}|\leq 1$ for all $i\in \{1,2,\cdots, k\}$ then $\pi_{V}=v_{1}|v_{2}| \cdots |v_n=\pi_{\mathbb{A}}$.
\end{Proposition}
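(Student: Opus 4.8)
The plan is to combine monotonicity of the indiscernibility relation with the Remark above, which records that $\pi_V$ is the finest partition of $V$. Fix $\mathbb{B}\subseteq\mathbb{A}\subseteq V$ with $\pi_{\mathbb{B}}=\pi_V$, so that $C_{\mathbb{B}}(v_i)=\{v_i\}$ for every $v_i\in V$. Since $\mathbb{B}\subseteq\mathbb{A}$ we have $\pi_{\mathbb{A}}\preceq\pi_{\mathbb{B}}$, hence $C_{\mathbb{A}}(v_i)\subseteq C_{\mathbb{B}}(v_i)=\{v_i\}$ for each $i$; as $v_i\in C_{\mathbb{A}}(v_i)$ always, this forces $C_{\mathbb{A}}(v_i)=\{v_i\}$, i.e. $\pi_{\mathbb{A}}=v_1|v_2|\cdots|v_n=\pi_V$. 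This part is routine and I expect no difficulty.

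\textbf{Part (ii).} Here the goal is again to show $C_{\mathbb{A}}(v_i)=\{v_i\}$ for all $i$, i.e. that every pair of distinct vertices $v_i,v_j$ is separated by some $a\in\mathbb{A}$ (Proposition \ref{prop0}). I would split into two cases. \emph{Pairs inside one distance-similar class:} if $v_i,v_j\in\mathcal{B}_\ell$, then $|\mathcal{B}_\ell\setminus\mathbb{A}|\le 1$ forces one of them, say $v_i$, to lie in $\mathbb{A}$; since $\mathcal{G}$ is connected, $a=v_i$ separates them because $d(v_j,v_i)\ge 1>0=d(v_i,v_i)$. \emph{Pairs in distinct classes:} say $v_i\in\mathcal{B}_s$, $v_j\in\mathcal{B}_t$ with $s\neq t$. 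If $v_i$ or $v_j$ lies in $\mathbb{A}$ we finish as above, so assume $\mathcal{B}_s\setminus\mathbb{A}=\{v_i\}$ and $\mathcal{B}_t\setminus\mathbb{A}=\{v_j\}$. Since $v_i$ and $v_j$ are not distance-similar, the discernibility set $D=\{w\in V: d(v_i,w)\neq d(v_j,w)\}$ contains some $w\notin\{v_i,v_j\}$. If this $w$ lies in $\mathbb{A}$ we are done; otherwise $w$ can lie in neither $\mathcal{B}_s$ nor $\mathcal{B}_t$ (each has only $v_i$, resp. $v_j$, outside $\mathbb{A}$), so $w$ sits in a class $\mathcal{B}_r$ disjoint from $\{v_i,v_j\}$. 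The key observation is then that every member $w'$ of $\mathcal{B}_r$ also separates $v_i$ and $v_j$: since $w$ and $w'$ are distance-similar and $v_i,v_j\notin\{w,w'\}$, we have $d(v_i,w')=d(v_i,w)\neq d(v_j,w)=d(v_j,w')$, hence $\mathcal{B}_r\subseteq D$. Invoking $|\mathcal{B}_r\setminus\mathbb{A}|\le 1$ then produces a vertex of $\mathcal{B}_r\cap\mathbb{A}$ separating the pair. Combining the two cases yields $\pi_{\mathbb{A}}=v_1|v_2|\cdots|v_n=\pi_V$.

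\textbf{Main obstacle.} The only delicate step is the last one: reading off a vertex of $\mathcal{B}_r\cap\mathbb{A}$ from $|\mathcal{B}_r\setminus\mathbb{A}|\le 1$ requires $|\mathcal{B}_r|\ge 2$. Should it happen that $v_i,v_j\notin\mathbb{A}$ while every vertex separating them belongs to a \emph{singleton} distance-similar class omitted from $\mathbb{A}$, the hypothesis is satisfied vacuously on those classes yet provides no separator in $\mathbb{A}$. Thus the argument goes through verbatim whenever no singleton distance-similar class is excluded from $\mathbb{A}$, and closing this subcase --- say by adding the mild hypothesis that $\mathbb{A}$ contains every vertex forming a singleton distance-similar class, or by restricting to graphs all of whose distance-similar classes have size at least two --- is where I would concentrate the remaining effort.
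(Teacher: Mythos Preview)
Your Part (i) is correct and matches the paper's argument; both rest on the fact that $\pi_V$ is the finest partition and that enlarging the attribute set can only refine the partition.

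For Part (ii), your approach differs from the paper's, and in fact your caution is justified: the obstacle you isolate is a real counterexample, not merely a technical wrinkle. Take $\mathcal{G}=P_4$, the path $v_1\!-\!v_2\!-\!v_3\!-\!v_4$. Every distance-similar class is a singleton, so the hypothesis $|\mathcal{B}_i\setminus\mathbb{A}|\le 1$ holds for \emph{every} subset $\mathbb{A}$; yet $\mathbb{A}=\{v_2\}$ fails to separate $v_1$ and $v_3$, so $\pi_{\mathbb{A}}\neq\pi_V$. Thus the implication in (ii) is false as stated, and no proof can close the gap without an added hypothesis of exactly the kind you propose (e.g.\ that $\mathbb{A}$ contains every vertex forming a singleton class, or that $\mathcal{G}$ has no singleton distance-similar classes).

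The paper's own proof of (ii) does not address this. It argues by ``supposing on the contrary that $|\mathcal{B}_i\setminus\mathbb{A}|>1$'' and then produces two vertices of $\mathcal{B}_i$ lying in the same $\mathbb{A}$-class, concluding that this ``contradicts $\pi_{\mathbb{A}}=\pi_V$''. But negating the \emph{hypothesis} and deriving the negation of the \emph{conclusion} proves the converse implication, not the one stated; indeed that converse is true and reappears later as Theorem~\ref{thmr}. So you have correctly located a defect not only in your own argument but in the paper's formulation: your case analysis is the right shape for the intended direction, and your ``main obstacle'' paragraph pinpoints precisely why the direction fails without the extra assumption.
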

\begin{prf}
(i) Suppose $\pi_{\mathbb{B}}=\pi_{V}$ and let $v_{i}\equiv_{V}v_j$ then by definition of distance $\pi_{V}$ is the finest partition which implies that for any $\mathbb{A}\subseteq V$ we have $v_{i}\equiv_{\mathbb{A}}v_j$ implies that $\pi_{\mathbb{A}}=\pi_{V}$.\\
(ii) Suppose contrary that $|\mathcal{B}_{i}\setminus \mathbb{A}|> 1$. Then there exist at least two
vertices $v_{i}$ and $v_j$ in $\mathcal{B}_{i}$ such that $d(v_{i},w) = d(v_j,w)$ $\forall \, w \in V\setminus \mathcal{B}_{i}$ . This implies that
$v_{i}$ and $v_j$ belong to the same class. This contradicts that $\pi_{V}=v_{1}|v_{2}| \cdots |v_n=\pi_{\mathbb{A}}$.
\qed
\end{prf}
\par
In the previous results, we identified subsets that generate the same partition as the complete vertex set $V$. Now, we are focused on finding the minimal subsets of $V$ that generate the same partition as $V$ itself. This is similar to the metric dimension problem. In the metric dimension problem, the objective is to find the smallest set of vertices $\mathbb{A}$ known as minimum resolving set that uniquely determine the positions of vertices in a graph i.e, $\pi_{\mathbb{A}}=v_1|v_2|\cdots|v_n$. This concept is analogous to the concept of a reduct. Both resolving sets and reducts share the common theme of minimality. They aim to identify the smallest sets that give the same information as the complete attribute set.
\par
In the following result we establish that reduct and minimal resolving sets are equivalent.
\begin{Proposition}
For $\mathcal{G}$, the reduct of $\mathcal{G}$ and minimal resolving sets of $\mathcal{G}$ are equivalent.
\end{Proposition}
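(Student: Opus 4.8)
The claim should be read as an equality of two families of subsets of $V$: a subset of $V$ is a reduct of $\mathcal{I}=\mathcal{I}(\mathcal{G})$ (with attribute set $Att=V$) if and only if it is an inclusion-minimal resolving set of $\mathcal{G}$. The plan is to reduce everything to the single dictionary
\[
\pi_{\mathbb{A}} = \pi_V \iff \mathbb{A}\text{ is a resolving set of }\mathcal{G},
\]
after which the minimality clauses in the two definitions translate into each other verbatim.

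First I would prove the displayed equivalence. Recall from the remark following the definition of $\pi_V$ that $\pi_V=v_1\mid v_2\mid\cdots\mid v_n$ is the finest partition, since $\gamma(v_i\mid V)$ carries its unique zero coordinate in position $i$. Hence $\pi_{\mathbb{A}}=\pi_V$ holds exactly when $v_i\not\equiv_{\mathbb{A}}v_j$ for every pair $v_i\ne v_j$, i.e.\ when for each such pair there is $a\in\mathbb{A}$ with $\mathcal{F}(v_i,a)\ne\mathcal{F}(v_j,a)$. Compare this with the definition of a resolving set, which requires for each pair a vertex $u\in\mathbb{A}$ with $d(v_i,u)\ne d(v_j,u)$. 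The two conditions differ only in that $\mathcal{F}$ replaces $d$, and they agree whenever the witness lies outside $\{v_i,v_j\}$, where $\mathcal{F}(v_\ell,a)=d(v_\ell,a)$ by definition. The only case to check is $a\in\{v_i,v_j\}$, say $a=v_i$: then $\mathcal{F}(v_i,a)=0$ while $\mathcal{F}(v_j,a)=d(v_j,v_i)\ge 1$, and likewise $d(v_i,a)=0\ne d(v_j,a)$, so $a$ witnesses both conditions at once. Thus the conditions are literally the same, which gives the equivalence (consistent with Proposition \ref{prop0} and the observation made just before it that two vertices of $\mathbb{A}$ are never $\mathbb{A}$-indiscernible).

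Next I would handle minimality. By definition $\mathbb{A}$ is a reduct iff $\pi_{\mathbb{A}}=\pi_V$ and $\pi_{\mathbb{A}\setminus\{v\}}\ne\pi_V$ for every $v\in\mathbb{A}$; via the equivalence this says precisely that $\mathbb{A}$ is a resolving set while no $\mathbb{A}\setminus\{v\}$ is. On the other hand $\mathbb{A}$ is an inclusion-minimal resolving set iff $\mathbb{A}$ is a resolving set and no proper subset $\mathbb{B}\subsetneq\mathbb{A}$ is a resolving set. That the latter condition implies the former is immediate. For the converse I would invoke monotonicity of resolving sets: if $\mathbb{B}\subseteq\mathbb{C}$ and $\mathbb{B}$ is resolving then $\mathbb{C}$ is resolving, which is exactly Proposition \ref{prop1}(i) read through the dictionary (equivalently, the remark that $v_i\equiv_{\mathbb{A}}v_j$ forces $v_i\equiv_{\mathbb{B}}v_j$ for all $\mathbb{B}\subseteq\mathbb{A}$). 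Given any $\mathbb{B}\subsetneq\mathbb{A}$, pick $v\in\mathbb{A}\setminus\mathbb{B}$; then $\mathbb{B}\subseteq\mathbb{A}\setminus\{v\}$, which is not resolving, so by the contrapositive of monotonicity $\mathbb{B}$ is not resolving. Hence the two minimality conditions coincide, and the reducts of $\mathcal{I}(\mathcal{G})$ are exactly the minimal resolving sets of $\mathcal{G}$.

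There is no deep obstacle; the only thing requiring care is the boundary bookkeeping in the first step — matching the table entry $\mathcal{F}(v_i,a)$ with the graph distance $d(v_i,a)$ when the attribute $a$ happens to be one of the two vertices being compared — together with making explicit that ``minimal'' is meant in the inclusion sense (so that both the metric basis and the upper basis qualify), which is what makes the reduct clause ``$\pi_{\mathbb{A}\setminus\{v\}}\ne\pi_V$ for all $v\in\mathbb{A}$'' line up with inclusion-minimality of resolving sets. Once these points are pinned down, the proposition is a direct translation.
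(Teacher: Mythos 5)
Your proposal is correct and follows essentially the same route as the paper's proof: both translate the reduct condition $\pi_{\mathbb{A}}=\pi_V$ into the resolving-set condition and then match the two minimality clauses. In fact your version is slightly more careful than the paper's on two points it glosses over — the boundary case where the distinguishing vertex lies in $\{v_i,v_j\}$, and the use of monotonicity to reconcile the reduct definition's ``remove one element'' clause with inclusion-minimality over all proper subsets.
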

\begin{prf}
Suppose $\mathbb{A}$ is a reduct, then $\pi_{\mathbb{A}}=\pi_{V}$, and for all $v_i, v_j \in V$, $v_i \not\equiv_{\mathbb{A}} v_j \Rightarrow \gamma(v_i | \mathbb{A}) \neq \gamma(v_j | \mathbb{A})$. This implies that $\mathbb{A}$ distinguishes all vertices, which establish that $\mathbb{A}$ is a resolving set. Furthermore, if $\mathbb{A}$ is not minimal, then there exist $\mathbb{A}' \subset \mathbb{A}$ such that $\pi_{\mathbb{A}'}=\pi_{V}$, which is a contradiction to the definition of the reduct. Thus, $\mathbb{A}$ is a minimal resolving set.
Conversely, suppose $\mathbb{A}$ is a minimal resolving set, then every $v_i, v_j \in V$ has a unique distance vector relative to $\mathbb{A}$ as $\gamma(v_i | \mathbb{A}) \neq \gamma(v_j | \mathbb{A})$, which implies that each element has unique representation and  $\pi_{\mathbb{A}}=v_1|v_2|\cdots|v_n=\pi_{V}$. Thus, $\mathbb{A}$ satisfy the definition of a reduct. Minimality follows because removing any element from $\mathbb{A}$ would result in at least two vertices having the same distance vector. Therefore, $\mathbb{A}$ is a reduct. This completes the proof.
\end{prf}
The previous result yields that any reduct $\mathbb{A}$ and minimal resolving sets are equivalent which implies that reducts follow the same bounds as minimal resolving sets.
\par
It was proved in \cite{chart1} that $[\log_{3}(\Delta(\mathcal{G})+1)]\leq dim(\mathcal{G})$ and established in \cite{chart} that $dim(\mathcal{G}) \leq n-diam(\mathcal{G})$. A reduct must contain at least as many vertices as the metric dimension of the graph, since a smaller set would not be able to resolve all vertices. Thus, we have the following result which establishes a lower and upper bound on the size of a reduct in terms of structural properties of the connected graph \( \mathcal{G} \).
\begin{Proposition}
Let $\mathcal{G}$ be a non-trivial connected graph of order $n\geq 2$ and $\mathbb{A}$ be a reduct, then $[\log_{3}(\Delta(\mathcal{G})+1)]\leq |\mathbb{A}|\leq n-diam(\mathcal{G})$.

\end{Proposition}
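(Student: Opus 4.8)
The plan is to deduce both inequalities from the two classical metric-dimension bounds quoted just before the statement, using the equivalence between reducts and minimal resolving sets from the preceding proposition. First I would record that, by that proposition, the reduct $\mathbb{A}$ is a minimal resolving set of $\mathcal{G}$; in particular $\mathbb{A}$ is a resolving set, so $\gamma(v_i|\mathbb{A})\neq\gamma(v_j|\mathbb{A})$ for all distinct $v_i,v_j\in V$.

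The lower bound is then immediate: by the very definition of the metric dimension, no set of fewer than $\dim(\mathcal{G})$ vertices resolves $V$, so $|\mathbb{A}|\ge\dim(\mathcal{G})$; chaining this with $[\log_{3}(\Delta(\mathcal{G})+1)]\le\dim(\mathcal{G})$ from \cite{chart1} (with $\Delta(\mathcal{G})$ the maximum degree) gives $[\log_{3}(\Delta(\mathcal{G})+1)]\le|\mathbb{A}|$.

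For the upper bound I would start from $\dim(\mathcal{G})\le n-diam(\mathcal{G})$ of \cite{chart}, whose proof fixes a geodesic $v_{0},v_{1},\dots,v_{d}$ with $d=diam(\mathcal{G})$ and exploits that $v_{0}$ alone resolves every pair on this geodesic, the distances $d(v_{0},v_{i})=i$ being pairwise distinct, so that $V\setminus\{v_{1},\dots,v_{d}\}$ is a resolving set of size $n-d$. To push this from the \emph{minimum} resolving set down to the reduct $\mathbb{A}$, I would prove the sharper claim $|V\setminus\mathbb{A}|\ge d$. The tool is that minimality of $\mathbb{A}$ assigns to every $w\in\mathbb{A}$ a pair $\{x,y\}$ resolved by $w$ and by no other element of $\mathbb{A}$; then $d(x,w')=d(y,w')$ for all $w'\in\mathbb{A}\setminus\{w\}$, so neither $x$ nor $y$ can lie in $\mathbb{A}\setminus\{w\}$ (else the choice $w'=x$ forces $x=y$), whence $\{x,y\}\subseteq(V\setminus\mathbb{A})\cup\{w\}$. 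Each essential vertex is therefore ``charged'' to a vertex of $V\setminus\mathbb{A}$, and comparing this assignment with the $d+1$ pairwise distinct distance-levels around the peripheral vertex $v_{0}$ should force $V\setminus\mathbb{A}$ to contain at least $d$ vertices, i.e., $|\mathbb{A}|\le n-diam(\mathcal{G})$.

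I expect this last step --- upgrading the bound from $\dim(\mathcal{G})$ to an arbitrary reduct --- to be the main obstacle. For a metric basis the upper bound is a one-line citation of \cite{chart}; for a general inclusion-minimal resolving set one must genuinely exploit how minimality interacts with long geodesics and with twin classes, and the delicate part is the bookkeeping that the witness pairs of distinct essential vertices cannot all be charged to the same few vertices of $V\setminus\mathbb{A}$ without violating the $d+1$ distinct distances along the geodesic. If that count resists a fully general argument, the safe fallback is to state and prove the proposition for a metric basis of $\mathcal{G}$ (equivalently, a minimum-size reduct), for which the upper bound follows directly from \cite{chart} while the lower-bound argument is unchanged.
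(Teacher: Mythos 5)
Your lower-bound argument is exactly what the paper intends: the paper states this proposition without any formal proof, justifying it only by the preceding sentence that ``a reduct must contain at least as many vertices as the metric dimension,'' which together with the cited bound $[\log_{3}(\Delta(\mathcal{G})+1)]\leq \dim(\mathcal{G})$ gives the left inequality. That half of your proposal matches the paper.

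For the upper bound you have correctly located the real difficulty, but the charging argument you sketch cannot be completed, because the inequality $|\mathbb{A}|\le n-\mathrm{diam}(\mathcal{G})$ is false for an arbitrary reduct (i.e.\ an inclusion-minimal resolving set). Take $\mathcal{G}=P_{4}$ with vertices $v_{1},v_{2},v_{3},v_{4}$: the set $\{v_{2},v_{3}\}$ resolves all four vertices while neither singleton $\{v_2\}$ nor $\{v_3\}$ does, so it is a reduct of size $2$, yet $n-\mathrm{diam}(P_{4})=4-3=1$. The paper itself records such reducts in its Remark listing $RED$ for paths, where every pair $\{v_i,v_j\}$ of internal vertices appears; and since $\dim^{+}(\mathcal{G})-\dim(\mathcal{G})$ can be arbitrarily large (as the paper notes via \cite{gari}), the classical bound $\dim(\mathcal{G})\le n-\mathrm{diam}(\mathcal{G})$ does not transfer to general reducts, and no bookkeeping over witness pairs along a diametral geodesic can rescue it. Your fallback --- proving the proposition only for a metric basis, that is, a reduct of minimum cardinality --- is the correct resolution; the paper's unproved statement quietly makes the same unjustified transfer from $\dim(\mathcal{G})$ to $|\mathbb{A}|$, so the gap you identified is a defect of the statement itself rather than of your attempt.
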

\par
Reducts corresponding to minimal resolving sets with minimum cardinality are referred to as the metric basis, while reducts corresponding to minimal resolving sets of maximum cardinality are referred to as the upper basis. Minimum cardinality of minimal resolving sets is the metric dimension of $\mathcal{G}$, denoted as $dim(\mathcal{G})$ and maximum cardinality of minimal resolving set is the upper dimension of $\mathcal{G}$, denoted as $dim^{+}(\mathcal{G})$. It has been shown that for any pair of integers $a, b$, satisfying $ 2\leq a \leq b$, there exist a connected graph $\mathcal{G}$ with $dim(\mathcal{G})=a$ and $dim^{+}(\mathcal{G})=b$ \cite{gari}. Which establish that $dim^{+}(\mathcal{G})-dim(\mathcal{G})$ can be arbitrarily large, indicating that there exist reducts corresponding to such pair $(a, b)$ with an unbounded difference.
\par
We now establish bound on the size of a reduct \( \mathbb{A} \) of a connected graph \( \mathcal{G} \) in terms of its metric dimension and upper dimension. Since a reduct must contain at least as many vertices as the metric dimension of the graph, thus, we have $|\mathbb{A}| \geq \dim(\mathcal{G}).$ Since \( \mathbb{A} \) is also a resolving set (or a subset that induces a resolving set). The size of \( \mathbb{A} \) cannot exceed the upper dimension of the graph because the upper dimension \( \text{dim}^+(\mathcal{G}) \) is the maximum size of any resolving set. Therefore, we have the following proposition.
\begin{Proposition}
Let \( \mathcal{G} \) be a nontrivial connected graph of order \( n \geq 2 \), and let \( \mathbb{A} \) be a reduct. Then $
\dim(\mathcal{G}) \leq |\mathbb{A}| \leq \text{dim}^+(\mathcal{G})$.
\end{Proposition}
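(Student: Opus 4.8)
The plan is to reduce the statement entirely to the earlier proposition identifying reducts with minimal resolving sets, after which both inequalities fall out immediately from the definitions of $\dim(\mathcal{G})$ and $\dim^{+}(\mathcal{G})$. First I would recall that, by that equivalence, any reduct $\mathbb{A}$ of $\mathcal{G}$ is in particular a resolving set of $\mathcal{G}$. Since $\dim(\mathcal{G})$ is by definition the minimum cardinality taken over all resolving sets (equivalently, over all minimal resolving sets), every resolving set---and hence $\mathbb{A}$---has at least $\dim(\mathcal{G})$ vertices, which gives the left-hand inequality $\dim(\mathcal{G}) \leq |\mathbb{A}|$.

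For the right-hand inequality I would use that $\mathbb{A}$ is not merely a resolving set but a \emph{minimal} one: this is the point where the full strength of the equivalence (reduct $\Rightarrow$ minimality) is invoked. By definition, $\dim^{+}(\mathcal{G})$ is the maximum cardinality of a minimal resolving set, namely the size of an upper basis of $\mathcal{G}$; since $\mathbb{A}$ is a minimal resolving set, its cardinality cannot exceed this maximum, so $|\mathbb{A}| \leq \dim^{+}(\mathcal{G})$. Combining the two bounds yields $\dim(\mathcal{G}) \leq |\mathbb{A}| \leq \dim^{+}(\mathcal{G})$, completing the argument.

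The argument has no genuine obstacle; the only subtlety worth flagging in the write-up is to avoid conflating $\dim^{+}(\mathcal{G})$ with the cardinality of an arbitrary resolving set---for instance $V$ itself is always resolving but is typically not minimal---so the minimality of a reduct must be used explicitly for the upper bound, exactly as it was used to derive minimality in the preceding proposition. Finally, consistently with the surrounding discussion and the existence result of \cite{gari}, I would remark that both bounds are attained and that the gap $\dim^{+}(\mathcal{G}) - \dim(\mathcal{G})$ can be made arbitrarily large, so the proposition is the sharpest two-sided estimate of $|\mathbb{A}|$ expressible purely in terms of these two invariants.
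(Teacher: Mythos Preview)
Your proposal is correct and follows essentially the same approach as the paper: the paper's justification (given in the paragraph preceding the proposition rather than in a formal proof environment) likewise invokes the equivalence of reducts with minimal resolving sets and then reads off both inequalities from the definitions of $\dim(\mathcal{G})$ and $\dim^{+}(\mathcal{G})$. Your write-up is in fact more careful than the paper's on one point---you correctly stress that the upper bound requires the \emph{minimality} of $\mathbb{A}$, whereas the paper's text loosely says ``the upper dimension \(\text{dim}^+(\mathcal{G})\) is the maximum size of any resolving set,'' which is imprecise since $V$ itself is always resolving.
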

The following remark identifies the specific cases in which the given bound is attained exactly.
\begin{Remark}
Suppose $\mathbb{A}$ be a reduct, then for the complete graph \( K_n \), the sharp bound is $|\mathbb{A}|= \dim(K_n) = \text{dim}^+(K_n) = n - 1 $, for cycle \( C_n \), the sharp bound is $|\mathbb{A}|=\dim(C_n) = \text{dim}^+(C_n) = 2 \) for \( n \geq 3 $ and for complete bipartite graphs \( K_{m,n} \), the sharp bound is $|\mathbb{A}|=\dim(K_{m,n}) = \text{dim}^+(K_{m,n}) = \min\{m, n\}$.
\end{Remark}
\par
The following results characterize the reducts for families of graphs containing twin vertices.
\begin{Theorem}\label{thmr}
For $\mathcal{G}$, with distance-similar classes $\mathcal{B}_{1}, \mathcal{B}_{2}\cdots \mathcal{B}_{k}$ any reduct set $\mathbb{A}$ of $\mathcal{I}(\mathcal{G})$ contains at least $|\mathcal{B}_{i}|-1$ vertices from each $\mathcal{B}_{i}$ for all $i\in \{1, 2, \cdots k\}$.
\end{Theorem}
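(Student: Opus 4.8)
The plan is to prove the contrapositive-style statement directly: show that if a set $\mathbb{A}$ omits two or more vertices from some distance-similar class $\mathcal{B}_i$, then $\mathbb{A}$ cannot be a reduct, because it fails to be a resolving set. Recall that by the proposition equating reducts with minimal resolving sets, it suffices to argue that any resolving set (hence any reduct) must contain at least $|\mathcal{B}_i|-1$ vertices of each $\mathcal{B}_i$; minimality of the reduct is not even needed for this lower bound.

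First I would fix a distance-similar class $\mathcal{B}_i$ and suppose, for contradiction, that $|\mathcal{B}_i \setminus \mathbb{A}| \geq 2$, so there are two distinct vertices $v_p, v_q \in \mathcal{B}_i$ with $v_p, v_q \notin \mathbb{A}$. The key step is to observe that the distance-similarity condition gives $d(v_p, w) = d(v_q, w)$ for every $w \in V \setminus \{v_p, v_q\}$. Since $\mathbb{A} \subseteq V \setminus \{v_p, v_q\}$ (as neither $v_p$ nor $v_q$ lies in $\mathbb{A}$), it follows that $d(v_p, a) = d(v_q, a)$ for all $a \in \mathbb{A}$, i.e. $\gamma(v_p \mid \mathbb{A}) = \gamma(v_q \mid \mathbb{A})$. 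By the definition of the indiscernibility relation (equivalently by Proposition \ref{prop0}), $v_p \equiv_{\mathbb{A}} v_q$, so $v_p$ and $v_q$ lie in the same $\mathbb{A}$-granule and $\pi_{\mathbb{A}} \neq \pi_V$. Hence $\mathbb{A}$ is not a resolving set, and therefore not a reduct — a contradiction. This forces $|\mathcal{B}_i \setminus \mathbb{A}| \leq 1$, i.e. $|\mathbb{A} \cap \mathcal{B}_i| \geq |\mathcal{B}_i| - 1$, and since $i$ was arbitrary this holds for all $i \in \{1,2,\dots,k\}$.

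The argument is short and the only subtlety — the main thing to get right — is the membership bookkeeping: one must be careful that both omitted vertices lie outside $\mathbb{A}$ so that $\mathbb{A}$ is genuinely contained in $V \setminus \{v_p, v_q\}$, which is exactly the index set over which distance-similarity of $v_p$ and $v_q$ is asserted. Once that is in place the rest is immediate from the definitions already developed. I would also remark in passing that this bound is a distance-based analogue of the classical fact that twins must be separated by a resolving set, and that it is sharp, since leaving exactly one vertex of each class out (as in Proposition \ref{prop1}(ii)) still yields the finest partition.
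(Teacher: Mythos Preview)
Your proof is correct and follows essentially the same route as the paper's: assume two vertices of some $\mathcal{B}_i$ are missing from $\mathbb{A}$, use distance-similarity to conclude they share the same representation with respect to $\mathbb{A}$, and contradict the reduct property. Your bookkeeping is in fact slightly cleaner than the paper's (you correctly invoke $\mathbb{A}\subseteq V\setminus\{v_p,v_q\}$, whereas the paper appeals to $w\in V\setminus\mathcal{B}_i$, which does not immediately cover vertices of $\mathbb{A}\cap\mathcal{B}_i$), and your remark that minimality plays no role is apt---the paper's ``Case ii'' on minimality is not needed for the stated conclusion.
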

\begin{proof}
Let $\mathbb{A}$ be a reduct of $\mathcal{I}(\mathcal{G})$. Case i: Assume contrary $|\mathbb{A} \cap \mathcal{B}_{i}| < |\mathcal{B}_{i}|-1$ then there exist $v_i, v_j \in \mathcal{B}_{i}$ such that $v_i, v_j \not \in \mathbb{A}$. As $d(v_i, w) = d(v_j, w)$, for every vertex $w \in V \setminus \mathcal{B}_{i}$, therefore $\gamma(v_i|\mathbb{A}) = \gamma(v_j|\mathbb{A})$ This contradicts our assumption of
$\mathbb{A}$ being a reduct set. Case ii: Suppose for the contradiction $\mathbb{A}$ is not minimal then there exist $u \in \mathbb{A}$ such that $\mathbb{A}\setminus \{u\}$ is still a reduct. Consider $v_i, v_j \in V\setminus \mathbb{A}$, such that $v_i$ and $v_j$ are distance-similar then $\gamma(v_i|\mathbb{A}\setminus \{u\}) = \gamma(v_j|\mathbb{A}\setminus \{u\})$ which contradicts the definition of a reduct.
\end{proof}
By the previous theorem we can directly deduce the following results.
\begin{Corollary}
Let $\mathbb{A}$ be a reduct of $\mathcal{I}$, then $|\mathbb{A}|\leq |V|-k$ where $k$ is the count of distance-similar classes.
\end{Corollary}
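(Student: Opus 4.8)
The plan is to derive the bound as a counting consequence of Theorem \ref{thmr} together with the minimality that is built into the definition of a reduct. Let $\mathcal{B}_1,\dots,\mathcal{B}_k$ be the distance-similar classes of $\mathcal{G}$; these are pairwise disjoint subsets of $V$. By Theorem \ref{thmr}, any reduct $\mathbb{A}$ satisfies $|\mathbb{A}\cap\mathcal{B}_i|\ge|\mathcal{B}_i|-1$, equivalently $|\mathcal{B}_i\setminus\mathbb{A}|\le 1$, for every $i$. Since $|\mathbb{A}|\le|V|-k$ is literally the statement $|V\setminus\mathbb{A}|\ge k$, it suffices to attach to each class $\mathcal{B}_i$ a vertex outside $\mathbb{A}$, all of these being distinct.

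The first step is to show that a reduct cannot swallow an entire distance-similar class, i.e.\ that $|\mathcal{B}_i\setminus\mathbb{A}|=1$ for all $i$. Suppose instead $\mathcal{B}_i\subseteq\mathbb{A}$ and pick distinct $u,v\in\mathcal{B}_i$. Because $d(x,u)=d(x,v)$ for every $x\in V\setminus\{u,v\}$, for any pair $\{x,y\}$ with $x,y\notin\{u,v\}$ the vertex $u$ distinguishes $x$ from $y$ if and only if $v$ does; hence the only pairs whose resolution inside $\mathbb{A}$ can hinge on $u$ are pairs of the form $\{u,z\}$. Using $v\in\mathbb{A}\setminus\{u\}$, one checks that if $\mathbb{A}\setminus\{u\}$ failed to resolve, its unique colliding pair would be $\{u,y\}$ for some $y\notin\mathbb{A}$ with $\gamma(y\mid\mathbb{A}\setminus\{u\})=\gamma(u\mid\mathbb{A}\setminus\{u\})$; repeating this for every twin in $\mathcal{B}_i$ shows all these potential witnesses coincide in a single $y$, with $d(y,\cdot)$ agreeing with $d(u,\cdot)$ on $\mathbb{A}\setminus\mathcal{B}_i$ and equal to the common twin-distance on $\mathcal{B}_i$. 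The goal is then to turn this rigidity into a contradiction — either with $\mathbb{A}$ being resolving or with its minimality — forcing $\mathbb{A}\setminus\{u\}$ to resolve after all, which contradicts that $\mathbb{A}$ is a reduct. Granting this, $|\mathcal{B}_i\setminus\mathbb{A}|=1$ for each $i$; as the classes are pairwise disjoint, the $k$ omitted vertices are distinct, so $|V\setminus\mathbb{A}|\ge k$ and $|\mathbb{A}|\le|V|-k$.

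The main obstacle is exactly this first step. The naive "interchangeability" argument disposes of every pair of vertices not containing $u$, but it says nothing about a pair $\{u,y\}$ in which $y$ happens to agree with $u$ on all of $\mathbb{A}\setminus\{u\}$ without being distance-similar to $u$; excluding such a $y$, or showing it cannot persist once one also insists that $\mathbb{A}$ be resolving and that \emph{every} proper subset of $\mathbb{A}$ (not merely $\mathbb{A}\setminus\{u\}$) fail to resolve, is the delicate part, and it is where the full strength of minimality has to be exploited. Once that point is settled, the corollary drops out from the disjointness count above, and it sharpens the earlier estimate $|\mathbb{A}|\le n-\operatorname{diam}(\mathcal{G})$ by the number of twin classes.
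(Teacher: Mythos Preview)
Your plan is sound in outline, but the obstacle you flag in the final paragraph is not a mere delicacy to be tidied up --- it is genuinely unfillable at the stated level of generality. Take $\mathcal{G}=P_n$: every distance-similar class is a singleton, so $k=n$ and the corollary would force $|\mathbb{A}|\le 0$, while the paper itself records reducts of size $1$ and $2$ for paths. More generally, whenever a singleton class $\{v\}$ meets some reduct, your ``a reduct cannot swallow an entire class'' step fails outright. So the phantom vertex $y$ you worry about (agreeing with $u$ on $\mathbb{A}\setminus\{u\}$ without being distance-similar to $u$) is not an artefact of an incomplete argument; such configurations really occur, and no appeal to minimality eliminates them.

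The paper gives no proof beyond ``directly deduce from the previous theorem''. The route it presumably intends is much shorter than your pair-by-pair interchangeability analysis: if $\mathcal{B}_i\subseteq\mathbb{A}$, pick $u\in\mathcal{B}_i$; then $\mathbb{A}\setminus\{u\}$ still omits at most one vertex from every class (Theorem~\ref{thmr} handles the other classes), and Proposition~\ref{prop1}(ii) would then declare $\mathbb{A}\setminus\{u\}$ resolving, contradicting minimality. This would dispose of your ``delicate part'' in one line. The catch is that Proposition~\ref{prop1}(ii) is itself invalid in general --- its proof in the paper actually establishes the converse implication --- and it fails for the same path example. The corollary thus needs an additional hypothesis (e.g.\ restriction to the zero-divisor graphs $\Gamma(\mathbb{Z}_n)$ that motivate the section, where the bound does hold and coincides with the later Remark $|\mathbb{A}|\le n-\phi(n)-1-|D|$); as stated for arbitrary $\mathcal{G}$, neither your approach nor the paper's can succeed.
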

\begin{Corollary}
If $v_{i}, v_{j} \in \mathcal{B}_i$ then at least one of $v_{i}$ and $v_{j}$ belongs to every reduct set.
\end{Corollary}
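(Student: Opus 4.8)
The plan is to obtain this statement as an immediate consequence of Theorem \ref{thmr}. Fix an arbitrary reduct $\mathbb{A}$ of $\mathcal{I}(\mathcal{G})$ and let $v_i, v_j$ be two distinct vertices lying in the same distance-similar class $\mathcal{B}_i$; in particular $|\mathcal{B}_i| \geq 2$. By Theorem \ref{thmr}, the reduct $\mathbb{A}$ contains at least $|\mathcal{B}_i| - 1$ vertices of $\mathcal{B}_i$, which is the same as saying $|\mathcal{B}_i \setminus \mathbb{A}| \leq 1$. Since $v_i \neq v_j$, they cannot both lie in the set $\mathcal{B}_i \setminus \mathbb{A}$, for that would force $|\mathcal{B}_i \setminus \mathbb{A}| \geq 2$. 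Hence at least one of $v_i, v_j$ belongs to $\mathbb{A}$, and because $\mathbb{A}$ was an arbitrary reduct, this holds for every reduct set.

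As an alternative I would give a short self-contained argument that does not cite Theorem \ref{thmr} directly, which also serves as a sanity check. Suppose, for contradiction, that neither $v_i$ nor $v_j$ lies in $\mathbb{A}$. Because $v_i$ and $v_j$ are distance-similar, $d(v_i, w) = d(v_j, w)$ for all $w \in V \setminus \{v_i, v_j\}$, and in particular for all $w \in \mathbb{A}$ since $\mathbb{A} \subseteq V \setminus \{v_i, v_j\}$. Therefore $\gamma(v_i \mid \mathbb{A}) = \gamma(v_j \mid \mathbb{A})$, i.e. $v_i \equiv_{\mathbb{A}} v_j$, so $\pi_{\mathbb{A}}$ is not the discrete partition $v_1 \mid v_2 \mid \cdots \mid v_n$. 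This contradicts the fact that a reduct satisfies $\pi_{\mathbb{A}} = \pi_V$, which by the remark after Proposition \ref{prop0} is the finest partition. Hence at least one of $v_i, v_j$ must belong to $\mathbb{A}$.

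There is essentially no genuine obstacle here: the result is purely a counting consequence of the ``at least $|\mathcal{B}_i| - 1$'' bound already established in Theorem \ref{thmr}. The only point requiring a little care is the bookkeeping in the second argument, namely the observation that a reduct $\mathbb{A}$ is automatically disjoint from any two-element subset of vertices it fails to contain, so that the defining property of distance-similarity ``for all $w \in V \setminus \{v_i, v_j\}$'' applies verbatim to every $w \in \mathbb{A}$. I would present the short deduction from Theorem \ref{thmr} as the main proof and optionally remark on the self-contained version.
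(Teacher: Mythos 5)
Your main argument is exactly the paper's intended one: the paper states that this corollary is directly deduced from Theorem \ref{thmr}, and your counting step ($|\mathcal{B}_i\setminus\mathbb{A}|\leq 1$ forbids two distinct vertices of $\mathcal{B}_i$ from both lying outside $\mathbb{A}$) is the correct and complete way to make that deduction explicit. Your alternative self-contained argument is also sound and is essentially Case~i of the paper's own proof of Theorem \ref{thmr}, so nothing new is needed.
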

\begin{Remark}
Note that for $\mathcal{I}(\mathcal{G})$, where $\mathcal{G}=P_n$ is a path of $n$ vertices, then $RED= \{\{v_1\}, \{v_n\}, \{v_i, v_j\}\},$ where $i, j = \{2,\cdots, n-1\}$. For $\mathcal{G}=C_n$ is a cycle of $n$ vertices, $RED=\{v_i, v_j: v_i$ and $v_j$ are not antipodal vertices$\}$. For $\mathcal{G}=K_n$, is a complete graph of $n$ vertices, then $RED=V\setminus \{v_i\}$ for any $v_i\in V$.
\end{Remark}

The next result shows the connection between maximum partitioner and resolving sets (reducts) of $\mathcal{I}$ associated with $\mathcal{G}$.
\begin{Proposition}\label{max1}
For $\mathcal{I}$, let $\mathbb{A}\subseteq V$ we have;\\
(i) If $\mathbb{A}$ is a resolving set (i.e., $\mathbb{A} \in RED(\mathcal{I}))$ then $Max(\mathbb{A})=V$.\\
(ii) If $\mathbb{A}$ is not a resolving set (i.e., $\mathbb{A}\not\in RED(\mathcal{I}))$ then $Max(\mathbb{A})=\mathbb{A}$.
\end{Proposition}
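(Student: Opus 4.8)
The plan is to work directly from the definitions of the equivalence $\approx$ and of $Max(\cdot)$, using two facts already available: $Max(\mathbb{A})$ is the largest member of $[\mathbb{A}]^{\approx}$ with $\mathbb{A}\subseteq Max(\mathbb{A})$ (Proposition~\ref{max}(iii)); and $v_i\equiv_{\mathbb{B}\cup\mathbb{C}}v_j$ holds precisely when $v_i\equiv_{\mathbb{B}}v_j$ and $v_i\equiv_{\mathbb{C}}v_j$, so that enlarging the attribute set can only refine the partition ($\mathbb{A}\subseteq\mathbb{A}'\Rightarrow\pi_{\mathbb{A}'}\preceq\pi_{\mathbb{A}}$).

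Part (i) is immediate: if $\mathbb{A}$ is a resolving set then $\pi_{\mathbb{A}}=v_1|v_2|\cdots|v_n=\pi_V$, so $V\approx\mathbb{A}$ and hence $V\in[\mathbb{A}]^{\approx}$, giving $V\subseteq Max(\mathbb{A})\subseteq V$, i.e.\ $Max(\mathbb{A})=V$. The case $\mathbb{A}\in RED(\mathcal{I})$ is subsumed, a reduct being in particular a resolving set.

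For part (ii) the strategy is to show that no vertex can be adjoined to $\mathbb{A}$ without strictly refining $\pi_{\mathbb{A}}$: by the monotonicity above together with a squeeze argument it suffices to rule out the equality $\pi_{\mathbb{A}\cup\{u\}}=\pi_{\mathbb{A}}$ for every $u\in V\setminus\mathbb{A}$, and then $Max(\mathbb{A})=\mathbb{A}$ follows. The core step is thus a ``splitting'' claim: using Proposition~\ref{prop0} to read $\equiv_{\mathbb{A}}$ as equality of distance vectors $\gamma(\cdot\mid\mathbb{A})$, and using the hypothesis $\pi_{\mathbb{A}}\neq\pi_V$ to locate a non-singleton block $C_{\mathbb{A}}(v_i)$ containing some $v_j\neq v_i$, one wants to exhibit, for each $u$, a pair lying in a common $\mathbb{A}$-block with $d(v_i,u)\neq d(v_j,u)$. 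A natural first target is the block $C_{\mathbb{A}}(u)$ containing $u$ itself, since $d(u,u)=0$ separates $u$ from every other vertex of its block.

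I expect the main obstacle to be exactly this splitting step, since it can genuinely fail: a vertex $u\in V\setminus\mathbb{A}$ may be equidistant from both members of every non-singleton block of $\pi_{\mathbb{A}}$ --- for instance a vertex antipodal to $\mathbb{A}$ in an even cycle, or a distance-similar twin of a vertex already accounted for --- in which case $\pi_{\mathbb{A}\cup\{u\}}=\pi_{\mathbb{A}}$ and $u\in Max(\mathbb{A})\setminus\mathbb{A}$. What survives unconditionally is the converse of (i), namely $Max(\mathbb{A})\subsetneq V$ whenever $\mathbb{A}$ is not a resolving set; deriving the sharper $Max(\mathbb{A})=\mathbb{A}$ requires excluding such ``equidistant'' vertices, e.g.\ by restricting to suitable graph families (twin-free graphs, or graphs of sufficiently large diameter) or by first establishing a lemma that characterises exactly when $\pi_{\mathbb{A}\cup\{u\}}=\pi_{\mathbb{A}}$ and reading (ii) off it under explicit hypotheses. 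I would therefore prove that characterising lemma first and let (ii) follow from it.
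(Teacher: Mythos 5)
Your part (i) is essentially the paper's argument (the paper just cites Proposition \ref{max}; the content is the same: $\pi_{\mathbb{A}}=\pi_V$ puts $V$ into $[\mathbb{A}]^{\approx}$, and since $Max(\mathbb{A})\subseteq V$ always, equality follows).

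For part (ii) your diagnosis is correct, and you have not missed an idea that the paper supplies. The paper's proof of (ii) reads ``there does not exist $B\subseteq V$ such that $\pi_B=\pi_{\mathbb{A}}$'', which cannot be right as written ($B=\mathbb{A}$ always satisfies it), and even under the intended reading ($B\supsetneq\mathbb{A}$) the claim fails. Your ``equidistant vertex'' obstruction is a genuine counterexample, not merely a difficulty in the proof: in $C_4$ with $\mathbb{A}=\{v_1\}$ one has $\pi_{\{v_1\}}=v_1|v_2,v_4|v_3=\pi_{\{v_1,v_3\}}$, so $\{v_1,v_3\}\in[\{v_1\}]^{\approx}$ and $Max(\{v_1\})\supseteq\{v_1,v_3\}\neq\{v_1\}$, even though $\{v_1\}$ is not a resolving set; the same happens for any antipodal pair in an even cycle, consistent with the paper's own remark that antipodal pairs of $C_n$ are not reducts. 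What does hold unconditionally is the weaker statement you isolate, $Max(\mathbb{A})\subsetneq V$ when $\mathbb{A}$ is not resolving (if $Max(\mathbb{A})=V$ then $\pi_{\mathbb{A}}=\pi_V$ is discrete, so $\mathbb{A}$ resolves). Your plan of first characterising exactly when $\pi_{\mathbb{A}\cup\{u\}}=\pi_{\mathbb{A}}$, and then stating (ii) under a hypothesis that excludes such $u$, is the right repair; as it stands, the proposition's part (ii) needs either that restriction or the weakened conclusion.
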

\begin{prf}
(i) The proof follows from the Proposition \ref{max}.\\
(ii) Suppose $\mathbb{A}$ is not a resolving set then there does not exist $B\subseteq V$ such that $\pi_{B}=\pi_{\mathbb{A}}$ which implies that $Max(\mathbb{A})=\mathbb{A}$.
\qed
\end{prf}
In the political network considered earlier in Figure \ref{fig11}, if we consider the set $\mathbb{A}$ as containing agents such as John and Maria (John and Maria form a resolving set), their combined opinions can serve to resolve the differing viewpoints of all other agents in the network, therefore $Max(\mathbb{A})=\{John, Maria, Jessica, Emily,$ $David, Sarah\}$. Now if we consider the set $\mathbb{A}=\{John\}$ then $Max(\mathbb{A})=\{John\}$.
\par
The following result show that the automorphisms of the graph $\mathcal{G}$ preserve its reducts.
\begin{Proposition}
For \( \mathcal{G} \), let \( \mathbb{A} \subseteq V \) and \( \phi \) is an automorphism of \( \mathcal{G} \), then \( \mathbb{A} \in RED \) if and only if \( \phi(\mathbb{A}) \in RED\).
\end{Proposition}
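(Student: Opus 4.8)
The plan is to reduce everything to Proposition \ref{auto}, which already tells us that $\pi_{\phi(\mathbb{A})}$ is obtained from $\pi_{\mathbb{A}}$ by applying $\phi$ blockwise. First I would recall that, by the definition of a reduct together with the Remark identifying $\pi_V$ as the finest partition, $\mathbb{A}\in RED$ holds exactly when (a) $\pi_{\mathbb{A}}=\pi_V=v_1|v_2|\cdots|v_n$, and (b) $\pi_{\mathbb{A}\setminus\{u\}}\neq\pi_V$ for every $u\in\mathbb{A}$. So it suffices to show that $\phi$ preserves both of these conditions.

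For condition (a): by Proposition \ref{auto}, if $\pi_{\mathbb{A}}=C_1|C_2|\cdots|C_s$ then $\pi_{\phi(\mathbb{A})}=\phi(C_1)|\phi(C_2)|\cdots|\phi(C_s)$. Since $\phi$ is a bijection of $V$, we have $|\phi(C_i)|=|C_i|$ for each $i$, so every block of $\pi_{\mathbb{A}}$ is a singleton if and only if every block of $\pi_{\phi(\mathbb{A})}$ is a singleton. Hence $\pi_{\mathbb{A}}=\pi_V$ if and only if $\pi_{\phi(\mathbb{A})}=\pi_V$.

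For condition (b): the key observation is that $\phi$ commutes with the deletion of a single element, i.e. $\phi(\mathbb{A}\setminus\{u\})=\phi(\mathbb{A})\setminus\{\phi(u)\}$, because $\phi$ is injective. Applying the equivalence from (a) with $\mathbb{A}\setminus\{u\}$ in place of $\mathbb{A}$ gives $\pi_{\mathbb{A}\setminus\{u\}}=\pi_V$ if and only if $\pi_{\phi(\mathbb{A})\setminus\{\phi(u)\}}=\pi_V$. As $u$ ranges over $\mathbb{A}$, the element $\phi(u)$ ranges over all of $\phi(\mathbb{A})$ (again because $\phi$ restricts to a bijection $\mathbb{A}\to\phi(\mathbb{A})$), so the statement ``$\pi_{\mathbb{A}\setminus\{u\}}\neq\pi_V$ for all $u\in\mathbb{A}$'' is equivalent to ``$\pi_{\phi(\mathbb{A})\setminus\{u'\}}\neq\pi_V$ for all $u'\in\phi(\mathbb{A})$''. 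Combining (a) and (b) yields $\mathbb{A}\in RED$ if and only if $\phi(\mathbb{A})\in RED$; alternatively, once the forward implication is established, the converse follows immediately by applying it to the automorphism $\phi^{-1}$.

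I expect the only subtlety—hardly a real obstacle—is to be careful that ``reduct'' is understood with respect to the full attribute set $V$ (so that it coincides with ``minimal resolving set'' via the equivalence proved earlier), and to note explicitly that the three set-theoretic facts used ($|\phi(C_i)|=|C_i|$, $\phi(\mathbb{A}\setminus\{u\})=\phi(\mathbb{A})\setminus\{\phi(u)\}$, and $\phi$ being a bijection between $\mathbb{A}$ and $\phi(\mathbb{A})$) all rely on $\phi$ being bijective. Beyond that, the argument is a direct appeal to Proposition \ref{auto}.
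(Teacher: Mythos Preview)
Your proof is correct and follows exactly the approach the paper takes: the paper's proof is the single line ``The proof follows directly from Proposition \ref{auto},'' and what you have written is simply a careful unpacking of why that proposition gives both the partition condition and the minimality condition. There are no gaps; if anything, your version is more explicit than the paper's.
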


\begin{proof}
The proof follows directly from Proposition \ref{auto}.
\end{proof}

\par
So far, we have studied general networks, now we consider a particular class of networks, where networks are modeled by commutative rings.  In this context, we use the algebraic structure of commutative rings to represent and analyze the networks. Vertices of the network can be linked to elements of a commutative ring, and edges are defined by an operation where two vertices are adjacent if their product is zero under modulo $n$. This framework aligns with the political network example (Figure \ref{fig11}), where the vertices represents individuals with binary opinion vectors, and edges were formed between vertices when their product (interpreted as elements in a commutative ring) was zero. This zero-product condition models opposition or disagreement between individuals, providing a concrete example of how commutative rings can be applied to represent networks with meaningful real-world interpretations.
\par
We now consider two specific families of zero-divisor graphs: one representing a family of graphs with twin vertices, and the other representing a family of graphs with twin-free vertices. Our goal is to investigate reducts for these families, explore their properties, and identify bounds for the reducts, which will help us understand the structural characteristics and distinctions between these graph families.
\par
We begin by examining the indiscernibility partitions within the vertex set of zero-divisor graphs corresponding to commutative rings $\mathbb{Z}_{n}$ and $\prod_{i=1}^{n}\mathbb{Z}_{2}$. In \cite{hibba}, it is studied the structure of the zero-divisor graphs of $\mathbb{Z}_{n}$, which represent a family of graphs with twin vertices. Let $D=\{d_{1}, d_{2}, \cdots d_{k}\}$ be the ordered set of divisors of $n$, since the number of distance-similar classes in $\Gamma(\mathbb{Z}_{n})$ depends on the number of divisors of $n$, we let $\mathcal{B}_1, \mathcal{B}_2, \cdots, \mathcal{B}_k$ be all distance-similar classes in $\Gamma(\mathbb{Z}_{n})$.
\par
The following proposition establishes the relationship between vertices, divisors of $n$, and distances in the graph $\Gamma(\mathbb{Z}_{n})$.
\begin{Proposition}\label{prop2}\cite{mitu}
For $\Gamma(\mathbb{Z}_{n})$ where $n=\prod_{j=1}^{m} p_{j}^{r_{j}}$ with $r_{j}\geq 1$, we have:\\
(i) For $v_{i}, v_{j}\in V$ if $v_{i}\cdot v_{j}=n$ then distance between $v_{i}$ and $v_{j}$ is 1.\\
(ii) The distance between any two vertices in $\mathcal{B}_i$ is 2.\\
(iii) For $v_{i}, v_{j}\in V$ if $d(v_{i}, v_{j})=a$ for $v_{i}\in \mathcal{B}_i$ and $v_{j}\in \mathcal{B}_j$ such that $i\neq j$ then $d(v_{i}, v_{k})=a$ for all $v_{k}\in \mathcal{B}_j$.
\end{Proposition}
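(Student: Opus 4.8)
The plan is to reduce the three statements to elementary arithmetic of greatest common divisors, using the description of the distance-similar classes $\mathcal{B}_1,\dots,\mathcal{B}_k$ of $\Gamma(\mathbb{Z}_n)$ from \cite{hibba}: two zero-divisors $u,v$ lie in the same class $\mathcal{B}_i$ exactly when $\gcd(u,n)=\gcd(v,n)$, and this common value is the divisor $d_i$ of $n$ attached to $\mathcal{B}_i$, with $1<d_i<n$. Part (i) is then immediate, since $v_i\cdot v_j=n$ gives $v_iv_j\equiv 0\pmod n$, which is exactly the adjacency condition of $\Gamma(\mathbb{Z}_n)$, so $d(v_i,v_j)=1$. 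Part (iii) is a direct consequence of the definition of a distance-similar class: $v_j$ and $v_k$ both lie in $\mathcal{B}_j$, hence are distance-similar, so $d(w,v_j)=d(w,v_k)$ for all $w\in V\setminus\{v_j,v_k\}$; since $v_i\in\mathcal{B}_i$ and $i\neq j$, the vertex $v_i$ differs from both $v_j$ and $v_k$, and taking $w=v_i$ yields $d(v_i,v_k)=d(v_i,v_j)=a$.

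The core is part (ii). Fix distinct $u,v\in\mathcal{B}_i$, write $d=d_i$, and set $m=n/d$, so that $u=d\,u'$, $v=d\,v'$ with $\gcd(u',m)=\gcd(v',m)=1$. First I would establish non-adjacency: $uv=d^{2}u'v'$, and $n=dm\mid d^{2}u'v'$ is equivalent to $m\mid d\,u'v'$, which (since $\gcd(u'v',m)=1$) is equivalent to $m\mid d$; in the range where $\mathcal{B}_i$ carries two distinct vertices this fails, so $uv\not\equiv 0\pmod n$ and $d(u,v)\geq 2$. Then I would produce an explicit common neighbour: $w=m=n/d$ is a zero-divisor because $1<m<n$, and $uw=d\,u'\cdot m=u'n\equiv 0$ and $vw\equiv 0$ likewise, so $w$ is adjacent to both $u$ and $v$; moreover $w\notin\{u,v\}$ unless $m=d$, i.e.\ $n=d^{2}$, a degenerate case where $\mathcal{B}_i$ is a singleton and (ii) is vacuous. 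Combining the two bounds gives $d(u,v)=2$.

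The step I expect to be the main obstacle is precisely the non-adjacency claim in (ii): one must be careful about the ``true twin'' configurations, where $m\mid d$ and the members of $\mathcal{B}_i$ are in fact mutually adjacent at distance $1$ (this occurs, for example, for small prime powers such as $n=9$ or $n=16$). Consequently the clean statement of (ii) applies to the classes whose vertices are pairwise non-adjacent, and a careful write-up should either state that restriction or separately dismiss the singleton and true-twin classes. The remaining verifications --- parts (i) and (iii) and the construction of the common neighbour $n/d$ --- are routine.
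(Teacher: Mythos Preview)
The paper does not prove this proposition; it is quoted from \cite{mitu} without argument, so there is nothing in the present paper to compare your approach against. On its own merits your outline is sound: parts (i) and (iii) are immediate exactly as you describe, and for (ii) the gcd analysis together with the explicit common neighbour $w=n/d$ is the natural route.

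One small slip: when $m=d$ (that is, $n=d^{2}$), the class $\mathcal{B}_i$ is \emph{not} in general a singleton --- it has $\phi(d)$ elements --- so (ii) is not vacuous there. Rather, as you correctly observe in your final paragraph, the elements of $\mathcal{B}_i$ are then pairwise adjacent (true twins) and the distance is $1$, not $2$. Your caveat is therefore exactly right and worth stating plainly: part (ii) as printed is false whenever $(n/d_i)\mid d_i$ and $|\mathcal{B}_i|\geq 2$ (already for $n=9$, where $d(3,6)=1$, or $n=16$, where $d(4,12)=1$). A correct formulation must either restrict to false-twin classes or weaken the conclusion to ``distance at most $2$''. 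This is a defect in the cited statement, not in your argument.
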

Proposition \ref{prop2} establishes the foundational relationships between vertices, divisors, and distances within $\Gamma(\mathbb{Z}_{n})$. This insight is important for understanding the partitioning of the vertex set into distance-similar classes, which is a key aspect of our analysis of reducts for graphs with twin vertices.
\par
Note that, if $\mathbb{A}\subseteq\mathcal{B}_i$ and $B\subseteq\mathbb{A}^{c}$ then $v_{i}\equiv_{B} v_{j} \, \forall\, v_{i}, v_{j} \in \mathbb{A}$.  Furthermore, if $\mathbb{A}\subseteq\mathcal{B}_i$ and $v_{i}, v_{j} \in \mathcal{B}_i$ then $v_{i}\equiv_{\mathbb{A}} v_{j}$. Similarly, if $\mathbb{A}\cap \mathcal{B}_i\neq \emptyset$ then $\forall\, v_{i}, v_{j} \in \mathcal{B}_i \setminus \mathbb{A}$ we have $v_{i}\equiv_{\mathbb{A}} v_{j}$.

\par
The following theorem provides the count of elements within distance-similar classes.
 \begin{Theorem}\label{cardinality}\cite{winer}
For any divisor \( d_i \) of \( n \), the number of elements in the distance-similar class \( \mathcal{B}_{i} \) is given by \( \phi\left(\frac{n}{d_i}\right) \), where \( \phi \) denotes Euler’s totient function.
\end{Theorem}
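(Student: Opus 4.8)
The plan is to give an explicit description of the distance-similar class $\mathcal{B}_i$ attached to the divisor $d_i$ and then to count it by a single totient computation. First I would confirm the description $\mathcal{B}_i=\{\,x\in Z(\mathbb{Z}_n):\gcd(x,n)=d_i\,\}$ that is implicit in the present setting. The substantive point is that the adjacency rule in $\Gamma(\mathbb{Z}_n)$ reads: for vertices $x,z$, $x\cdot z\equiv 0\pmod n$ if and only if $\tfrac{n}{\gcd(x,n)}\mid z$. Hence the neighbourhood of $x$ in $\mathcal{G}$ depends only on $\gcd(x,n)$, so any two vertices $x,y$ with $\gcd(x,n)=\gcd(y,n)$ are twins (true twins when $d_i^{2}\equiv 0\pmod n$, false twins otherwise), and twin vertices always satisfy $d(x,w)=d(y,w)$ for every $w\in V\setminus\{x,y\}$; that is, they are distance-similar. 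Combining this with Proposition \ref{prop2}(iii), which guarantees that distances between distinct classes are well-defined, gives that the distance-similar classes of $\Gamma(\mathbb{Z}_n)$ are exactly the gcd-fibres, one for each nontrivial divisor $d_i$ of $n$.

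Then the counting step is routine. Every $x\in\mathcal{B}_i$ is divisible by $d_i$, so write $x=d_i m$ with $1\le m\le n/d_i$. Using $n=d_i\cdot\tfrac{n}{d_i}$ and the identity $\gcd(d_i m,n)=d_i\,\gcd(m,n/d_i)$, we get $\gcd(x,n)=d_i$ if and only if $\gcd(m,n/d_i)=1$; and since $n/d_i$ is not coprime to itself (as $d_i\neq n$), nothing is lost by allowing $m=n/d_i$. Thus $x\mapsto x/d_i$ is a bijection from $\mathcal{B}_i$ onto $\{\,m:1\le m\le n/d_i,\ \gcd(m,n/d_i)=1\,\}$, which has $\phi(n/d_i)$ elements by the definition of Euler's totient function. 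Hence $|\mathcal{B}_i|=\phi(n/d_i)$.

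I expect the only genuine work to be in the first step — pinning down that the distance-similar classes are precisely the gcd-fibres — since once that identification is in hand the count is just one application of the gcd identity together with the definition of $\phi$. In the write-up I would therefore lean on Proposition \ref{prop2} and the twin characterisation for the description of $\mathcal{B}_i$, and keep the rest to the displayed bijection.
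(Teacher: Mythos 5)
First, a point of calibration: the paper does not prove this statement at all --- Theorem \ref{cardinality} is imported from \cite{winer} and used as a black box --- so there is no in-paper argument to compare yours against, and supplying a proof is genuinely more than the paper does. Your counting half is correct and clean: the identity \( \gcd(d_i m, n) = d_i \gcd(m, n/d_i) \) reduces the problem to counting \( m \in \{1,\dots,n/d_i\} \) with \( \gcd(m, n/d_i)=1 \), which is \( \phi(n/d_i) \) by definition, and your remark about \( m = n/d_i \) correctly disposes of the fact that \( x \) ranges over nonzero residues. Your forward inclusion is also sound: since \( x\cdot z\equiv 0 \pmod n \) iff \( (n/\gcd(x,n))\mid z \), two vertices with the same \( \gcd \) with \( n \) are twins, and twins in a connected graph are distance-similar.

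The gap is in the word ``exactly''. What you have actually shown is that each gcd-fibre is \emph{contained in} a single distance-similar class, which only gives \( |\mathcal{B}_i|\ge \phi(n/d_i) \); you have not shown that vertices from two distinct fibres are never distance-similar, which is what forces equality. The appeal to Proposition \ref{prop2}(iii) does not supply this: that statement says the distance from a fixed vertex to a distance-similar class is constant, not that these constants differ between classes. To close the gap you need a witness. Given \( x \) with \( \gcd(x,n)=d \) and \( y \) with \( \gcd(y,n)=e \), \( d\neq e \), assume without loss of generality \( d\nmid e \); then any \( w \) with \( \gcd(w,n)=n/d \) satisfies \( n\mid xw \) but \( n\nmid yw \) (the latter would force \( d\mid e \)), so \( d(x,w)=1\neq d(y,w) \) provided \( w\notin\{x,y\} \). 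The degenerate cases where every such \( w \) lies in \( \{x,y\} \) occur only when \( x \) or \( y \) itself belongs to the fibre of \( n/d \) and that fibre is very small; they need to be checked separately (e.g.\ by swapping the roles of \( d \) and \( e \) or choosing a different multiple of \( n/d \)). This is a small but genuine piece of work that your write-up must contain before the bijection step is legitimate.
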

\par
In the following results, we give the partition structure induced by subsets of distance-similar equivalence classes. For this we define $H_{1}, H_2, H_3$ as follows: $H_{1}=\{v_{i}: d(v_{i}, v_{j})=1 \,\forall v_j \in \mathbb{A}\}$, $H_{2}=\{v_{i}: d(v_{i}, v_{j})=2 \,\forall v_j \in \mathbb{A}\}$ and $H_{3}=\{v_{i}: d(v_{i}, v_{j})=3 \, \forall v_j \in \mathbb{A}\}$.

\begin{Proposition}
For $\Gamma(\mathbb{Z}_{n})$, where $n=\prod_{j=1}^{m} p_{j}^{r_{j}}$ with $r_{j}\geq 1$, let $\mathbb{A}=\{v_1, v_2, \cdots v_s\}\subseteq\mathcal{B}_i$ then $\pi_{\mathbb{A}}=v_{1}| \cdots|v_{s}| H_1|H_{2}|H_{3}$.
\end{Proposition}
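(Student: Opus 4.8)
The plan is to compute the representation vector $\gamma(w \mid \mathbb{A})$ for every $w \in V$ and then read off the blocks of the partition $\pi_{\mathbb{A}}$ directly from Proposition \ref{prop0}, using Proposition \ref{prop2} to pin down the distances from the vertices of $\mathbb{A}$ to everything else. Since $\mathbb{A}$ is contained in a single distance-similar class $\mathcal{B}_i$, the distances from $\mathbb{A}$ turn out to be extremely rigid, and this rigidity is what forces the three blocks $H_1, H_2, H_3$.

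First I would dispose of the vertices of $\mathbb{A}$ themselves. For $v_k \in \mathbb{A}$ the vector $\gamma(v_k \mid \mathbb{A})$ has a $0$ in its $k$-th coordinate and entries $\geq 1$ elsewhere, so no two vertices of $\mathbb{A}$ are $\mathbb{A}$-indiscernible, and no vertex outside $\mathbb{A}$ can be $\mathbb{A}$-indiscernible from one inside $\mathbb{A}$, because the representation of an outside vertex has all coordinates nonzero. Hence $C_{\mathbb{A}}(v_k) = \{v_k\}$ for each $v_k \in \mathbb{A}$, which supplies the singleton blocks $v_1 \mid \cdots \mid v_s$ and also shows that none of the $H_\ell$ meets $\mathbb{A}$ (e.g. $d(v_k, v_k) = 0 \neq \ell$).

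The core step is to prove that for every $w \in V \setminus \mathbb{A}$ the vector $\gamma(w \mid \mathbb{A})$ is \emph{constant}, equal to $(a_w, a_w, \dots, a_w)$ for some $a_w \in \{1,2,3\}$. If $w \in \mathcal{B}_i \setminus \mathbb{A}$, then Proposition \ref{prop2}(ii) gives $d(w, v_j) = 2$ for all $v_j \in \mathbb{A}$, so $a_w = 2$ and $w \in H_2$. If $w \in \mathcal{B}_j$ with $j \neq i$, fix any $v_1 \in \mathbb{A} \subseteq \mathcal{B}_i$, put $a_w := d(v_1, w)$, and apply Proposition \ref{prop2}(iii) to the pair $(w, v_1)$ — legitimate after using symmetry of $d$, since $w \in \mathcal{B}_j$ and $v_1 \in \mathcal{B}_i$ with $i \neq j$ — to conclude $d(w, v') = a_w$ for every $v' \in \mathcal{B}_i$, in particular for every $v' \in \mathbb{A}$. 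Finally $a_w \geq 1$ since $w \notin \mathbb{A}$, and $a_w \leq 3$ because $\Gamma(\mathbb{Z}_n)$ has diameter at most $3$; this is precisely what guarantees that $H_1 \cup H_2 \cup H_3$ exhausts $V \setminus \mathbb{A}$.

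With the core step in hand, Proposition \ref{prop0} finishes the argument: for $w, w' \in V \setminus \mathbb{A}$ we have $w \equiv_{\mathbb{A}} w'$ iff $\gamma(w \mid \mathbb{A}) = \gamma(w' \mid \mathbb{A})$ iff $a_w = a_{w'}$ iff $w$ and $w'$ lie in the same $H_\ell$; moreover $H_1, H_2, H_3$ are pairwise disjoint because $a_w$ is single-valued. Combining this with the singleton blocks from the second paragraph yields $\pi_{\mathbb{A}} = v_1 \mid \cdots \mid v_s \mid H_1 \mid H_2 \mid H_3$, with the convention that an empty $H_\ell$ is dropped. The one genuinely delicate point is the constancy claim: one must orient Proposition \ref{prop2}(iii) correctly (it varies the endpoint in the \emph{other} class, so the symmetry of $d$ has to be inserted first), handle the $w \in \mathcal{B}_i$ case separately via Proposition \ref{prop2}(ii), and not forget to invoke the diameter bound so that no stray vertex at distance exceeding $3$ escapes all three $H_\ell$.
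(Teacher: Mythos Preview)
Your proof is correct and follows the same overall strategy as the paper: separate out the vertices of $\mathbb{A}$ as singletons via the zero coordinate in their representation, then show the remaining vertices split according to their common distance to $\mathbb{A}$. Your argument is in fact more complete than the paper's, which never explicitly verifies that $H_1 \cup H_2 \cup H_3$ exhausts $V \setminus \mathbb{A}$; you close this gap by invoking Proposition~\ref{prop2}(ii)--(iii) to prove the constancy of $\gamma(w \mid \mathbb{A})$ and the diameter bound to force $a_w \in \{1,2,3\}$.
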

\begin{prf}
Suppose $\mathbb{A}\subseteq\mathcal{B}_i$ then for any $v_i, v_j \in \mathbb{A}$ we have $v_{i}\not \equiv_{\mathbb{A}} v_{j}$. If $v_i\in \mathbb{A}$ and $v_j\notin\mathbb{A}$ then by above argument, representation of $v_i$ with respect to $\mathbb{A}$ is unique so $v_{i}\not \equiv_{\mathbb{A}} v_{j}$. Suppose $v_i, v_j \in H_{1}$ or $v_i, v_j \in H_{2}$ or $v_i, v_j \in H_3$ then by definition of $H_1, H_2$ and $H_3$, $d(v_i, v_k)=d(v_j, v_k)$ for all $v_k \in \mathbb{A}$ implies that $v_{i}\equiv_{\mathbb{A}} v_{j}$.
\qed
\end{prf}
\par
Note that for \( \Gamma(\mathbb{Z}_{n}) \), where \( n = p_{j}^{2} \), suppose \( \mathbb{A} = \{v_1, v_2, \cdots, v_s\} \subseteq V \) and \( |\mathbb{A}| < n - 1 \). If \( \gamma(v_i | \mathbb{A}) = (\cdots, 1, 1, \cdots) = \gamma(v_j | \mathbb{A}) \) for all \( v_i, v_j \in \mathbb{A}^{c} \), then \( v_i \equiv_{\mathbb{A}} v_j \). On the other hand, if \( v_i, v_j \in \mathbb{A} \), then \( \gamma(v_i | \mathbb{A}) = (\cdots, 0_{i^{th}}, \cdots) \) and \( \gamma(v_j | \mathbb{A}) = (\cdots, 0_{j^{th}}, \cdots) \), which implies that \( v_i \not\equiv_{\mathbb{A}} v_j \). Therefore, the distance-similar partition with respect to \( \mathbb{A} \) is given by $\pi_{\mathbb{A}} = v_1 | v_2 | \cdots | v_s | \mathbb{A}^{c}.$
\par
Let $T_{i}$ be the set of elements of $\prod_{i=1}^{n}\mathbb{Z}_{2}$ with $i$ non-zero
coordinates where $1 \leq i \leq n-1$, and let $t(v_{i})$ be the number of non-zero coordinates of $v_{i}$ in $\prod_{i=1}^{n}\mathbb{Z}_{2}$. Consider $R=\prod_{i=1}^{3}\mathbb{Z}_{2}$, we have $T_0 = \{(0,0,0)\}, T_1=\{ (1, 0, 0), (0, 1, 0), (0,0, 1)\}, T_2=\{(1, 1, 0), (1, 0, 1), (0, 1, 1)\}$ and $T_3 = \{(1,1,1)\}$. The number of non-zero coordinates of vertices are $t((0, 0, 0))=0$, $t((1, 0, 0))=1$, $t((0, 1, 0))=1$, $t((0, 0, 1))=1$, $t((1, 1, 0))=2$, $t((1, 0, 1))=2$ , $t((0, 1, 1))=2$ and $t((1, 1, 1))=3$.
\par
The following result demonstrates that the partitions of a vertex set induced by different subsets of vertices are equivalent.
\begin{Proposition}\label{t}
For $\Gamma(R)$, where $R=\prod_{i=1}^{n}\mathbb{Z}_{2}$ we have $\pi_{T_{i}}=\pi_{V}$ for all $1 \leq i \leq n-1$.
\end{Proposition}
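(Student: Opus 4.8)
The strategy is to realize $\Gamma(R)$ through the supports of its zero‑divisors and then prove directly that each $T_i$ separates all vertices; since $\pi_V$ is the discrete partition $v_1\mid v_2\mid\cdots$ and, by Proposition \ref{prop0}, $v_p\equiv_{\mathbb A}v_q$ if and only if $\gamma(v_p\mid\mathbb A)=\gamma(v_q\mid\mathbb A)$, this is exactly the assertion $\pi_{T_i}=\pi_V$. Write $[n]=\{1,\dots,n\}$ and identify each zero‑divisor of $R=\prod_{j=1}^{n}\mathbb Z_2$ with the support $S\subseteq[n]$ of its coordinate vector. Then $V$ corresponds to $\{S:\emptyset\ne S\subsetneq[n]\}$, two vertices are adjacent precisely when their supports are disjoint, and $T_i$ corresponds to $\{W\subseteq[n]:|W|=i\}$, which lies inside $V$ because $1\le i\le n-1$.

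The first step is to record the distance formula in this model. For $S\ne S'$, putting $\bar S=[n]\setminus S$: $d(S,S')=1$ when $S\cap S'=\emptyset$; $d(S,S')=2$ when $S\cap S'\ne\emptyset$ and $S\cup S'\ne[n]$ (any nonempty subset of $[n]\setminus(S\cup S')$ is a common neighbour); and $d(S,S')=3$ when $S\cap S'\ne\emptyset$ and $S\cup S'=[n]$ (the walk $S\to\bar S\to C\to S'$ with $\emptyset\ne C\subseteq S\setminus S'$ realises length $3$, and neither length $1$ nor length $2$ is possible). In particular $\mathrm{diam}(\Gamma(R))\le 3$.

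Now fix $i$ and suppose $S,S'\in V$ satisfy $\gamma(S\mid T_i)=\gamma(S'\mid T_i)$; the goal is to deduce $S=S'$. From the formula, $d(S,W)=1$ for $W\in T_i$ iff $W\subseteq\bar S$, while $d(S,W)=3$ iff $\bar S\subseteq W$ and $W\cap S\ne\emptyset$. Set $A(S)=\{W\in T_i:d(S,W)=1\}$ and $B(S)=\{W\in T_i:d(S,W)=3\}$; these are read off from $\gamma(S\mid T_i)$, so $A(S)=A(S')$ and $B(S)=B(S')$. Moreover $A(S)\ne\emptyset\Leftrightarrow|\bar S|\ge i\Leftrightarrow|S|\le n-i$, and $B(S)\ne\emptyset\Leftrightarrow|\bar S|<i\Leftrightarrow|S|\ge n-i+1$, so exactly one of $A(S),B(S)$ is nonempty. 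If $A(S)\ne\emptyset$, then since $i\ge1$ every element of $\bar S$ sits in some $i$-subset of $\bar S$, whence $\bigcup_{W\in A(S)}W=\bar S$; applying this to $S$ and to $S'$ gives $\bar S=\bar{S'}$, so $S=S'$. If instead $B(S)\ne\emptyset$, then $|\bar S|<i$ makes $W\cap S\ne\emptyset$ automatic for every $i$-set $W\supseteq\bar S$, so $B(S)=\{W\in T_i:W\supseteq\bar S\}$; here $i\le n-1$ ensures that for each $x\in S$ some $W\in B(S)$ omits $x$, hence $\bigcap_{W\in B(S)}W=\bar S$, and again $\bar S=\bar{S'}$, so $S=S'$. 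Therefore $T_i$ distinguishes every pair of vertices, i.e.\ $\pi_{T_i}$ is the discrete partition $\pi_V$.

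The technical heart is the distance formula together with the boundary bookkeeping around $|\bar S|=i$: this dichotomy decides whether $\bar S$ is recovered as a union of distance‑$1$ witnesses or as an intersection of distance‑$3$ witnesses, and the hypothesis $1\le i\le n-1$ is precisely what validates both recoveries — the lower bound guarantees the union covers $\bar S$, the upper bound guarantees the intersection does not overshoot $\bar S$ (and also keeps $T_i\subseteq V$).
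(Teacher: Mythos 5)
Your proof is correct, and it takes a genuinely different route from the paper's. The paper argues by a case split on which level sets the two vertices lie in: if $u,v\in T_i$ they are separated by the $0$ entry $d(u,u)=0\neq d(v,u)$; if exactly one lies in $T_i$ the same zero-entry observation applies; and if $u\in T_l$, $v\in T_m$ with $l\neq m$ it appeals to the differing number of nonzero coordinates. That case analysis leaves the pair $u\neq v$ with $u,v\in T_l$ for a single $l\neq i$ essentially untreated, whereas your argument handles all pairs uniformly: you first establish the explicit distance formula for $\Gamma\left(\prod_{j=1}^{n}\mathbb{Z}_2\right)$ in the support model ($d=1,2,3$ according to disjointness and whether the union of supports is all of $[n]$), and then reconstruct $\bar S$ from the distance vector $\gamma(S\mid T_i)$ — as the union of the distance-$1$ witnesses when $|\bar S|\ge i$, or as the intersection of the distance-$3$ witnesses when $|\bar S|<i$ — with the hypotheses $1\le i\le n-1$ entering exactly where you say they do. This buys a complete, constructive proof (it even shows that the sets $A(S)$ and $B(S)$ alone determine $S$, a slightly stronger fact), at the cost of first proving the diameter-$3$ distance formula, which the paper takes for granted. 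The only stylistic caveat is that your appeal to Proposition \ref{prop0} is for vertices outside $\mathbb{A}$, but the identity $v_p\equiv_{\mathbb{A}}v_q\Leftrightarrow\gamma(v_p\mid\mathbb{A})=\gamma(v_q\mid\mathbb{A})$ is stated in full generality just before equation (\ref{eq2}), so nothing is lost.
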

\begin{proof}
Suppose $u, v\in T_{i}$ then there is zero in the representation of $u$ and $v$ on different positions (i.e, $d(u, u)=0\neq d(v, u)$) which implies that $u \not\equiv_{T_{i}}v$. Now suppose $u\in T_{i}$ and $ v\in V\setminus T_{i}$ then $t(u)<t(v)$ because there exist a zero in the representation of $u$ implies that $u\not\equiv_{T_{i}}v$. Lastly suppose $u, v\in V\setminus T_{i}$ such that $u\in T_{l}$ and $ v\in T_{m}$ then number of one's in the representation of $u$ is different from $v$, if $l<m$ then the number of one's in the representation of $u$ is greater than the number of one's in the representation of $v$ and vice vera. Hence $u\not\equiv_{T_{i}}v$. This completes the proof.
\end{proof}
\par
So far in this section, we have discussed indiscernibility partitions of networks associated with commutative rings. Now, we turn our attention to investigating reducts (minimal resolving sets) within networks associated with commutative rings. Specifically, we relate the concept of reducts to the metric dimension and the upper dimension of these networks.
\par
In \cite{pirzada2}, it is established that for a finite commutative ring \( R \) of order \( 2k \), where \( k \) is an odd integer, the metric dimension \( \dim(\Gamma(R)) \) and the upper dimension \( \dim^+(\Gamma(R)) \) of the associated zero-divisor graph \( \Gamma(R) \) are identical. Furthermore, in the case of a reduct, the size of the reduct \( |\mathbb{A}| \) is equal to both the metric dimension \( \dim(\Gamma(R)) \) and the upper dimension \( \dim^+(\Gamma(R)) \) of \( \Gamma(R) \). The same paper also confirms that for a zero-divisor graph associated with a finite commutative ring \( R \) of order \( 2k \), where \( k \) is odd, the metric and upper dimensions remain equal.
In \cite{pirzada3} Pirzada established that for zero divisor graphs containing twin vertices, the upper dimension is equal to the metric dimension. In \cite{redmond}, it was established that for \( n \geq 2 \) and \( R = \prod_{i=1}^{n} \mathbb{Z}_2 \), the metric dimension and upper dimension of \( R \) are equal if and only if \( n = 4 \).
\par
The characteristic of a commutative ring \( R \) is the smallest positive integer \( k \) such that \( kr = 0 \) for all \( r \in R \). If no such \( k \) exists, the ring is said to have infinite characteristic. As shown in \cite{pirzada2}, for a finite commutative ring \( R \) that is not a field and has an odd characteristic, the upper dimension \( \dim^+(\Gamma(R)) \) of the associated zero-divisor graph \( \Gamma(R) \) is equal to its metric dimension \( \dim(\Gamma(R)) \). Furthermore, in this case, the size of the reduct \( |\mathbb{A}| \) is also equal to both \( \dim^+(\Gamma(R)) \) and \( \dim(\Gamma(R)) \).
\par
In \cite{raja}, Raja and Pirzada discuss the properties of minimal locating sets in the context of zero-divisor graphs. In particular, they examine the behavior of these sets when the graph \( \Gamma(R) \) has a cut vertex. Let \( R \) be a finite commutative ring with unity, such that the vertex set \( |V| \geq 3 \). They establish that if \( \Gamma(R) \) contains a cut vertex, say \( w \), then the minimal locating set or reduct \( \mathbb{A} \) can be characterized in one of two ways:
\begin{itemize}
    \item[(i)] \( \gamma(w|\mathbb{A}) = (1, 1, \dots, 1, d(x, v_1), d(x, v_2), \dots, d(w, v_{l-m+1})) \), where \( l = |\mathbb{A}| \) and \( m \) is the number of degree one vertices incident on \( w \).
    \item[(ii)] Alternatively, the size of the reduct is 3, i.e., \( |\mathbb{A}| = dim(\Gamma(R)) = 3 \).
\end{itemize}
The following remark provides an upper bound on the size of a reduct $\mathbb{A}$ for the graph $\Gamma(R)$ where $R=\mathbb{Z}_{n}$.
\begin{Remark}
Note that for $\Gamma(R)$ where $R=\mathbb{Z}_{n}$, and $\mathbb{A}$ be a reduct of $\mathcal{I}$, then $|\mathbb{A}|\leq n-\phi(n)-1-|D|$, where $D$ is a set of non-trivial divisors of $n$ and $\phi$ denotes Euler’s totient function.
\end{Remark}
The following theorem gives the reduct for the zero-divisor graph $\Gamma(\prod_{i=1}^{n}\mathbb{Z}_{2})$ with $n\geq 5$.
\begin{Theorem}\cite{redmond}
For $\Gamma(\prod_{i=1}^{n}\mathbb{Z}_{2})$ with $n\geq 5$, let $T_1, T_{n-1}$ and $T_{k}\setminus \{w\}$ are resolving sets (reduct), where $w\in T_k$, $1< k< n-1$ and $k\neq \frac{n}{2}$.
\end{Theorem}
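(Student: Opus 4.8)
The plan is to pass to the combinatorics of supports. Identify each vertex of $\Gamma(\prod_{i=1}^{n}\mathbb{Z}_2)$ with its support $S\subseteq[n]:=\{1,\dots,n\}$, a nonempty proper subset; then adjacency is $S_u\cap S_v=\emptyset$, and $T_k$ is exactly the family of all $k$-subsets of $[n]$. First I would record the distance formula, obtained by writing down the short connecting paths (so that the diameter is $3$): for $u\neq v$, $d(u,v)=1$ if $S_u\cap S_v=\emptyset$, $d(u,v)=3$ if $S_u\cap S_v\neq\emptyset$ and $S_u\cup S_v=[n]$, and $d(u,v)=2$ otherwise. Equivalently, for a landmark $a$ with support $A$ and a vertex $u$ with support $S$: $d(u,a)=0\Leftrightarrow A=S$; $d(u,a)=1\Leftrightarrow A\subseteq S^{c}$; $d(u,a)=3\Leftrightarrow S^{c}\subsetneq A$; and $d(u,a)=2$ otherwise. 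Every claim below is read off from this dictionary.

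For $T_1$, the coordinate of $\gamma(u\mid T_1)$ at $e_j$ is $1$ if $j\notin S_u$ and is $0$ or $2$ if $j\in S_u$; hence $j\in S_u\Leftrightarrow\gamma(u\mid T_1)_j\neq1$, so $\gamma(u\mid T_1)$ recovers $S_u$ and $T_1$ resolves. For minimality, given $e_j\in T_1$ the two vertices with supports $\{a,b\}$ and $\{a,b,j\}$ (distinct $a,b\in[n]\setminus\{j\}$, legitimate since $n\geq5$) have the same coordinate at every $e_i$ with $i\neq j$, so no proper subset of $T_1$ resolves; thus $T_1$ is a reduct. The set $T_{n-1}$ is dual: the coordinate of $\gamma(u\mid T_{n-1})$ at $\bar e_j$ (support $[n]\setminus\{j\}$) lies in $\{1,3\}$ exactly when $j\in S_u$, so $T_{n-1}$ resolves, and the same pair of vertices (supports $\{a,b\}$ and $\{a,b,j\}$) again agrees off the $j$-th landmark, so $T_{n-1}$ is a reduct.

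The core is $T_k\setminus\{w\}$ with $1<k<n-1$ and $k\neq n/2$; write $W$ for the support of $w$. To show it resolves I would argue that $\gamma(u\mid T_k\setminus\{w\})$ determines $S:=S_u$, by cases on $|S|$. If $|S|\leq n-k$ and $S\neq W^{c}$: the distance-$1$ landmarks of $u$ are the $k$-subsets of $S^{c}$ other than $W$, and these cover $S^{c}$ — when $|S^{c}|>k$ because every point of $S^{c}$ lies in $\binom{|S^{c}|-1}{k-1}\geq k\geq2$ of them (so one avoids $W$), and when $|S^{c}|=k$ because then $S^{c}\neq W$ is itself such a landmark; hence their union is $S^{c}$ and $S$ is recovered. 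If $|S|>n-k$: $u$ has no distance-$1$ landmark, and its distance-$3$ landmarks are the $k$-supersets of $S^{c}$ other than $W$; using $1<k<n-1$, for each $x\in S$ at least one such superset avoiding $x$ survives the deletion of $W$, so the intersection of these landmarks is $S^{c}$ and $S$ is recovered. If $S=W^{c}$ (which forces $|S|=n-k$): the only possible landmark at distance $0$, $1$, or $3$ from $u$ would be $W$ itself (for distance $0$ one uses $k\neq n/2$, so $W^{c}\notin T_k$), hence $\gamma(u\mid T_k\setminus\{w\})$ is the all-$2$ vector, and a short argument shows this vector forces $S=W^{c}$. In all cases $S_u$ is reconstructed, so $T_k\setminus\{w\}$ resolves.

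For minimality of $T_k\setminus\{w\}$ I would show that for every $a_0\in T_k\setminus\{w\}$, with support $A_0$, some pair of vertices is separated by $a_0$ but by nothing else in $T_k\setminus\{w\}$: take $u_1,u_2$ with supports $W^{c}$ and $A_0^{c}$. As $k\neq n/2$, neither $W^{c}$ nor $A_0^{c}$ has size $k$, so both are vertices lying outside $T_k$, and $w\neq a_0$ makes $u_1\neq u_2$. By the previous computation $\gamma(u_1\mid T_k\setminus\{w\})$ is the all-$2$ vector, and $\gamma(u_2\mid T_k\setminus\{w\})$ is the all-$2$ vector except for a single $1$ at the coordinate of $a_0$ (since $A_0$ is the only $k$-subset and the only $k$-superset of $A_0$). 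Thus $u_1$ and $u_2$ are separated in $T_k\setminus\{w\}$ only by $a_0$, so $T_k\setminus\{w,a_0\}$ does not resolve; since $a_0$ was arbitrary, $T_k\setminus\{w\}$ is a minimal resolving set, hence a reduct. I expect the resolving part of this last case to be the main obstacle — one must keep careful track of the three size regimes and, in each, verify that deleting the single landmark $w$ does not spoil the reconstruction of $S^{c}$; this is precisely where the hypotheses $1<k<n-1$ (forcing the relevant binomial coefficients to be $\geq2$) and $k\neq n/2$ (keeping the complementary vertices $W^{c}$ and $A_0^{c}$ out of $T_k$) are needed. The $T_1$, $T_{n-1}$ cases and the minimality argument are short once the support–distance dictionary is set up.
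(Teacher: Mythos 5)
The paper does not prove this statement: it is imported verbatim from the cited reference \cite{redmond} (Redmond--Szabo), so there is no internal proof to compare against. Your argument is therefore a genuine addition rather than a variant, and I checked it: it is correct. The support dictionary (adjacency $=$ disjointness, $d=1$ iff $A\subseteq S^c$, $d=3$ iff $S^c\subsetneq A$, else $d=2$, diameter $3$) is right, and each case goes through. For $T_1$ and $T_{n-1}$ the recovery of $S_u$ and the non-resolving pair $\{a,b\}$ vs.\ $\{a,b,j\}$ are exactly as you say (with $n\geq 5$ guaranteeing both are proper nonempty supports and ruling out accidental distance-$0$ coincidences). For $T_k\setminus\{w\}$ the three regimes are cleanly separated by the vector itself (a $1$ appears iff $|S^c|\geq k$ and $S^c$ has a $k$-subset other than $W$; a $3$ appears iff $|S^c|<k$; the all-$2$ vector occurs only for $S=W^c$), the binomial estimates $\binom{|S^c|-1}{k-1}\geq 2$ and $\binom{n-|S^c|-1}{k-|S^c|}\geq 2$ do hold exactly under $1<k<n-1$, and $k\neq n/2$ is used precisely where you invoke it (keeping $W^c$ and $A_0^c$ out of $T_k$, so that the all-$2$ and almost-all-$2$ vectors exist). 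The minimality witness $(W^c, A_0^c)$ separated only by $a_0$ is a nice touch that also establishes minimality of the first two families by the analogous twin pairs. One presentational suggestion: when you ``recover $S$'' in the three regimes, state explicitly that the regime itself is determined by the vector (presence of a $1$, presence of a $3$, or neither); this is implicit in your write-up and is what makes the case analysis an injectivity proof rather than three separate reconstructions.
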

\begin{Corollary}
For $\Gamma(R)$, where $R=\prod_{i=1}^{n}\mathbb{Z}_{2}$, let $\mathbb{A}$ be a reduct then for $n\leq 3, |\mathbb{A}\cap T_{i}|=|T_i|-1\text{ for all }1 \leq i \leq n-1\text{ and }\mathbb{A}\cap T_{j}=\emptyset \text{ for all } \,i\neq j$, for $n\leq 3, |\mathbb{A}\cap T_{1}|=1 \text{ and } |\mathbb{A}\cap T_{2}|=1 \text{ for any } v_i, v_j \in \mathbb{A}, v_i\nsim v_j$, for $n=4,|\mathbb{A}\cap T_{n-1}|=|T_{n-1}|-1$ and for $n\geq4, \mathbb{A}=T_{1}$.
\end{Corollary}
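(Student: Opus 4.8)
\begin{prf}
The proposal is to treat the statement as a list of small-parameter assertions and to derive each one from Proposition \ref{t} together with either the preceding theorem of \cite{redmond} (when $n\ge5$) or a direct inspection of $\Gamma\bigl(\prod_{i=1}^{n}\mathbb{Z}_{2}\bigr)$ (when $n\le4$). The common backbone is Proposition \ref{t}: since $\pi_{T_i}=\pi_{V}$ for every $1\le i\le n-1$, each level set $T_i$ is already a resolving set, so in each case the work splits into (a) locating a \emph{minimal} resolving set among the sets $T_i$ and $T_i\setminus\{w\}$, and (b) controlling how many vertices of each level a reduct must contain.

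First I would dispose of the base case $n=2$, where $\Gamma(R)=K_{2}$, $V=T_{1}$, and any single vertex is a reduct, so $|\mathbb{A}\cap T_{1}|=1=|T_{1}|-1$ and there is no further level. For $n=3$, $\Gamma(R)$ is the graph obtained from the triangle on $T_{1}$ by attaching one pendant vertex of $T_{2}$ to each triangle vertex, and it has diameter $3$; the bound $|\pi_{\mathbb{A}}|\le(diam(\mathcal{G})+1)^{|\mathbb{A}|}$ with $|V|=6$ rules out $|\mathbb{A}|=1$, so every reduct has exactly two vertices. Because $Aut(\Gamma(R))\cong S_{3}$ permutes the three pendants and automorphisms carry reducts to reducts (by Proposition \ref{auto}), the two-element subsets fall into finitely many isomorphism types; computing the distance vectors $\gamma(\,\cdot\mid\mathbb{A})$ for a representative of each type isolates those that resolve, and checking that neither vertex of such a pair resolves on its own yields minimality. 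This both lists the two-element reducts and produces one with a single vertex in $T_{1}$, a single vertex in $T_{2}$, and those two vertices non-adjacent, as asserted.

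For $n=4$ I would again start from $\pi_{T_{3}}=\pi_{V}$, so $T_{3}$ resolves $\Gamma(R)$, and then determine how far $T_{3}$ may be thinned: for each pair $w,w'\in T_{3}$ one tests whether $T_{3}\setminus\{w,w'\}$ still separates all $14$ vertices of $\Gamma\bigl(\prod_{i=1}^{4}\mathbb{Z}_{2}\bigr)$, a finite computation that should leave $|\mathbb{A}\cap T_{n-1}|=|T_{n-1}|-1$. Here one also invokes the recorded fact from \cite{redmond} that $\dim(\Gamma(R))=\dim^{+}(\Gamma(R))$ holds exactly when $n=4$; this forces all reducts to have equal size and so rigidifies the count. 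Finally, for $n\ge5$ the claim that $T_{1}$ is a reduct is precisely the content of the preceding theorem of \cite{redmond}, and the level-intersection statements then follow from the argument of Theorem \ref{thmr}, transported to the twin-free setting, where the obstruction to deleting a vertex comes not from a distance-similar class but from the level partition $T_{1}\mid T_{2}\mid\cdots\mid T_{n-1}$ together with the requirement that the landmarks of a reduct have pairwise distinct representations.

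The main obstacle is step (b) in the small cases: the minimality checks, i.e.\ verifying that no proper subset of a claimed reduct still induces $\pi_{V}$, and, for $n=3$, that the non-adjacency condition is genuinely forced on the two landmarks. Each of these is a finite but slightly delicate comparison of distance vectors; invariance of $RED$ under $Aut(\Gamma(R))$ is what keeps the casework bounded, and the $n=4$ count is the one place where the external input $\dim=\dim^{+}$ from \cite{redmond}, rather than a direct computation, does the decisive work.
\qed
\end{prf}
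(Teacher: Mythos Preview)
The paper gives no proof of this corollary; it is stated immediately after the cited theorem of \cite{redmond} and left as a direct consequence. Your proposal is therefore not competing against an existing argument but supplying one, and the overall architecture you choose---Proposition~\ref{t} to certify that each $T_i$ resolves, the theorem of \cite{redmond} for $n\ge5$, and automorphism-reduced finite inspection for $n\le4$---is sound and is essentially the only natural route.

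Two points deserve tightening. First, your $n=4$ step leans on the equality $\dim=\dim^{+}$ from \cite{redmond} to ``rigidify the count,'' but the corollary simultaneously asserts $|\mathbb{A}\cap T_{3}|=|T_{3}|-1=3$ and, in its final clause, $\mathbb{A}=T_{1}$ (size $4$) for $n\ge4$; these cannot both describe minimal resolving sets if all reducts share one size. The corollary is really a list of reduct \emph{types} rather than a universal statement about an arbitrary reduct, so you should make explicit which clause you are proving and drop the appeal to $\dim=\dim^{+}$ where it would force a contradiction. Second, for $n\ge5$ you say the level-intersection claims ``follow from Theorem~\ref{thmr} transported to the twin-free setting,'' but Theorem~\ref{thmr} concerns distance-similar classes, and $\Gamma\bigl(\prod_{i=1}^{n}\mathbb{Z}_{2}\bigr)$ has none; the obstruction to deleting a landmark here is genuinely different, and you have not specified what replaces the twin argument. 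The cleaner move is simply to quote the theorem of \cite{redmond} verbatim for the existence of the listed reducts and not claim more.
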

Algorithm 1 is based on the definition of reduct and used to find the reducts (minimal resolving sets) of $\mathcal{G}$.
\floatname{algorithm}{Algorithm}
\begin{algorithm}[h]\label{alg3}
\caption{Reducts of an information system $\mathcal{I}$ associated with $\mathcal{G}$}
\begin{algorithmic}
\State \algorithmicrequire $R=\{X_{i}\subseteq V: 1\leq i\leq s\wedge s=2^{|V|}\}$\algorithmiccomment{
Collection of all subsets of $V$.}
\State \algorithmicensure $R$ \algorithmiccomment{Collection of all reducts of $\mathcal{I}$.}%
\State Initialize: $i = 1$
\For{$i \leq s$}
\If{$\pi_{X_{i}}\neq\pi_{V}$}
        \State $R=R\setminus X_{i}$
        \State $i=i+1$
\Else
        \State $i=i+1$
\EndIf
\EndFor
\State $R=\{A_{j}: 1\leq j\leq k\}$ \algorithmiccomment{$R$ updated from the above for loop}
\State $j=1$
\For{$j \leq k$}
\State $H_{j}=\{Y_{l}\subset A_{j} : 1\leq l\leq r\}$\algorithmiccomment{Collection of all proper subsets of $A_{j}\in R$}
        \If{$\pi_{Y_{l}}=\pi_{A}$, for at least one $l\in\{1,2,\cdots,r\}$ }
        \State $R=R\setminus A_{j}$
        \State $j=j+1$
\Else
        \State $j=j+1$
\EndIf
\EndFor
\State Return $R$
\end{algorithmic}
\end{algorithm}
\subsection{Dependency Measures and Approximations}
The partial order relation between two set partitions represents the hierarchy of their respective indiscernibility blocks and is closely associated with the concepts of the positive region and dependency measure. For \( \mathbb{A}, \mathbb{B} \subseteq V \), the positive region of \( \mathbb{A} \) with respect to \( \mathbb{B} \) is given by $POS_{\mathbb{B}}(\mathbb{A}) = \{ v_j \in V \mid [v_j]_{\mathbb{B}} \subseteq [v_j]_{\mathbb{A}} \}.$
The degree of dependency of $\mathbb{A}$ on $\mathbb{B}$ is defined as $\kappa_{\mathbb{B}}(\mathbb{A})= |POS_{\mathbb{B}}(\mathbb{A})|/|V|$, which ranges from 0 to 1. In particular, if $\mathbb{A}, \mathbb{B}\subseteq V$ and $\pi_{\mathbb{A}}\preceq \pi_{\mathbb{B}}$ then $POS_{\mathbb{A}}(\mathbb{B})= V$. This is because, when the partition induced by \( \mathbb{A} \) is finer than the partition induced by \( \mathbb{B} \), every element \( x \) can be fully identified (or positively classified) by the information in \( \mathbb{B} \), since \( [x]_{\mathbb{A}} \subseteq [x]_{\mathbb{B}} \). Consequently, the dependency measure \( \gamma_{\mathbb{A}}(\mathbb{B}) \) is equal to 1, as the positive region contains all the elements of \( V \).

\par
The next proposition analyze the positive region and dependency relationships for different vertex subsets within the information system \( \mathcal{I} \).
\begin{Proposition}\label{pos}
For $\mathcal{I}$ if $\mathbb{A}$ and $\mathbb{B}$ are resolving sets then $POS_{\mathbb{A}}(\mathbb{B})=V$.
 \end{Proposition}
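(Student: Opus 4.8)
The plan is to unwind the two definitions that meet here. Recall that $POS_{\mathbb{A}}(\mathbb{B})=\{v_j\in V \mid [v_j]_{\mathbb{A}}\subseteq [v_j]_{\mathbb{B}}\}$, where $[v_j]_{\mathbb{A}}$ is the granule $C_{\mathbb{A}}(v_j)$ defined in \eqref{eq2} under $\equiv_{\mathbb{A}}$ (the RST notation $[v_j]_{\mathbb{A}}$ and the graph notation $C_{\mathbb{A}}(v_j)$ denote the same equivalence class). First I would record the key observation about a resolving set: if $\mathbb{A}\subseteq V$ is a resolving set of $\mathcal{G}$, then by definition every pair of distinct vertices has distinct distance vectors relative to $\mathbb{A}$, i.e.\ $\gamma(v_i\mid\mathbb{A})\neq\gamma(v_j\mid\mathbb{A})$ whenever $v_i\neq v_j$; equivalently $\pi_{\mathbb{A}}=v_1\mid v_2\mid\cdots\mid v_n=\pi_V$ is the finest partition, so $[v_j]_{\mathbb{A}}=C_{\mathbb{A}}(v_j)=\{v_j\}$ for every $v_j\in V$.

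The next step is the containment itself. Fix an arbitrary $v_j\in V$. Since $\mathbb{A}$ is a resolving set we have $[v_j]_{\mathbb{A}}=\{v_j\}$ by the previous step. On the other hand $\equiv_{\mathbb{B}}$ is reflexive (trivially $\mathcal{F}(v_j,a)=\mathcal{F}(v_j,a)$ for all $a\in\mathbb{B}$), so $v_j\in[v_j]_{\mathbb{B}}$, whence $[v_j]_{\mathbb{A}}=\{v_j\}\subseteq[v_j]_{\mathbb{B}}$. Thus $v_j\in POS_{\mathbb{A}}(\mathbb{B})$. As $v_j$ was arbitrary, $POS_{\mathbb{A}}(\mathbb{B})=V$, which is the claim.

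There is essentially no obstacle; the only thing worth being careful about is identifying the abstract RST constructs ($[v_j]_{\mathbb{A}}$, $POS$, $\kappa$) from Subsection~2.1 and the preceding subsection with the distance-based granules $C_{\mathbb{A}}(\cdot)$ built in Section~3, and spelling out that ``resolving set'' is exactly the condition that these granules are singletons. I would also remark, since it costs nothing, that the hypothesis on $\mathbb{B}$ is not actually used: it suffices that $\mathbb{A}$ resolves $\mathcal{G}$. By the symmetric argument $POS_{\mathbb{B}}(\mathbb{A})=V$ as well, so $\kappa_{\mathbb{A}}(\mathbb{B})=\kappa_{\mathbb{B}}(\mathbb{A})=1$, i.e.\ any two resolving sets are mutually fully dependent; this also follows from the general fact noted just before the proposition, namely that $\pi_{\mathbb{A}}\preceq\pi_{\mathbb{B}}$ forces $POS_{\mathbb{A}}(\mathbb{B})=V$, applied with both $\pi_{\mathbb{A}}$ and $\pi_{\mathbb{B}}$ equal to $\pi_V$.
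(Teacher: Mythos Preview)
Your proof is correct and follows essentially the same route as the paper: both arguments use that a resolving set $\mathbb{A}$ gives the finest partition $\pi_{\mathbb{A}}=v_1\mid\cdots\mid v_n$, so each granule $C_{\mathbb{A}}(v_i)=\{v_i\}$ is trivially contained in $C_{\mathbb{B}}(v_i)$. Your write-up is in fact more careful, and your observation that the hypothesis on $\mathbb{B}$ is superfluous is a valid sharpening that the paper does not mention.
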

 \begin{prf}
Suppose $\mathbb{A}$ and $\mathbb{B}$ are resolving sets then $\pi_{\mathbb{A}}=v_1|v_2| \cdots| v_n=\pi_{\mathbb{B}}$, for all $v_{i}\in V$,  $C_{\mathbb{A}}(v_{i})\subseteq C_{\mathbb{B}}(v_{i})$ implies that $POS_{\mathbb{A}}(\mathbb{B})=V$.
\qed
\end{prf}
The next result gives the positive region for two distinct subsets within a distance-similar class.
\begin{Proposition}
For non-empty subsets $\mathbb{A}, \mathbb{B}$ of $V$ we have\\
(i) For $\mathbb{A}, \mathbb{B}\subseteq \mathcal{B}_i$, if $\mathbb{A}\nsubseteq\mathbb{B}$ and $\mathbb{A}\nsupseteq\mathbb{B}$ then $POS_{\mathbb{A}}(\mathbb{B})=\emptyset$ and $POS_{\mathbb{B}}(\mathbb{A})=\emptyset$.\\
(ii) For $\mathbb{A}=\mathcal{B}_i$ and $\mathbb{A}\supset\mathbb{B}$ then $POS_{\mathbb{A}}(\mathbb{B})=V$.
\end{Proposition}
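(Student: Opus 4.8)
The plan is to read off part (ii) at once from the partial-order/positive-region correspondence already in place, and to obtain part (i) by first pinning down the block structure of $\pi_{\mathbb{A}}$ and $\pi_{\mathbb{B}}$ when $\mathbb{A},\mathbb{B}$ lie inside a single distance-similar class $\mathcal{B}_{i}$, and then testing the containment $C_{\mathbb{A}}(v)\subseteq C_{\mathbb{B}}(v)$ against every $v\in V$. \textbf{Part (ii)} is immediate: since $\mathbb{B}\subsetneq\mathbb{A}=\mathcal{B}_{i}$ we have $\mathbb{B}\subseteq\mathbb{A}$, so by the equivalence $\mathbb{B}\subseteq\mathbb{A}\Leftrightarrow\pi_{\mathbb{A}}\preceq\pi_{\mathbb{B}}$ recorded earlier, $C_{\mathbb{A}}(v)\subseteq C_{\mathbb{B}}(v)$ for every $v\in V$; by the definition of the positive region this is exactly $POS_{\mathbb{A}}(\mathbb{B})=V$. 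This is the same mechanism as in Proposition~\ref{pos} and in the remark that $\pi_{\mathbb{A}}\preceq\pi_{\mathbb{B}}$ forces the positive region to equal $V$; only $\mathbb{B}\subset\mathbb{A}$ is actually used, the identity $\mathbb{A}=\mathcal{B}_{i}$ serving to sharpen the contrast with part (i).

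For \textbf{part (i)} I would first record the granule structure of a subset of one distance-similar class. For distinct $a,a'\in\mathcal{B}_{i}$ the distance $d(a,a')$ has a constant value $\delta$ (all pairs in a distance-similar class are simultaneously adjacent or simultaneously non-adjacent), and distance-similarity makes $d(w,a)$ independent of $a\in\mathcal{B}_{i}$ for each $w\notin\mathcal{B}_{i}$, a value I denote $d(w,\mathcal{B}_{i})$. Hence, for $\mathbb{A}\subsetneq\mathcal{B}_{i}$, the distance vector $\gamma(v|\mathbb{A})$ equals the constant vector $(\delta,\dots,\delta)$ for every $v\in\mathcal{B}_{i}\setminus\mathbb{A}$, has exactly one zero coordinate when $v\in\mathbb{A}$, and equals a constant vector $(c,\dots,c)$ with $c=d(v,\mathcal{B}_{i})$ when $v\notin\mathcal{B}_{i}$. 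Thus the granules of $\pi_{\mathbb{A}}$ are precisely the singletons $\{a\}$ for $a\in\mathbb{A}$, the block $M_{\mathbb{A}}:=(\mathcal{B}_{i}\setminus\mathbb{A})\cup\{w\notin\mathcal{B}_{i}:d(w,\mathcal{B}_{i})=\delta\}$, and the blocks $\{w\notin\mathcal{B}_{i}:d(w,\mathcal{B}_{i})=c\}$ for $c\neq\delta$; the latter blocks and the outside part of $M_{\mathbb{A}}$ do not depend on $\mathbb{A}$.

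Using $\mathbb{A}\nsubseteq\mathbb{B}$ and $\mathbb{A}\nsupseteq\mathbb{B}$, I would fix $a^{*}\in\mathbb{A}\setminus\mathbb{B}$ and $b^{*}\in\mathbb{B}\setminus\mathbb{A}$; then $a^{*}\in\mathcal{B}_{i}\setminus\mathbb{B}\subseteq M_{\mathbb{B}}$, $b^{*}\in\mathcal{B}_{i}\setminus\mathbb{A}\subseteq M_{\mathbb{A}}$, and incomparability gives $M_{\mathbb{A}}\nsubseteq M_{\mathbb{B}}$ and $M_{\mathbb{B}}\nsubseteq M_{\mathbb{A}}$. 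I would then run through $v\in V$ block by block --- $v$ a singleton of $\mathbb{A}$, $v\in\mathcal{B}_{i}\setminus\mathbb{A}$, and $v\notin\mathcal{B}_{i}$ --- exhibiting in each case a vertex in $C_{\mathbb{A}}(v)$ that is absent from $C_{\mathbb{B}}(v)$, drawing on $a^{*}$, $b^{*}$, or the strict non-inclusion of the $M$-blocks. This yields $POS_{\mathbb{A}}(\mathbb{B})=\emptyset$, and the same argument with $\mathbb{A}$ and $\mathbb{B}$ swapped gives $POS_{\mathbb{B}}(\mathbb{A})=\emptyset$.

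The main obstacle is precisely this uniform-over-$v$ verification, because the positive region ranges over all of $V$: one must also dispose of the singleton blocks $\{a\}$, $a\in\mathbb{A}$, and of the vertices lying outside $\mathcal{B}_{i}$, where the heuristic ``different blocks, so neither is contained in the other'' does not apply, a singleton being freely contained in a large block. I would therefore handle those cases first and separately, isolating the degenerate sub-cases $|\mathcal{B}_{i}\setminus\mathbb{A}|=1$, $|\mathcal{B}_{i}\setminus\mathbb{B}|=1$, and $\{w\notin\mathcal{B}_{i}:d(w,\mathcal{B}_{i})=\delta\}=\emptyset$, and only then reduce the remaining instances to the transversality of $M_{\mathbb{A}}$ and $M_{\mathbb{B}}$.
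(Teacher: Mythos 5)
Your part (ii) is fine and is exactly the paper's argument: only the inclusion $\mathbb{B}\subset\mathbb{A}$ is used, via $\mathbb{B}\subseteq\mathbb{A}\Rightarrow\pi_{\mathbb{A}}\preceq\pi_{\mathbb{B}}\Rightarrow C_{\mathbb{A}}(v)\subseteq C_{\mathbb{B}}(v)$ for all $v$, hence $POS_{\mathbb{A}}(\mathbb{B})=V$.

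For part (i) there is a genuine gap, and it sits exactly where you place ``the main obstacle'': the uniform-over-$v$ verification cannot be completed, because the cases you defer to ``handle first and separately'' are not degenerate sub-cases but genuine members of the positive region. With the paper's definition $POS_{\mathbb{A}}(\mathbb{B})=\{v\in V: C_{\mathbb{A}}(v)\subseteq C_{\mathbb{B}}(v)\}$, every $a\in\mathbb{A}$ has $C_{\mathbb{A}}(a)=\{a\}\subseteq C_{\mathbb{B}}(a)$, so $\mathbb{A}\subseteq POS_{\mathbb{A}}(\mathbb{B})$ and the positive region of a non-empty set is never empty; moreover your own (correct) observation that the blocks $\{w\notin\mathcal{B}_{i}: d(w,\mathcal{B}_{i})=c\}$ with $c\neq\delta$ do not depend on which subset of $\mathcal{B}_{i}$ is chosen forces $C_{\mathbb{A}}(w)=C_{\mathbb{B}}(w)$ for all such $w$, putting them in $POS_{\mathbb{A}}(\mathbb{B})$ as well. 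A concrete instance is the star $K_{1,3}$ with centre $c$ and leaves $l_1,l_2,l_3$, taking $\mathbb{A}=\{l_1\}$ and $\mathbb{B}=\{l_2\}$: both $l_1$ and $c$ lie in $POS_{\mathbb{A}}(\mathbb{B})$. So statement (i) is false as written and no completion of your case analysis exists. The paper's own proof takes a different (and also insufficient) route: it exhibits a pair $v_i,v_j\in\mathbb{A}\setminus\mathbb{B}$ with $v_i\not\equiv_{\mathbb{A}}v_j$ but $v_i\equiv_{\mathbb{B}}v_j$ and declares both positive regions empty; besides requiring two elements of $\mathbb{A}\setminus\mathbb{B}$ (incomparability guarantees only one), a single such pair shows at most that $POS_{\mathbb{B}}(\mathbb{A})\neq V$. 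Your block-structure analysis is the more careful of the two and is precisely what exposes the defect; the strongest conclusion it actually supports is of the form $POS_{\mathbb{A}}(\mathbb{B})\neq V$ under suitable non-degeneracy hypotheses, not emptiness.
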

\begin{prf}
(i) Suppose $\mathbb{A}, \mathbb{B}$  are non-empty subsets of $\mathcal{B}_i$ such that $\mathbb{A}\nsubseteq\mathbb{B}$ and $\mathbb{A}\nsupseteq\mathbb{B}$. Then there exist $v_i, v_j \in \mathbb{A}$ and $v_i, v_j \not \in \mathbb{B}$ such that $v_i\not \equiv_{\mathbb{A}} v_j$ and $v_i\equiv_{\mathbb{B}} v_j$. Therefore $POS_{\mathbb{A}}(\mathbb{B})=\emptyset$ and $POS_{\mathbb{B}}(\mathbb{A})=\emptyset$.\\
(ii) Suppose $\mathbb{A}=\mathcal{B}_i$ and $\mathbb{A}\supset\mathbb{B}$ then by definition of distance-similarity $\pi_{\mathbb{A}}\preceq \pi_{\mathbb{B}}$ implies that $POS_{\mathbb{A}}(\mathbb{B})= V$.
\qed
\end{prf}
\par
Note that, if $\mathbb{A} = \emptyset$ and $\mathbb{B}= \emptyset$ then $\pi_{\mathbb{A}}=\pi_{\mathbb{B}}=V$, therefore $C_{\mathbb{A}}(v_{i})=V= C_{\mathbb{B}}(v_{i})=V$ for all $v_{i}\in V$ which implies that $POS_{\mathbb{A}}(B)=V$ and $\gamma_{\mathbb{A}}(B)=1$. Similarly, if $\mathbb{A} = \emptyset$ and $\mathbb{B}\neq \emptyset$ then $\pi_{\mathbb{A}}=V$ and $\pi_{\mathbb{B}}= C_{\mathbb{B}}(v_{1})|C_{\mathbb{B}}(v_{2})| \cdots| C_{\mathbb{B}}(v_{l})$. Thus $\pi_{\mathbb{B}}$ is finer than $\pi_{\mathbb{A}}$ implies that $\pi_{\mathbb{B}}\subseteq \pi_{\mathbb{A}}$ this implies $POS_{\mathbb{A}}(\mathbb{B})=\emptyset$ and $\gamma_{\mathbb{A}}(B)=0$.
Now let $\mathbb{B} \subseteq \mathbb{A}$ then by above $\pi_{\mathbb{A}}\subseteq \pi_{\mathbb{B}}$ and $\gamma_{\mathbb{A}}(B)=1$.
\par
The next result examines the positive region of two distinct subsets within the vertex set of $ \Gamma( \prod_{i=1}^{k} \mathbb{Z}_2)$.
\begin{Proposition}
For $ \Gamma( \prod_{i=1}^{k} \mathbb{Z}_2)$, we have $POS_{T_{i}}(T_j) = V $.
\end{Proposition}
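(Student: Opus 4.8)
The plan is to reduce the statement to two facts already available in the paper: Proposition~\ref{t}, which asserts $\pi_{T_i}=\pi_V$ for every index in the admissible range, and Proposition~\ref{pos}, which says the positive region of one resolving set with respect to another equals $V$. First I would note that $\pi_V$ is the finest partition $v_1|v_2|\cdots|v_n$, so the equality $\pi_{T_i}=\pi_V$ forces every block $C_{T_i}(v)=[v]_{T_i}$ to be the singleton $\{v\}$. In other words, each $T_i$ separates all vertices and is therefore a resolving set of $\Gamma(\prod_{i=1}^{k}\mathbb{Z}_2)$; the same holds for $T_j$.

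Next I would unwind the definition $POS_{T_i}(T_j)=\{\,v\in V : [v]_{T_i}\subseteq [v]_{T_j}\,\}$. For an arbitrary $v\in V$ we have $[v]_{T_i}=\{v\}$ by the previous paragraph, and $v\in[v]_{T_j}$ trivially since a vertex always lies in its own $T_j$-block; hence $[v]_{T_i}=\{v\}\subseteq[v]_{T_j}$. As $v$ was arbitrary, every vertex lies in $POS_{T_i}(T_j)$, so $POS_{T_i}(T_j)=V$. Equivalently, once $T_i$ and $T_j$ have been identified as resolving sets, Proposition~\ref{pos} applies verbatim and delivers the conclusion.

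There is essentially no obstacle here; the only point requiring care is bookkeeping of indices. Proposition~\ref{t} is stated with the number of factors written as $n$ and the index range $1\le i\le n-1$, while the present proposition writes the number of factors as $k$; I would simply note that these are the same statement with $n$ renamed $k$, so that both $T_i$ and $T_j$ (for $1\le i,j\le k-1$) are the nonempty proper subsets to which Proposition~\ref{t}, and hence Proposition~\ref{pos}, legitimately apply. With that matched up, the result is immediate.
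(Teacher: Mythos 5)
Your proposal is correct and follows essentially the same route as the paper: both rest on Proposition~\ref{t} to get $\pi_{T_i}=\pi_{T_j}=\pi_V$ and then conclude the positive region is all of $V$ (the paper does this directly, you optionally route it through Proposition~\ref{pos}, which is the same observation). The extra detail about singleton blocks and the $n$ versus $k$ index bookkeeping is harmless and accurate.
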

\begin{prf}
Let \( \mathbb{A} = T_i \) for some \( i \in \{1, 2, \dots, k-1\} \). Then, from Proposition \ref{t}, \( \pi_{\mathbb{A}} = \pi_V \). Therefore, \( \pi_{T_i} = \pi_{T_j} \), which implies $POS_{\mathbb{A}}(T_j) = V.$
\end{prf}
\par

RST provides a framework for dealing with incomplete or uncertain information within a network. RST can help approximate the network structure by identifying subsets of nodes or edges that are relevant to certain properties or behaviors of interest. The lower approximation can be applied to define the core nodes that belong to a certain cluster, community, or network substructure with certainty. For instance, if we have a community detection problem in a social network, the lower approximation would include nodes that are strongly connected or exhibit high similarity with others in the same community. The upper approximation would include nodes that are on the boundary or have weaker ties, such as those that connect across multiple communities or are loosely associated with the core nodes of a cluster.
\par
We define lower and upper approximations for the vertex subset $X$.
\begin{defn}
For any $\mathbb{A}\subseteq V$, we call $\mathcal{L}_{\mathbb{A}}(X)= \{v_{i} \in V: C_{\mathbb{A}}(v_{i}) \subseteq X\}$ and $\mathcal{U}_{\mathbb{A}}(X)= \{v_{i} \in V: C_{\mathbb{A}}(v_{i})\cap X\neq\emptyset\}$ the lower and upper approximation of $X\subseteq V$. Any $X\subseteq V$ is called $\mathbb{A}$-definable/$\mathbb{A}$-exact if $\mathcal{L}_{\mathbb{A}}(X)=\mathcal{U}_{\mathbb{A}}(X)$, otherwise the set is called $\mathbb{A}$-rough.
\end{defn}
\par
The approximations discussed above exhibit several fundamental properties. Specifically, for any subsets \( X, Y \subseteq V \), the relation \( \mathcal{L}_{\mathbb{A}}(X) \subseteq X \subseteq \mathcal{U}_{\mathbb{A}}(X) \) holds. Moreover, the lower approximation of the complement of \( X \) is equal to the complement of the upper approximation of \( X \), and vice versa. Additionally, if \( X \) is a subset of \( Y \), then the lower approximation of \( X \) is contained within the lower approximation of \( Y \), and the same property holds for upper approximations.

\par
In the next result, we interpret how two distinct resolving sets can yield the same lower and upper approximations.
\begin{Proposition}
For $\mathcal{I}$, suppose $\mathbb{A}_{1}$ and $\mathbb{A}_{2}$ are two resolving sets then for any $X \subseteq V$, $\mathcal{L}_{\mathbb{A}_{1}}(X)= \mathcal{L}_{\mathbb{A}_{2}}(X)$ and $\mathcal{U}_{\mathbb{A}_{1}}(X)=\mathcal{U}_{\mathbb{A}_{2}}(X)$.
\end{Proposition}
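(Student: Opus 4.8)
The plan is to reduce both approximations to the set $X$ itself, independently of the chosen resolving set. The starting point is the observation, already recorded in the excerpt, that a resolving set of $\mathcal{G}$ is exactly a set inducing the finest partition: if $\mathbb{A}$ is a resolving set then every pair $v_i, v_j \in V$ satisfies $\gamma(v_i|\mathbb{A}) \neq \gamma(v_j|\mathbb{A})$, so $\pi_{\mathbb{A}} = v_1|v_2|\cdots|v_n = \pi_V$. By the characterization of the granules in \eqref{eq3} together with the fact (noted just after the definition of $C_{\mathbb{A}}$) that $C_{\mathbb{A}}(v_i) = \{v_i\}$ whenever $\pi_{\mathbb{A}} = \pi_V$, we get $C_{\mathbb{A}}(v_i) = \{v_i\}$ for every $v_i \in V$.

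First I would apply this to $\mathbb{A} = \mathbb{A}_1$. For an arbitrary $X \subseteq V$, the condition $C_{\mathbb{A}_1}(v_i) \subseteq X$ becomes $\{v_i\} \subseteq X$, i.e. $v_i \in X$; hence $\mathcal{L}_{\mathbb{A}_1}(X) = X$. Likewise the condition $C_{\mathbb{A}_1}(v_i) \cap X \neq \emptyset$ becomes $\{v_i\} \cap X \neq \emptyset$, again equivalent to $v_i \in X$, so $\mathcal{U}_{\mathbb{A}_1}(X) = X$. The same argument verbatim, with $\mathbb{A}_2$ in place of $\mathbb{A}_1$, gives $\mathcal{L}_{\mathbb{A}_2}(X) = X = \mathcal{U}_{\mathbb{A}_2}(X)$.

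Combining the two computations yields $\mathcal{L}_{\mathbb{A}_1}(X) = X = \mathcal{L}_{\mathbb{A}_2}(X)$ and $\mathcal{U}_{\mathbb{A}_1}(X) = X = \mathcal{U}_{\mathbb{A}_2}(X)$, which is the claim; in fact this shows the stronger statement that every $X \subseteq V$ is $\mathbb{A}$-definable for any resolving set $\mathbb{A}$, with both approximations equal to $X$. There is no real obstacle here: the only thing to be careful about is to cite the earlier equivalence between reducts and minimal resolving sets (and the remark that a resolving set, not necessarily minimal, still induces $\pi_V$ by Proposition \ref{prop1}(i)), so that the reduction to singleton granules is justified for \emph{arbitrary} resolving sets and not merely minimal ones.
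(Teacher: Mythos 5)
Your proposal is correct and rests on the same key fact as the paper's own proof, namely that any resolving set $\mathbb{A}$ induces the finest partition $\pi_{\mathbb{A}}=\pi_{V}$, so the granules $C_{\mathbb{A}}(v_i)$ do not depend on which resolving set is chosen. You go one small step further by noting that these granules are singletons and hence $\mathcal{L}_{\mathbb{A}}(X)=\mathcal{U}_{\mathbb{A}}(X)=X$ for every $X\subseteq V$, which is a slightly stronger (and cleaner) conclusion than the stated equality, but the underlying argument is the same.
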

\begin{prf}
Suppose $\mathbb{A}_{1}$ and $\mathbb{A}_{2}$ are two resolving sets, let $X\subseteq  V$ and for all $v_{i}\in V$, if $C_{\mathbb{A}_{1}}(v_{i}) \subseteq X$ then by the resolving set property $\pi_{\mathbb{A}_{1}}=v_1|v_2|\cdots|v_n =\pi_{ \mathbb{A}_{2}}$ which implies that $C_{\mathbb{A}_{2}}(v_{i}) \subseteq X$. Hence $\mathcal{L}_{\mathbb{A}_{1}}(X)= \mathcal{L}_{\mathbb{A}_{2}}(X)$. Now for all $v_{i}\in V$, if $C_{\mathbb{A}_{1}}(v_{i}) \cap X\neq\emptyset$ then by definition of reduct $\pi_{\mathbb{A}_{1}}=\pi_{ \mathbb{A}_{2}}$ implies that $C_{\mathbb{A}_{2}}(v_{i}) \cap X\neq\emptyset$. Hence $\mathcal{U}_{\mathbb{A}_{1}}(X)=\mathcal{U}_{\mathbb{A}_{2}}(X)$.
\qed
\end{prf}
\par
In the following result, we establish the relationship between \( \mathbb{A} \) and \( X \) when \( X \) is \( \mathbb{A} \)-exact.
\begin{Proposition}
For $\mathcal{I}$, suppose $\mathbb{A}, X\subseteq V$ then we have:\\
(i) If $X\subseteq \mathbb{A}$ then $X$ is $\mathbb{A}$-exact.\\
(ii) $X\subseteq V$ is $\mathbb{A}$-exact if $\pi_{\mathbb{A}}=\pi_{V}$.
\end{Proposition}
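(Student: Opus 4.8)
The plan is to deduce exactness in both parts from the elementary sandwich relation $\mathcal{L}_{\mathbb{A}}(X)\subseteq X\subseteq\mathcal{U}_{\mathbb{A}}(X)$ noted above, together with the single structural fact that the relevant granules collapse to singletons. Recall that $\mathcal{F}(w,a)=d(w,a)$ and, in the connected graph $\mathcal{G}$, $d(w,a)=0$ if and only if $w=a$; hence for any $a\in\mathbb{A}$ the only vertex $w$ with $\mathcal{F}(w,a)=\mathcal{F}(a,a)=0$ is $w=a$, so $C_{\mathbb{A}}(a)=\{a\}$. (This refines the observation in the excerpt that the vertices of $\mathbb{A}$ have pairwise distinct representations: in fact no vertex of $V$ at all can be $\mathbb{A}$-indiscernible from a vertex of $\mathbb{A}$.)

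For (i), assume $X\subseteq\mathbb{A}$. Since $\mathcal{L}_{\mathbb{A}}(X)\subseteq X\subseteq\mathcal{U}_{\mathbb{A}}(X)$ always holds, it suffices to prove $\mathcal{U}_{\mathbb{A}}(X)\subseteq\mathcal{L}_{\mathbb{A}}(X)$. Take $v_i\in\mathcal{U}_{\mathbb{A}}(X)$ and choose $v_j\in C_{\mathbb{A}}(v_i)\cap X$; then $v_j\in\mathbb{A}$, so $C_{\mathbb{A}}(v_j)=\{v_j\}$ by the fact above. From $v_j\in C_{\mathbb{A}}(v_i)$ we get $v_i\equiv_{\mathbb{A}}v_j$, hence by \eqref{eq3} $C_{\mathbb{A}}(v_i)=C_{\mathbb{A}}(v_j)=\{v_j\}$; reflexivity of $\equiv_{\mathbb{A}}$ forces $v_i=v_j\in X$, so $C_{\mathbb{A}}(v_i)=\{v_i\}\subseteq X$, i.e. $v_i\in\mathcal{L}_{\mathbb{A}}(X)$. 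Combining the two inclusions yields $\mathcal{L}_{\mathbb{A}}(X)=X=\mathcal{U}_{\mathbb{A}}(X)$, so $X$ is $\mathbb{A}$-exact.

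For (ii), the hypothesis $\pi_{\mathbb{A}}=\pi_{V}$ says $\pi_{\mathbb{A}}$ is the finest partition, so $C_{\mathbb{A}}(v)=\{v\}$ for every $v\in V$. Then for an arbitrary $X\subseteq V$ one reads off directly $\mathcal{L}_{\mathbb{A}}(X)=\{v\in V:\{v\}\subseteq X\}=X$ and $\mathcal{U}_{\mathbb{A}}(X)=\{v\in V:\{v\}\cap X\neq\emptyset\}=X$, hence $X$ is $\mathbb{A}$-exact. I do not expect a genuine obstacle here; the only point needing care is the justification that membership in $\mathbb{A}$ (respectively $\pi_{\mathbb{A}}=\pi_{V}$) collapses the granule to a singleton, which is immediate from the distance-based information map since $d(w,a)=0\iff w=a$ in a connected graph.
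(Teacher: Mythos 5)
Your proof is correct and follows essentially the same route as the paper: both parts rest on the observation that the relevant granules collapse to singletons (for (i) because $d(w,a)=0$ only when $w=a$, so $C_{\mathbb{A}}(a)=\{a\}$ for $a\in\mathbb{A}$; for (ii) because $\pi_{\mathbb{A}}=\pi_V$ makes every class a singleton). Your write-up is in fact more careful than the paper's one-line argument for (i), since you explicitly verify $\mathcal{U}_{\mathbb{A}}(X)\subseteq X$ rather than just asserting it.
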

\begin{prf}
(i) Let $X\subseteq \mathbb{A}$ then for all $v_i \in \mathbb{A}$, $C_{\mathbb{A}}(v_{i})=\{v_i\}$ then $\mathcal{L}_{\mathbb{A}}(X)=X=\mathcal{U}_{\mathbb{A}}(X)$.\\
(ii) Let $X\subseteq V$ and $\pi_{\mathbb{A}}=\pi_{V}$ then all the classes are singletons implies $\mathcal{L}_{\mathbb{A}}(X)=X=\mathcal{U}_{\mathbb{A}}(X)$ hence $X$ is $\mathbb{A}$-exact.
\qed
\end{prf}
In the previous section, we explored partitions and introduced a method based on the definition of reducts to identify all resolving sets. Now, we propose an alternative approach that utilizes the discernibility function derived from the distance-based discernibility matrix to determine all minimal resolving sets.
\section{The Discernibility Matrix and Resolving sets}
In this section, we propose a method of finding all minimal resolving sets of a graph using the notion of discernibility matrix. For this purpose, we first introduce the distance-based discernibility relation on the vertex set $V$, two vertices $v_{i}$ and $v_j$ in $V$ are said to be discernible with respect to a vertex $w$ denoted as $v_{i} \not\equiv_{w} v_j$ if and only if $d(v_i, w)\neq d(v_j, w)$. Based on this, we define the distance-based discernibility matrix $\Delta_{V}$ associated with the distance-based discernibility relation. In $\Delta_{V}$, each entry corresponds to the vertices that resolve a given pair of vertices based on their distance.
\begin{defn}\label{def}
For $\mathcal{I}$, the distance-based discernibility matrix $\Delta_{V}$ is the $ |V| \times |V|$ matrix, and the entries are $\Delta_{V}(v_{i}, v_j)$ where
\begin{equation}\label{2}
\Delta_{V}(v_{i}, v_j)=\{w\in V: d(v_{i}, w) \neq d(v_j, w) \}.
\end{equation}
\end{defn}
Note that $\Delta_{V}$ is symmetric, meaning $\Delta_{V}(v_{i}, v_j) = \Delta_{V}(v_j, v_{i})$ and $\Delta_{V}(v_{i}, v_{i}) = \emptyset$. For the remainder of this section, we will consider a lower triangular matrix. Vertices with same entries in the discernibility matrix may be structurally equivalent, meaning they play similar roles in the network. \( DISC(\mathcal{I}) \) represent the set of all distinct entries in the distance-based discernibility matrix \( \Delta_{V} \).
\par
The relationship between the entries in the distance-based discernibility matrix and the indiscernibility of vertices can be described through the following properties.
Suppose $\mathbb{A}\subseteq V$ and $v_{i}, v_j\in V$, we have\\
(i) If $\mathbb{A}= \Delta_{V}(v_{i}, v_j)$ then $v_i\equiv_{V\setminus \mathbb{A}} v_j$.\\
(ii) If $v_{i}\equiv_{V\setminus \mathbb{A}} v_j$ then $\Delta_{V}(v_{i}, v_j)\subseteq \mathbb{A}.$\\
(iii) $\Delta_{V}(v_{i}, v_j)\cap \mathbb{A}= \emptyset$ if and only if  $v_{i}\equiv_{\mathbb{A}} v_j$.\\
These properties reveal the fundamental relationship between discernibility and indiscernibility of vertices in a graph, as captured by the distance-based discernibility matrix. Together, they provide a structured way to understand how subsets of vertices contribute to distinguishing or grouping vertices based on their distance representations.
\par
It is evident from the Definition \ref{def} that the discernibility of two vertices depends on the distances between them. The following result provides a representation of the entries in the distance-based discernibility matrix of \( \mathcal{I} \) corresponding to the simple undirected graph \( \mathcal{G} \).
\begin{Theorem}
For $\mathcal{I}$, the entries of $\Delta_V$ are
\[
  \Delta_{V}(v_{i}, v_j)= \left.
  \begin{cases}
    V \setminus \{X_{t}(v_i)\cap X_{t}(v_j)\}, & \text{if }  v_i\neq v_j  \\
    \emptyset , & \text{if }  v_i=v_j
    \end{cases}
  \right.
\]
where $X_{t}(v_i)=\{w\in V: d(v_i, w)=t\}$ for all $t\in \{1, 2, 3\cdots diam(\mathcal{G})\}.$
\end{Theorem}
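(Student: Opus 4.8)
The plan is to unwind the definition of $\Delta_V(v_i,v_j)$ and rewrite the complement of the set $\{w \in V : d(v_i,w) \neq d(v_j,w)\}$ in terms of the distance-layer sets $X_t$. The case $v_i = v_j$ is immediate: $d(v_i,w) = d(v_j,w)$ for every $w$, so $\Delta_V(v_i,v_i) = \emptyset$, matching the stated formula. So I would concentrate on the case $v_i \neq v_j$.

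For $v_i \neq v_j$, I would argue that $w \notin \Delta_V(v_i,v_j)$ if and only if $d(v_i,w) = d(v_j,w)$, which happens exactly when there is some common value $t$ with $d(v_i,w) = t = d(v_j,w)$, i.e. $w \in X_t(v_i) \cap X_t(v_j)$ for that $t$. Ranging over all admissible $t$, the set of vertices \emph{not} discerning $v_i$ and $v_j$ is $\bigcup_{t} \bigl(X_t(v_i) \cap X_t(v_j)\bigr)$, where $t$ runs over $\{0,1,\dots,\mathrm{diam}(\mathcal{G})\}$ (using the convention $X_0(v_i) = \{v_i\}$ from the information map). Hence
\[
\Delta_V(v_i,v_j) = V \setminus \bigcup_{t} \bigl(X_t(v_i) \cap X_t(v_j)\bigr),
\]
and taking complements inside the union one may also write this as $\bigcap_t \bigl(V \setminus (X_t(v_i) \cap X_t(v_j))\bigr)$. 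Matching this against the displayed formula, I would note that the theorem's notation $V \setminus \{X_t(v_i) \cap X_t(v_j)\}$ is shorthand for removing, for each relevant $t$, the common $t$-layer vertices; since a vertex $w$ lies in exactly one layer $X_{d(v_i,w)}(v_i)$ with respect to $v_i$ and exactly one layer with respect to $v_j$, the layers for distinct $t$ are disjoint, so the union/intersection bookkeeping is clean and no vertex is double-counted.

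The main technical point to get right — and the place where I expect the only real friction — is the boundary case $t = 0$: the vertices $v_i$ and $v_j$ themselves always belong to $\Delta_V(v_i,v_j)$ when $v_i \neq v_j$, because $d(v_i,v_i) = 0 \neq d(v_j,v_i)$ and symmetrically. This must be consistent with the formula, and it is: $X_0(v_i) \cap X_0(v_j) = \{v_i\} \cap \{v_j\} = \emptyset$ since $v_i \neq v_j$, so neither $v_i$ nor $v_j$ is removed from $V$, i.e. both survive in $\Delta_V(v_i,v_j)$ as required. Once this alignment is checked, the proof is just the set-theoretic identity above together with the definition of $\Delta_V$, so I would present it as a short direct computation: expand the definition, pass to complements, identify the non-discerning vertices as the union of common distance layers, and simplify.
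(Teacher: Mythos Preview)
Your proposal is correct and follows essentially the same route as the paper's proof: both arguments unwind the definition of $\Delta_V(v_i,v_j)$ and show that a vertex $w$ fails to discern $v_i$ and $v_j$ precisely when it lies in a common distance layer $X_t(v_i)\cap X_t(v_j)$, then take complements. Your treatment is in fact slightly more careful than the paper's, since you make explicit that the formula should be read as a union over $t$ and you check the $t=0$ boundary so that $v_i,v_j$ themselves land in $\Delta_V(v_i,v_j)$ as required.
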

\begin{proof}
(i) Let $ v_i\neq v_j$, suppose $k\in X_{t}(v_i)$ and $k\in X_{t}(v_j)$ where $t$ is fixed then $d(v_i, k)=t=d(v_j, k)$ by definition of distance-based discernibility matrix $k\notin \Delta_{V}(v_{i}, v_j)$. Now suppose $k\in X_{t}(v_i)$ and $k\in X_{s}(v_j)$ then $d(v_i, k)=t\neq d(v_j, k)=s$ implies that $k\in \Delta_{V}(v_{i}, v_j)$ and we conclude that $\Delta_{V}(v_{i}, v_j)= V \setminus \{X_{t}(v_i)\cap X_{t}(v_j)\}$.\\
(ii) Let $ v_i= v_j$ (comparing vertex with itself), then $d(v_i, k)=d(v_j, k)$ for $k\in V$ implies that $\Delta_{V}(v_{i}, v_j)=\emptyset$.
\end{proof}
For $\mathcal{G}$ with distance-similar vertices, we establish the following result.
\begin{Corollary}\label{cor}
For $\mathcal{G}$, if all the vertices are distance-similar then $\Delta_{V}(v_{i}, v_j)=\{v_i, v_j\}$ for all $v_i, v_j\in V$.
\end{Corollary}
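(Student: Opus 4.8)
The plan is to specialize the preceding theorem, which gives $\Delta_{V}(v_i,v_j)=V\setminus\{X_t(v_i)\cap X_t(v_j)\}$ for $v_i\ne v_j$, to the case where every pair of vertices is distance-similar. Recall that $v_i$ and $v_j$ being distance-similar means $d(v_i,w)=d(v_j,w)$ for all $w\in V\setminus\{v_i,v_j\}$. So the whole argument reduces to computing, for a fixed ordered pair $v_i\ne v_j$, the set $X_t(v_i)\cap X_t(v_j)$ over all $t\in\{1,2,\dots,\mathrm{diam}(\mathcal{G})\}$, and then taking the union over $t$ and complementing in $V$.

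First I would take an arbitrary $w\in V\setminus\{v_i,v_j\}$. By distance-similarity $d(v_i,w)=d(v_j,w)=:t$, so $w\in X_t(v_i)\cap X_t(v_j)$, hence $w$ lies in $\bigcup_t\big(X_t(v_i)\cap X_t(v_j)\big)$ and therefore $w\notin\Delta_V(v_i,v_j)$. This shows $\Delta_V(v_i,v_j)\subseteq\{v_i,v_j\}$. Next I would check that neither $v_i$ nor $v_j$ lies in any common intersection: for $w=v_i$ we have $d(v_i,v_i)=0$, which is not an admissible value of $t$ (the index ranges over $\{1,\dots,\mathrm{diam}(\mathcal{G})\}$), while $d(v_j,v_i)\ge 1$; so $v_i$ is not simultaneously in $X_t(v_i)$ and $X_t(v_j)$ for any $t\ge 1$. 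Concretely, $d(v_i,v_i)=0\neq d(v_j,v_i)$, so by the definition of the discernibility matrix $v_i\in\Delta_V(v_i,v_j)$, and symmetrically $v_j\in\Delta_V(v_i,v_j)$. Combining the two inclusions gives $\Delta_V(v_i,v_j)=\{v_i,v_j\}$.

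This is essentially a direct corollary, so I do not anticipate a genuine obstacle; the only point needing a little care is the boundary behavior at the diagonal, i.e.\ making sure the $t=0$ case is correctly excluded by the index range in the parent theorem so that the self-distance $d(v_i,v_i)=0$ does not accidentally place $v_i$ into $X_0(v_i)\cap X_0(v_j)$ — but since the theorem only quantifies $t$ over $\{1,\dots,\mathrm{diam}(\mathcal{G})\}$, and $v_i$ genuinely fails to be resolved only by vertices other than $v_i,v_j$, this is handled cleanly. One should also implicitly assume $\mathcal{G}$ is connected (so that all distances are finite and the distance-similar condition is meaningful), which is the standing assumption in this section. I would present the proof as two short inclusions, $\Delta_V(v_i,v_j)\subseteq\{v_i,v_j\}$ from distance-similarity applied to every third vertex, and $\{v_i,v_j\}\subseteq\Delta_V(v_i,v_j)$ from the fact that a vertex has distance $0$ to itself but positive distance to the other, then conclude equality.
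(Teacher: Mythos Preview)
Your proposal is correct and matches the paper's intended approach: the paper states this corollary without a separate proof, treating it as an immediate consequence of the preceding theorem $\Delta_V(v_i,v_j)=V\setminus\bigcup_t\bigl(X_t(v_i)\cap X_t(v_j)\bigr)$, and your two-inclusion argument is exactly the natural way to fill in those details. The only caveat is that the corollary as stated tacitly assumes $v_i\neq v_j$ (since $\Delta_V(v_i,v_i)=\emptyset$ by the theorem), which you handle correctly.
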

\begin{Remark}
For the information table $\mathcal{I}$ associated with a path graph $P_n$ or a cycle graph $C_n$, for any pair of vertices $v_i$ and $v_j$ in the vertex set $V$, the entries of $\Delta_V(v_i, v_j)$ is defined as the set of vertices $V \setminus \{w\}$ such that $d(v_i, w)=d(v_j, w)$ for any $w\in V$. In the case of complete graph, for any $v_i$ and $v_j$ in $K_n$, the entries of $\Delta_V(v_i, v_j)=\{v_i, v_j\}$, since in a complete graph, all vertices are equidistant from one another, with the distance being 1.
\end{Remark}
\par
An information table can be associated with a numerical discernibility matrix, where each entry indicates the number of vertices that can distinguish the corresponding pair of vertices.
\begin{defn}
The numerical discernibility matrix associated with $\mathcal{I}$ is denoted by $\mathcal{N}_{V}$ and the entries in $ \mathcal{N}_{V}$ are defined as $\mathcal{N}(v_{i}, v_j) =|\Delta_{V}(v_{i}, v_j)|$.
\end{defn}
\begin{Remark}
If \( \mathcal{N}(v_{i}, v_j) > 0 \), there exists at least one vertex in \( V \) that distinguishes \( v_i \) from \( v_j \) based on their distances. If \( \mathcal{N}(v_{i}, v_j) = 0 \), the pair \( v_i \) and \( v_j \) is indistinguishable, as their distances to all other vertices in \( V \) are identical.
\end{Remark}
\par The following result provides the bounds for the entries in the numerical discernibility matrix associated with $\mathcal{I}$.
\begin{Proposition}
For $\mathcal{I}$, the minimum non-zero entry in $\mathcal{N}_{V}$ is 2 and maximum entry in $\mathcal{N}_{V}$ is $|V|$.
\end{Proposition}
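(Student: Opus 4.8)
The plan is to prove the two one-line bounds $\mathcal{N}(v_i,v_j)\le |V|$ for every pair and $\mathcal{N}(v_i,v_j)\ge 2$ for every pair with $v_i\ne v_j$, and then to exhibit graphs in which each bound is met with equality, so that $2$ and $|V|$ are the genuine extreme values taken by the entries of $\mathcal{N}_V$ (the non-zero ones in the first case).

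For the upper bound I would argue directly from Definition \ref{def}: the set $\Delta_V(v_i,v_j)=\{w\in V:d(v_i,w)\ne d(v_j,w)\}$ is by construction a subset of $V$, hence $\mathcal{N}(v_i,v_j)=|\Delta_V(v_i,v_j)|\le |V|$. Sharpness follows from an explicit example; taking $\mathcal{G}=P_n$ and the two consecutive vertices $v_1,v_2$, the computation $d(v_1,v_k)=k-1$ versus $d(v_2,v_k)=|k-2|$ shows these distances differ for every $k$, so $\Delta_V(v_1,v_2)=V$ and $\mathcal{N}(v_1,v_2)=|V|$, consistent with the earlier remark on path graphs.

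For the lower bound, first note that $\mathcal{N}(v_i,v_i)=|\Delta_V(v_i,v_i)|=0$, so a non-zero entry forces $v_i\ne v_j$; for such a pair (with $\mathcal{G}$ connected) $d(v_i,v_j)\ge 1$ while $d(v_i,v_i)=0$, whence $v_i\in\Delta_V(v_i,v_j)$, and symmetrically $v_j\in\Delta_V(v_i,v_j)$. Thus $\{v_i,v_j\}\subseteq\Delta_V(v_i,v_j)$ and $\mathcal{N}(v_i,v_j)\ge 2$, which rules out the value $1$; equality holds exactly when $v_i$ and $v_j$ are distance-similar, since then $\Delta_V(v_i,v_j)=\{v_i,v_j\}$ by Corollary \ref{cor} (or by the remark on $K_n$).

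There is essentially no obstacle here beyond bookkeeping. The only point that deserves a sentence is \emph{why} the non-zero minimum is $2$ rather than $1$: a vertex lies at distance $0$ from itself and from no other vertex, so each of $v_i$ and $v_j$ always resolves the pair $(v_i,v_j)$; this simultaneously excludes the value $1$ and, together with Corollary \ref{cor}, certifies that $2$ is attained. Laying out the two extremal examples cleanly is what upgrades the inequalities to the exact statement that the minimum non-zero entry is $2$ and the maximum entry is $|V|$.
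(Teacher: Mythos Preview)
Your proof is correct and in fact more complete than the paper's. Both arguments use distance-similar (twin) vertices to witness the value $2$, so on the lower end the approaches coincide; but you additionally supply the inequality $\{v_i,v_j\}\subseteq\Delta_V(v_i,v_j)$, which explains \emph{why} no non-zero entry can be $1$, whereas the paper only exhibits an instance of $2$ without ruling out smaller values. On the upper end the paper asserts that if $v_i,v_j$ lie on a diametral path then $\Delta_V(v_i,v_j)=V$; this is not true in general (for example, antipodal vertices in $C_4$, or the endpoints of an odd path, admit a vertex equidistant from both), so your choice of adjacent vertices in $P_n$ is a safer and cleaner witness for attaining $|V|$. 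Your explicit observation that $\Delta_V(v_i,v_j)\subseteq V$ makes the upper bound itself immediate, which the paper leaves implicit.
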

\begin{prf}
Suppose $v_i, v_j \in V$ such that $v_i$ and $v_j$ are distance similar then $\Delta_{V}(v_{i}, v_j)=\{v_i, v_j\}$ implies that minimum entry in $\mathcal{N}_{V}$ is 2. If there exist $v_i, v_j \in V$ such that $v_i$ and $v_j$ are on a diametral path then $\Delta_{V}(v_{i}, v_j)=V$ and maximum entry in $\mathcal{N}_{V}$ is $|V|$.
\qed
\end{prf}
\par
%\begin{equation*}
%\mathbb{D}(\mathcal{I})=(DISC(\mathcal{I}), \subseteq)
%\end{equation*}
%where $DISC(\mathcal{I})=\{\Delta_{V}(v_{i}, v_k): \Delta_{V}(v_{i}, v_k)\neq\emptyset\}$.
%\end{defn}
%%which is a sub-poset of the power set lattice $(\textbf{P}(Att), \subseteq)$.\\
\par
We now propose a novel approach for finding all minimal resolving sets in graphs by using the techniques of attribute reduction which involves discernibility function and distance-based discernibility matrix. The discernibility function helps in finding these sets by using the elements of the distance-based discernibility matrix of an information system.
\begin{defn}
The discernibility function $\zeta_{dis}$ for the information table $\mathcal{I}$ is defined as follows:
\begin{center}
$\zeta_{dis}=\wedge\{\vee \Delta_{V}(v_{i}, v_j): \Delta_{V}(v_{i}, v_j)\neq \emptyset\}$,
\end{center}
where $\vee \Delta_{V}(v_{i}, v_j)$ denote the disjunction of all vertices in $\Delta_{V}(v_{i}, v_j)$, which demonstrates that each vertex in $\Delta_{V}(v_{i}, v_j)$ can distinguish $v_{i}$ and $v_j$. The formula $\wedge\{\vee \Delta_{V}(v_{i}, v_j)\}$ is the conjunction of all $\vee \Delta_{V}(v_{i}, v_j)$.
\end{defn}
The discernibility function generates attribute subsets (minimal resolving sets) that distinguish all elements of $\mathcal{I}$. To find minimal elements in the distance-based discernibility matrix without computing the entire matrix, we can focus on an incremental approach. This involves examining only necessary pairs of objects to identify which attributes are essential for discernibility. This can be achieved by iterating over pairs of objects and updating a list of necessary attributes dynamically. The Algorithm 2 computes all the resolving sets of the information table $\mathcal{I}$ using distance-based discernibility matrix.
\floatname{algorithm}{Algorithm}
\begin{algorithm}[h]\label{alg2}
\caption{Resolving sets of a graph using distance-based discernibility matrix}
\begin{algorithmic}
\State \algorithmicrequire $DISC(\mathcal{I})=\{\mathbb{A}_{i}: 1\leq i\leq s\}$ \algorithmiccomment{Collection of all the distinct entries of the distance-based discernibility matrix of $\mathcal{I}$.}
\State \algorithmicensure $RED$ \algorithmiccomment{Resolving set (reduct) of $\mathcal{I}$.}%
%\\
%\State \algorithmiccomment{Collecting minimal entries from $DISC(\mathcal{G})$.}
\State Initialize: $i = 1$
\For{$i \leq s$}
\State Initialize: $j = 2$
\For{$j \leq s$}
\If{$\mathbb{A}_{j}\subset \mathbb{A}_{i}$}
        \State $DISC(\mathcal{I})=DISC(\mathcal{I})\setminus \mathbb{A}_{i}$
        \State $j=j+1$
\Else
        \State $DISC(\mathcal{I})=DISC(\mathcal{I})$
        \State $j=j+1$
\EndIf
\EndFor
        \State $i=i+1$
\EndFor
\State Return $DISC(\mathcal{I})$
%\State $Ess(\mathcal{I})=DISC(\mathcal{I})$ \algorithmiccomment{$Ess(I)$ obtained from the above nested for loop}
\State $DISC(\mathcal{I})=\{\mathbb{A}_{l}: 1\leq l\leq h\}$ \algorithmiccomment{updated $DISC(\mathcal{I})$ }
\State Initialize: $l=1$
\State Initialize: $RED=\emptyset$
\State Initialize: $R=\emptyset$
\For{$l \leq h$}
%\State $E_{l}=\{F_{i}\subset A_{l}\mid 1\leq i\leq r\}$\algorithmiccomment{Set of all proper subsets of the set $A_{l}\in Ess(I)$}
        \If{$R\cap \mathbb{A}_{l}=\emptyset$ }
        \State $R=R\cup\{v\}$, for some $v\in \mathbb{A}_{l}$
        \State $l=l+1$
\Else
        \State $RED=RED\cup R$
        \State $l=l+1$
\EndIf
\EndFor
\State Return $R$
\end{algorithmic}
\end{algorithm}

\par
The following example illustrates the process of identifying resolving sets within a graph using the discernibility function. This approach provides a structured method for finding all the minimal resolving sets in a graph by analyzing pairwise relationships between vertices and their distinguishing characteristics. The resolving sets in the following example coincide with those derived from previous sections.
\begin{Example}
Consider a common example in social network analysis the collaboration network among coworkers in a workplace. This network demonstrates how employees form connections through projects, shared tasks, or informal interactions. The nodes of the network are employees with two nodes being connected by an edge if the corresponding employees have previously collaborated on a project.
The vertex set is $V=\{Sarah, John, Emily, Michael, Ania\}$ and the edges are $(Sarah, John)$, $(John, Emily)$, $(Emily, Michael)$, $(Michael, Ania)$. The distinct nonzero entries in the distance-based discernibility matrix are given as follows:
\begin{center}
$DISC(\mathcal{I})=\{V, \{Sarah, Emily, Michael, Ania\}, \{Sarah, John, Michael, Ania\}, \{Sarah, John, Emily, Ania\}\}$.
\end{center}
We can find these resolving sets by using Algorithm 2. We start from applying discernibility function on $DISC(\mathcal{I})$ as
\begin{center}
$\zeta_{dis}= \{\{Sarah\vee John \vee Emily\vee Michael\vee Ania\}\wedge\{Sarah\vee Emily\vee Michael\vee Ania\}\wedge\{Sarah\vee John\vee Michael\vee Ania\}\wedge\{Sarah\vee John\vee Emily\vee Ania\}\}$.
\end{center}
By applying conjunction and disjunction, the obtained resolving sets are
\begin{center}

$Resolving\,\,Sets=\{\{Sarah\}, \{Ania\}, \{John, Emily\}, \{John, Michael\}, \{Emily, Michael\}\}$.
\end{center}
\end{Example}
\par
The following result, indicates that automorphisms preserve the structure of the discernibility matrix.
\begin{Proposition}
Let $\phi \in \text{Aut}(\mathcal{G})$ be an automorphism of a graph $\mathcal{G}$. Then the discernibility matrix $\Delta_{V}$ satisfies $\Delta_{\phi(V)}(\phi(v_i), \phi(v_j)) = \Delta_{V}(v_i, v_j),$ $\forall v_i, v_j \in V.$
\end{Proposition}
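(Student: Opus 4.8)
The plan is to show that applying an automorphism $\phi$ to both the pair of vertices and the ambient vertex set leaves each discernibility‐matrix entry unchanged as a set. The key fact to invoke is that an automorphism is an isometry: for all $u,v\in V$ we have $d(u,v)=d(\phi(u),\phi(v))$. Since $\phi$ is a bijection on $V$, it also acts as a bijection on the index set of the matrix, so $\Delta_{\phi(V)}$ is just $\Delta_V$ re-indexed via $\phi$; the content of the claim is that the two agree entry-by-entry after this relabelling.

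First I would fix $v_i,v_j\in V$ and unwind the definitions. By Definition \ref{def}, $\Delta_{\phi(V)}(\phi(v_i),\phi(v_j))=\{w'\in \phi(V): d(\phi(v_i),w')\neq d(\phi(v_j),w')\}$. Since $\phi$ is a bijection on $V$, every such $w'$ is of the form $w'=\phi(w)$ for a unique $w\in V$, so I can rewrite this set as $\{\phi(w): w\in V,\ d(\phi(v_i),\phi(w))\neq d(\phi(v_j),\phi(w))\}$. Now applying the isometry property, $d(\phi(v_i),\phi(w))=d(v_i,w)$ and $d(\phi(v_j),\phi(w))=d(v_j,w)$, so the condition $d(\phi(v_i),\phi(w))\neq d(\phi(v_j),\phi(w))$ is equivalent to $d(v_i,w)\neq d(v_j,w)$, i.e. $w\in\Delta_V(v_i,v_j)$. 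Hence $\Delta_{\phi(V)}(\phi(v_i),\phi(v_j))=\{\phi(w): w\in\Delta_V(v_i,v_j)\}=\phi(\Delta_V(v_i,v_j))$.

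At this point one must be slightly careful about what the statement is literally asserting. Read verbatim, $\Delta_{\phi(V)}(\phi(v_i),\phi(v_j))=\Delta_V(v_i,v_j)$ holds as written only when $\phi(\Delta_V(v_i,v_j))=\Delta_V(v_i,v_j)$, which is not true for a general automorphism; the precise and intended content is the equivariance $\Delta_{\phi(V)}(\phi(v_i),\phi(v_j))=\phi(\Delta_V(v_i,v_j))$, so that the discernibility matrix is preserved up to the induced relabelling of rows, columns and entries. I would state the proof at this level of precision, noting that since $\phi(V)=V$ the matrix $\Delta_{\phi(V)}$ is genuinely the same matrix with rows, columns and cell-contents permuted consistently by $\phi$, which is exactly the sense in which automorphisms "preserve the structure of the discernibility matrix." The only real obstacle is this bookkeeping subtlety — the mathematical core is a one-line consequence of Proposition \ref{auto}'s underlying observation that $\phi$ is an isometry, combined with bijectivity of $\phi$ to pass the substitution $w'=\phi(w)$ back and forth.
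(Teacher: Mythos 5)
Your proof is correct and follows essentially the same route as the paper: it invokes the isometry property of $\phi$ together with its bijectivity to rewrite $\Delta_{\phi(V)}(\phi(v_i),\phi(v_j))$ as $\{\phi(w) : w \in \Delta_V(v_i,v_j)\}$. Your closing remark that the equality should literally read $\phi(\Delta_V(v_i,v_j))$ rather than $\Delta_V(v_i,v_j)$ is well taken --- the paper's own proof silently identifies these two sets in its final step, so your version is the more precise statement of what is actually proved (equivariance, i.e.\ preservation of the matrix up to the relabelling induced by $\phi$).
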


\begin{prf}
Let $\phi \in \text{Aut}(\mathcal{G})$, by definition of $\phi$, $d(v_i, v_j) = d(\phi(v_i), \phi(v_j)),$ $\forall v_i, v_j \in V.$ Since $\phi$ is a bijection, this directly implies
$\Delta_{\phi(V)}(\phi(v_i), \phi(v_j)) = \{ \phi(v_k) \mid v_k \in V, \, d(\phi(v_i), \phi(v_k)) \neq d(\phi(v_j), \phi(v_k)) \}=\Delta_V(v_i, v_j).
$
\qed
\end{prf}
\par
The next result establishes a connection between the resolving set and the entries of the distance-based discernibility matrix.
\begin{Remark}
For $\mathcal{I}$, let $\mathbb{A}\subseteq V$ be a resolving set and $v_{i}, v_j\in V$, we have\\
(i) $\Delta_{V}(v_{i}, v_j) \cap \mathbb{A} \neq \emptyset$, for all $v_i, v_j \in V$.\\
(ii) If $\mathbb{A}\subseteq \Delta_{V}(v_{i}, v_j)$, then $v_{i}$ and $v_j$ are resolved by $\mathbb{A}$.
\end{Remark}
\par
Next, we provide the representation of the entries in the distance-based discernibility matrix associated with \( \Gamma(\mathbb{Z}_{n}) \).
\begin{Theorem}
For $\Gamma(\mathbb{Z}_{n})$, the entries of $\Delta_V$ are
\[
  \Delta_{V}(v_{i}, v_j)= \left.
  \begin{cases}
   \{v_i, v_j\}, & \text{if }  v_i, v_j \in \mathcal{B}_i \\
\emptyset , & \text{if }  v_i=v_j\\
 (V\setminus \{w:gcd(w, v_i)\neq 1\wedge gcd(w. v_j)\neq 1\})\cup\{u:u\cdot v_i=0\vee u\cdot v_j=0\}\cup\{v_i, v_j\} & \text{otherwise }
\end{cases}
  \right.
\]
\end{Theorem}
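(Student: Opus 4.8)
The plan is to reduce the statement to a distance comparison and then translate distances in $\Gamma(\mathbb{Z}_n)$ into the zero‑product and gcd language of the formula. By Definition \ref{def} we have $\Delta_V(v_i,v_j)=\{w\in V:d(v_i,w)\neq d(v_j,w)\}$ and $\Delta_V(v_i,v_i)=\emptyset$, so the case $v_i=v_j$ is immediate, and for $v_i\neq v_j$ it suffices to decide, vertex by vertex, whether $d(v_i,w)=d(v_j,w)$; recall that $diam(\Gamma(\mathbb{Z}_n))\le 3$, so every nonzero such distance is $1$, $2$, or $3$.

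For the first case, with $v_i,v_j$ in a common distance‑similar class $\mathcal{B}_i$, the argument is short. By Proposition \ref{prop2}(ii), $d(v_i,v_j)=2$, so $d(v_i,v_i)=0\neq 2=d(v_j,v_i)$, and symmetrically, whence $v_i,v_j\in\Delta_V(v_i,v_j)$; and since $v_i$ and $v_j$ are distance‑similar, $d(v_i,w)=d(v_j,w)$ for every other $w$, so nothing else lies in $\Delta_V(v_i,v_j)$. This is the local form of Corollary \ref{cor} and gives $\Delta_V(v_i,v_j)=\{v_i,v_j\}$.

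The substance is the third case, $v_i\in\mathcal{B}_s$ and $v_j\in\mathcal{B}_t$ with $s\neq t$. I would fix $w$ and run a case analysis driven by Proposition \ref{prop2}. The vertices $w=v_i$ and $w=v_j$ are trivially in $\Delta_V(v_i,v_j)$, contributing the term $\{v_i,v_j\}$. If $w$ is adjacent to $v_i$ or to $v_j$, i.e., $w\cdot v_i=0$ or $w\cdot v_j=0$ in $\mathbb{Z}_n$, then one of the two distances equals $1$, and Proposition \ref{prop2}(i),(iii) is used to compare it with the other; this is the role of the term $\{u:u\cdot v_i=0\vee u\cdot v_j=0\}$. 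For the remaining $w$ (distinct from, and non‑adjacent to, both $v_i$ and $v_j$) both distances lie in $\{2,3\}$, and the key fact is that $d(v_i,w)=2$ precisely when $v_i$ and $w$ have a common neighbour: if a prime $p$ divides $n$ and divides both (representatives of) $v_i$ and $w$, then $n/p$ is a zero‑divisor adjacent to both that can be taken distinct from $v_i,w$, forcing $d(v_i,w)=2$, while absent such a shared prime the diameter bound forces $d(v_i,w)=3$. Doing the same for $v_j$ and using that membership in a distance‑similar class is governed by $\gcd(n,\cdot)$ (Theorem \ref{cardinality} and the discussion before Theorem \ref{thmr}), one checks that, among these remaining vertices, $d(v_i,w)=d(v_j,w)$ holds exactly for the $w$ in $\{w:\gcd(w,v_i)\neq 1\wedge\gcd(w,v_j)\neq 1\}$; taking the complement in $V$ and adjoining the two exceptional sets gives the asserted expression.

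The step I expect to be the main obstacle is this last translation. One has to keep the distance‑$2$ versus distance‑$3$ alternative perfectly aligned with the gcd conditions, verify that for zero‑divisors of $\mathbb{Z}_n$ the condition $\gcd(w,v_i)\neq 1$ really is the right surrogate for ``$v_i$ and $w$ share a common neighbour'', and carefully account for the degenerate vertices — those equal to, or adjacent to, $v_i$ or $v_j$, and especially those adjacent to both, where the naive reasoning breaks down. A clean way to keep control is to first partition $V$ by the distance‑similar class of $w$, use Proposition \ref{prop2}(iii) to collapse ``distance from $v_i$ to $w$'' to ``distance between the classes of $v_i$ and $w$'', settle the comparison at the class level, and only then expand it back into the stated divisibility conditions, checking at the end that the three sets in the union are jointly complementary to $\{w:d(v_i,w)=d(v_j,w)\}$.
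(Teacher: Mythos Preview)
Your approach is essentially the same as the paper's: the same three-way case split, and in the ``otherwise'' case the same decomposition into the $\{v_i,v_j\}$ term, the adjacency term $\{u:u\cdot v_i=0\vee u\cdot v_j=0\}$, and the complementary gcd term, justified by comparing distances $1$, $2$, $3$ via Proposition~\ref{prop2}. If anything you are more careful than the paper, which simply asserts that $\gcd(w,v_i)\neq 1\wedge\gcd(w,v_j)\neq 1$ forces $d(v_i,w)=d(v_j,w)$ and then adjoins the adjacency and $\{v_i,v_j\}$ corrections, whereas you supply the common-neighbour argument for the $2$-versus-$3$ dichotomy and flag the degenerate cases explicitly.
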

\begin{proof}
(i) Let $v_i, v_j \in \mathcal{B}_i$ (i.e, $v_i$ and $v_j$ are distance-similar) then by Corollary \ref{cor} we have $ \Delta_{V}(v_{i}, v_j)=\{v_i, v_j\}$. \\
%(ii) Let $v_i, v_j \in V$ and $v_i\cdot v_j =0$ (mod $n$) then $d(v_i, w)\neq d(v_j, w)$ for all $w \in V$, which implies that $ \Delta_{V}(v_{i}, v_j)=V$.\\
(ii) Let $ v_i= v_j$, then $d(v_i, k)=d(v_j, k)$ for $k\in V$ implies that $\Delta_{V}(v_{i}, v_j)=\emptyset$.\\
(iii) Let $v_i, v_j \in V$ and both are not distance-similar then we have three cases: Case 1. if there exist $w\in V$ such that $gcd(w, v_i)\neq 1$ and $gcd(w, v_j)\neq 1$ then $d(v_i, w)= d(v_j, w)$ so, $ \Delta_{V}(v_{i}, v_j)\subset V\setminus w$. Case 2. if  there exist $u\in V$ such that $gcd(u, v_i)\neq 1$ and $gcd(u, v_j)\neq 1$, $u\cdot v_i=0$ or $ u\cdot v_j=0$ then we add $u$ in $ \Delta_{V}(v_{i}, v_j)$. Case 3. Let $v_i, v_j \in V$ then $d(v_i, v_i)=0$ and $d(v_j, v_j)=0$ implies that $\{v_i, v_j\} \subseteq \Delta_{V}(v_{i}, v_j)$.
\end{proof}
\par
Next, we present the representation of the entries in the distance-based discernibility matrix corresponding to \( \Gamma\left(\prod_{i=1}^{n} \mathbb{Z}_{2} \right) \). Let us denote $\bar{u}$ as the complement of the vertex $u$. For example, in $\Gamma(\prod_{i=1}^{3}\mathbb{Z}_{2})$ if $u=(1, 0, 0)$ then $\bar{u}=(0, 1, 1)$. Moreover, if $u=(1, 1, 0)$ and $v=(0, 1, 1)$ then we have  $u+v=(1, 0, 1)$.
\begin{Theorem}
For $\Gamma(\prod_{i=1}^{n}\mathbb{Z}_{2})$, the entries of $\Delta_V$ are
\[
  \Delta_{V}(v_{i}, v_j)= \left.
  \begin{cases}
   N(v_i)\Delta N(v_j), & \text{if }  v_i, v_j \in T_1 \\
 N(v_i)\Delta N(v_j)\cup N(\bar{v_i})\Delta N(\bar{v_j})\cup \{v_i+v_j, \bar{v_i}+\bar{v_j}\} & \text{if }  v_i, v_j \in T_j\, \forall j\in\{2, 3,\cdots T_{n-2}\} \\
 N(\bar{v_i})\Delta N(\bar{v_j})& \text{if }  v_i, v_j \in T_n-1 \\
    \emptyset , & \text{if }  v_i=v_j
    \end{cases}
  \right.
\]
where $N(v_i)$ is the neighborhood of $v_i$ and $\Delta$ denotes the symmetric difference between sets.
\end{Theorem}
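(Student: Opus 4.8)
The plan is to reduce the computation of $\Delta_V(v_i,v_j)$ to the standard combinatorial description of distances in $\Gamma(\prod_{i=1}^{n}\mathbb{Z}_{2})$ and then compare the two sides vertex by vertex. Write $\mathrm{supp}(u)\subseteq\{1,\dots,n\}$ for the set of non-zero coordinates of a vertex $u$, so that $u\sim v$ exactly when $\mathrm{supp}(u)\cap\mathrm{supp}(v)=\emptyset$ and $\mathrm{supp}(\overline{u})=\{1,\dots,n\}\setminus\mathrm{supp}(u)$. Since this graph has diameter at most $3$, the first step records the trichotomy: for $u\neq v$, $d(u,v)=1$ iff $\mathrm{supp}(u)\cap\mathrm{supp}(v)=\emptyset$; $d(u,v)=2$ iff the supports meet but $\mathrm{supp}(u)\cup\mathrm{supp}(v)\neq\{1,\dots,n\}$; and $d(u,v)=3$ iff the supports meet and their union is all of $\{1,\dots,n\}$. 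The distance upper bounds are witnessed by a common neighbour supported inside $\{1,\dots,n\}\setminus(\mathrm{supp}(u)\cup\mathrm{supp}(v))$ when this set is nonempty, and otherwise by a three-step path through the vertices supported on $\{1,\dots,n\}\setminus\mathrm{supp}(u)$ and on $\{1,\dots,n\}\setminus\mathrm{supp}(v)$; the hypothesis $2\le k\le n-2$ keeps $v_i,v_j$ away from the two extreme strata, so that $\overline{v_i},\overline{v_j}$ and all such auxiliary vertices are genuine vertices of $\Gamma(\prod_{i=1}^{n}\mathbb{Z}_{2})$ distinct from $(0,\dots,0)$ and $(1,\dots,1)$.

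For a fixed $v\in T_k$ and any $w\neq v$ this yields: $d(v,w)=1\iff w\in N(v)=\{w:\mathrm{supp}(w)\subseteq\mathrm{supp}(\overline{v})\}$; $d(v,w)=3\iff\mathrm{supp}(\overline{v})\subsetneq\mathrm{supp}(w)$; and $d(v,w)=2$ otherwise. Note also $N(\overline{v})=\{w:\mathrm{supp}(w)\subseteq\mathrm{supp}(v)\}$, and that $\{w:\mathrm{supp}(\overline{v})\subsetneq\mathrm{supp}(w)\}$ is the image of $N(\overline{v})$ under the complement involution $w\mapsto\overline{w}$ with $\overline{v}$ removed. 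With these in hand the case $v_i=v_j$ is trivial. If $v_i=e_a$ and $v_j=e_b$ lie in $T_1$ (writing $e_a$ for the singleton vertex on $\{a\}$), then the value $3$ never occurs and $d(e_a,w)=1\iff w_a=0$, so $w$ resolves $v_i,v_j$ precisely when exactly one of $w_a,w_b$ vanishes, i.e. $\Delta_V(v_i,v_j)=N(v_i)\triangle N(v_j)$ (the two singletons themselves lying in this symmetric difference). The case $v_i,v_j\in T_{n-1}$ is the mirror image under $w\mapsto\overline{w}$ with the values $1$ and $3$ interchanged, giving $\Delta_V(v_i,v_j)=N(\overline{v_i})\triangle N(\overline{v_j})$.

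The substantive case is $v_i,v_j\in T_k$ with $2\le k\le n-2$, where all three distance values occur and one runs through the $3\times 3$ grid of possibilities for $(d(v_i,w),d(v_j,w))$ over $w\neq v_i,v_j$. The diagonal pairs $(1,1),(2,2),(3,3)$ contribute nothing; every off-diagonal pair containing a $1$ contributes exactly $N(v_i)\triangle N(v_j)$; and the pairs $(2,3),(3,2)$ contribute the $w$ for which $\mathrm{supp}(\overline{v_i})\subsetneq\mathrm{supp}(w)$ holds for exactly one index, which via the complement involution is $N(\overline{v_i})\triangle N(\overline{v_j})$ up to the two extreme vertices $\overline{v_i},\overline{v_j}$. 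The structural fact that makes the grid collapse is $|\mathrm{supp}(v_i)|=|\mathrm{supp}(v_j)|=k$: a strict inclusion $\mathrm{supp}(\overline{v_i})\subsetneq\mathrm{supp}(\overline{v_j})$ (or its reverse) is then impossible, so the region ``$d(v_j,w)=3$'' is automatically disjoint from ``$w\in N(v_i)$'', which kills all cross-terms and forces $\Delta_V(v_i,v_j)$ to split as the union of the two symmetric differences. One then adjoins $v_i,v_j$ (which always resolve, since $d(v_i,v_i)=0$) and matches the remaining bounded set of exceptional vertices against the last listed term, completing the identification.

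The step I expect to be the real obstacle is this last reconciliation in the middle case. The three regions ``$w\in N(\cdot)$'', ``$\mathrm{supp}(\overline{\cdot})\subsetneq\mathrm{supp}(w)$'', and ``$d(\cdot,w)=2$'' overlap in a controlled but nontrivial way at the extreme vertices: at $w=\overline{v_i}$ one has $d(v_i,w)=1$ even though the non-strict containment $\mathrm{supp}(\overline{v_i})\subseteq\mathrm{supp}(w)$ holds, and at $w\in\{v_i,v_j\}$ the distance is $0$ rather than any of $1,2,3$. Hence the union over the cases becomes a correct set identity only after carefully inserting or deleting this $O(1)$-sized collection of boundary vertices; no new idea is required, but this is the point at which the argument is most easily gotten wrong.
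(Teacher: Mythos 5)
Your setup --- the support-based distance trichotomy, the $3\times 3$ grid of distance pairs, and the complement involution --- is the right machinery, and it does correctly dispose of the diagonal, $T_1$ and $T_{n-1}$ cases (the paper's own proof of those cases is an essentially equivalent neighbourhood argument). The genuine gap sits exactly where you predicted it: the ``reconciliation'' in the middle case cannot be carried out, for two concrete reasons. First, the contribution of the pairs $(2,3)$ and $(3,2)$ is the symmetric difference of the sets $\{w:\mathrm{supp}(\overline{v_i})\subsetneq\mathrm{supp}(w)\}$, and under your own involution this equals the \emph{image} $\{\overline{u}: u\in N(\overline{v_i})\,\triangle\,N(\overline{v_j})\}$ (up to the boundary vertices $\overline{v_i},\overline{v_j}$), not $N(\overline{v_i})\,\triangle\,N(\overline{v_j})$ itself; these are genuinely different sets, and no $O(1)$ adjustment reconciles them. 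For instance, in $\Gamma(\prod_{i=1}^{5}\mathbb{Z}_2)$ with $v_i=(1,1,1,0,0)$ and $v_j=(1,1,0,1,0)$, the vertex $w=(1,0,1,0,0)$ lies in $N(\overline{v_i})\,\triangle\,N(\overline{v_j})$ yet $d(v_i,w)=d(v_j,w)=2$, so it does not resolve the pair. Second, $v_i+v_j$ \emph{never} resolves $v_i$ and $v_j$ when both lie in the same stratum: since $|\mathrm{supp}(v_i)|=|\mathrm{supp}(v_j)|$ and $v_i\neq v_j$, the set $\mathrm{supp}(v_i+v_j)=\mathrm{supp}(v_i)\,\triangle\,\mathrm{supp}(v_j)$ meets both supports, and its union with either support equals $\mathrm{supp}(v_i)\cup\mathrm{supp}(v_j)$, so $d(v_i,v_i+v_j)=d(v_j,v_i+v_j)$ always. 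Concretely, for $n=4$, $v_i=(1,1,0,0)$, $v_j=(1,0,1,0)$, one finds $d(v_i,(0,1,1,0))=d(v_j,(0,1,1,0))=2$, while the genuinely resolving vertices $(1,0,1,1)$ and $(1,1,0,1)$ (distance $3$ from one of the pair, $2$ from the other) lie in none of the three listed terms.

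In short, the middle case of the stated formula is false, and your framework, pushed through honestly, yields a different (correct) answer: $N(v_i)\,\triangle\,N(v_j)$ together with the complement-image of $N(\overline{v_i})\,\triangle\,N(\overline{v_j})$, suitably adjusted at $v_i,v_j,\overline{v_i},\overline{v_j}$. The step you flagged as the one ``most easily gotten wrong'' is therefore not a bookkeeping nuisance but the point at which the claim breaks; note that the paper's own argument for that inclusion only says ``suppose there exist $v_i+v_j,\ \overline{v_i}+\overline{v_j}$ such that the distances differ,'' i.e.\ it assumes precisely what fails. To salvage your write-up you should either restate the middle case with the corrected right-hand side and finish the grid argument, or restrict the theorem to the $T_1$ and $T_{n-1}$ cases you have actually proved.
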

\begin{proof}
(i) Let $v_i, v_j\in T_1$, there exist $w\in V$ such that $w\in N(v_i)$ and $w\in N(v_j)$ then $d(v_i, w)=1=d(v_j, w)$ implies that $w\not\in \Delta_{V}(v_{i}, v_j)$. Now if $w\in N(v_i)$ but $w\not\in N(v_j)$ then $d(v_i, w)\neq d(v_j, w)$ (i.e, $d(v_i, w)=1, d(v_j, w)\neq 1$) which implies $w\in \Delta_{V}(v_{i}, v_j)$. Similar arguments are true for the case when $w\not\in N(v_i)$ and $w\in N(v_j)$.\\
(ii) Let $v_i, v_j\in T_{j}$ for all $j\in\{2, 3,\cdots T_{n-2}\}$ then by (i) $ N(v_i)\Delta N(v_j)\subset \Delta_{V}(v_{i}, v_j)$ and by (iii) $N(\bar{v_i})\Delta N(\bar{v_j})\subset \Delta_{V}(v_{i}, v_j)$. Now suppose there exist $v_i+v_j, \bar{v_i}+\bar{v_j}\in V$ such that $d(v_i, v_i+v_j )\neq d(v_j, v_i+v_j)$ and $d(v_i, \bar{v_i}+\bar{v_j})\neq d(v_j, \bar{v_i}+\bar{v_j})$ implies that $\{v_i+v_j, \bar{v_i}+\bar{v_j}\}\subset \Delta_{V}(v_{i}, v_j)$.\\
(iii) Let $v_i, v_j\in T_{n-1}$, there exist $w\in V$ such that $w\in N(\bar{v_i})$ and $w\in N(\bar{v_j})$ then $d(\bar{v_i}, w)=1=d(\bar{v_j}, w)$ and $d(v_i, w)=d(v_j, w)$ implies that $w\not\in \Delta_{V}(v_{i}, v_j)$. Now if $w\in N(\bar{v_i})$ and $w\not\in N(\bar{v_j})$ then $d(\bar{v_i}, w)\neq d(\bar{v_j}, w)$ and $d(v_i, w)\neq d(v_j, w)$ implies that $w\in \Delta_{V}(v_{i}, v_j)$. Similar arguments are true for the case when $w\not\in N(\bar{v_i})$ and $w\in N(\bar{v_j})$. At last if $w\not\in N(\bar{v_i})$ and $w\not\in N(\bar{v_j})$ then $d(\bar{v_i}, w)=d(\bar{v_j}, w)$ and $d(v_i, w)=d(v_j, w)$ implies that $w\not\in \Delta_{V}(v_{i}, v_j)$.\\
(iv) Let $ v_i= v_j$, then $d(v_i, k)=d(v_j, k)$ for $k\in V$ implies that $\Delta_{V}(v_{i}, v_j)=\emptyset$.
\end{proof}
\par
Reducts yield a partition such that \( \pi_{\mathbb{A}} = \pi_{V} = v_1 \mid v_2 \mid \cdots \mid v_n \). Now, we explore this concept from a different perspective, focusing on subsets of the vertex set whose removal results in a partition where at least one pair of vertices becomes indiscernible. A set of vertices whose removal alters the partition is referred to as an essential set. Specifically, we start from $\pi_{V}$ and iteratively remove vertices until we obtain a partition different from $\pi_{V}$.
Vertices can be categorized based on their role in the partition structure. Void vertices are those whose removal does not alter the partition, meaning they belong to none of the reducts or resolving sets. In contrast, basis-forced or fixed vertices are those whose removal leads to a change in the partition and these vertices appear in every resolving set or reduct. This terminology of fixed and void vertices is taken from \cite{void}. The smallest set of vertices whose removal yields partition different from $\pi_{V}$ is called essential set. Chiaselotti et al. \cite{Chiaselotti3} introduced the concept of essential sets, also referred to as the extended core, to address cases where the core is empty. In a similar manner, we define essential sets as follows.
\begin{defn}
For $\mathcal{G}$, a set $E \subseteq V$ is an essential set, if $\pi_{V\setminus E}\neq \pi_{V}$ and $\pi_{V\setminus E'}= \pi_{V}$ for all $E'\subset E$.
\end{defn}
We denote by \( ESS(\mathcal{I}) \) as the set containing all essential sets of \( \mathcal{I} \). The subset of \( k \)-essential sets is defined as $ESS_k(\mathcal{I}) = \{E \in ESS(\mathcal{I}): |E| = k\}$. The essential dimension of \( \mathcal{I} \) is given by $E_{dim}(\mathcal{I}) = \min\{k: ESS_k(\mathcal{I}) \neq \emptyset\}.$ By using the collection of all essential sets, we can construct all the minimal resolving sets. The following result presents essential sets for \( \mathcal{I} \) associated with \( \mathcal{G} \), where \( \mathcal{G} \) contains distance-similar vertices.
\begin{Proposition}\label{minimal1}
For $\mathcal{I}$, we have\\
(i) $ESS(\mathcal{I})=\{v_{i}, v_{i}': v_i, v_{i}' \in \mathcal{B}_j\}$ where $j \in \{1, 2, \cdots k\}$.\\
(ii) $E_{dim}(\mathcal{I})=2$.\\
(iii) $|ESS(\mathcal{I})|=\sum_{i=1}^{k}\binom{|\mathcal{B}_i|}{2}$ where $|\mathcal{B}_i|\geq 2$.
%$ESS(\mathcal{I})=\{v_{i}, v_j: gcd(v_i, n)=gcd(v_j, n), \,j\neq i\}$.
\end{Proposition}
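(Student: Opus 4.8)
The plan is to recast the notion of an essential set in terms of the distance-based discernibility matrix and then read off all three claims from the structure of its entries. First I would note that, since $\pi_V$ is the finest partition $v_1\mid v_2\mid\cdots\mid v_n$, a set $E\subseteq V$ satisfies $\pi_{V\setminus E}\neq\pi_V$ exactly when some pair of distinct vertices $v_a,v_b$ is $(V\setminus E)$-indiscernible. Applying property~(ii) of the relations between $\Delta_V$ and indiscernibility stated after Definition~\ref{def} with $\mathbb{A}=E$ forces $\Delta_V(v_a,v_b)\subseteq E$, while property~(i) shows conversely that $\pi_{V\setminus\Delta_V(v_a,v_b)}\neq\pi_V$ whenever $\Delta_V(v_a,v_b)\neq\emptyset$. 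Hence $E$ is an essential set precisely when $E$ is a $\subseteq$-minimal nonempty entry of $\Delta_V$, and this is the reformulation I would use throughout.

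Next I would prove the sub-lemma that removing at most one vertex never coarsens $\pi_V$: by the note following Proposition~\ref{prop0}, a pair of distinct vertices can be $\mathbb{A}$-indiscernible only if both of them lie outside $\mathbb{A}$ (a vertex in the attribute set separates itself from every other vertex, since $\mathcal{F}(v,v)=0$). Taking $\mathbb{A}=V\setminus E$ with $|E|\le1$, the two vertices cannot both lie in $E$, so $\pi_{V\setminus E}=\pi_V$; thus $E_{dim}(\mathcal{I})\ge2$. For the two-element entries, note that $\Delta_V(v_i,v_j)\supseteq\{v_i,v_j\}$ always, so $|\Delta_V(v_i,v_j)|=2$ iff $\Delta_V(v_i,v_j)=\{v_i,v_j\}$, iff $d(v_i,w)=d(v_j,w)$ for all $w\notin\{v_i,v_j\}$ — exactly the distance-similarity condition (compare Corollary~\ref{cor}). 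Since the smallest nonzero value in $\mathcal{N}_V$ is $2$, every distance-similar pair is automatically a $\subseteq$-minimal entry, hence an essential set; as $\mathcal{G}$ is assumed to contain distance-similar vertices, at least one such pair exists, so together with the sub-lemma this yields $E_{dim}(\mathcal{I})=2$, proving~(ii). Granting the converse inclusion below, (i) follows, and (iii) reduces to counting unordered pairs inside each distance-similar class, which gives $\sum_i\binom{|\mathcal{B}_i|}{2}$, classes of size $1$ contributing nothing.

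The step I expect to be the main obstacle is the remaining direction: showing that no $\subseteq$-minimal nonempty entry of $\Delta_V$ has size $\ge3$, i.e. that any entry $\Delta_V(v_a,v_b)\neq\{v_a,v_b\}$ properly contains a two-element entry. Here I would bring in the explicit description $\Delta_V(v_a,v_b)=V\setminus\bigl(X_t(v_a)\cap X_t(v_b)\bigr)$ together with the decomposition $V=\mathcal{B}_1\mid\cdots\mid\mathcal{B}_k$ into distance-similar classes, and argue that when $v_a,v_b$ are not themselves distance-similar one can always exhibit two distance-similar vertices among those on which $v_a$ and $v_b$ disagree, so that their two-element entry lies strictly inside $\Delta_V(v_a,v_b)$, contradicting $\subseteq$-minimality. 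This is exactly where the hypothesis on $\mathcal{G}$ must be invoked; once it is in place, combining it with the reformulation of the first paragraph completes~(i), and hence~(iii).
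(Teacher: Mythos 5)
Your argument is organized quite differently from the paper's: you first recast essential sets as the $\subseteq$-minimal nonempty entries of $\Delta_V$ (this equivalence is exactly Theorem 4.6, which the paper only proves \emph{after} this proposition) and then work entirely inside the discernibility matrix, whereas the paper argues directly from the definition, using Proposition 2.2 to get $d(v_i,v_i')=2\neq 0=d(v_i,v_i)$ (so $v_i\not\equiv_V v_i'$), distance-similarity to get $v_i\equiv_{V\setminus\{v_i,v_i'\}}v_i'$, and Proposition 3.6 for minimality. The parts you actually carry out --- $E_{\dim}\geq 2$ via the observation that a vertex of $\mathbb{A}$ separates itself from everything, the identification of the two-element entries with the distance-similar pairs, their automatic minimality, and the count $\sum_i\binom{|\mathcal{B}_i|}{2}$ --- are all correct and give the forward inclusion of (i), all of (ii), and (iii) modulo (i).

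The genuine gap is the one you yourself flag: the reverse inclusion of (i), i.e.\ that no minimal entry of $\Delta_V$ has cardinality $\geq 3$. You sketch an argument ("exhibit two distance-similar vertices among those on which $v_a$ and $v_b$ disagree") but never supply it, and in the generality in which the proposition is stated it cannot be supplied: if $l,l'$ are distance-similar and $v_a,v_b\notin\{l,l'\}$, then $l\in\Delta_V(v_a,v_b)$ iff $l'\in\Delta_V(v_a,v_b)$, so the pair $\{l,l'\}$ sits inside $\Delta_V(v_a,v_b)$ only when $v_a,v_b$ actually disagree on $l$ --- and this can fail. Concretely, take the tree on $\{a,a',c,b,d\}$ with edges $ac$, $a'c$, $cb$, $bd$: here $\{a,a'\}$ is the unique nontrivial distance-similar class, yet $\Delta_V(a,b)=\{a,b,d\}$ contains neither $a'$ nor any two-element entry, so $\{a,b,d\}$ is a minimal entry and hence an essential set of size $3$. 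So the missing direction genuinely needs the special structure of the graphs under consideration (e.g.\ $\Gamma(\mathbb{Z}_n)$), not just the existence of distance-similar vertices. You should be aware, however, that the paper's own proof has the same omission: it only verifies that distance-similar pairs \emph{are} essential sets and never shows they are the \emph{only} ones, so on this point your proposal is no weaker than the published argument --- it is merely more honest about where the difficulty lies.
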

\begin{proof}
(i) To prove that $\{v_{i}, v_{i}': v_i, v_{i}' \in \mathcal{B}_j\}$ is an essential set we first show that $\pi_{V}\neq\pi_{V\setminus \{ v_{i}, v_{i}' \}}$. For $v_{i}, v_{i}' \in V$ then from Proposition \ref{prop2} $d(v_i, v_i)=0$ and $d(v_i', v_i)=2$ implies that $\mathcal{F}(v_i, v_i)\neq\mathcal{F}(v_i', v_i)$
gives that $v_i\not\equiv_{V}v_i'$. But $\mathcal{F}(v_i, u)=\mathcal{F}(v_i', u)$ for all $u \in V\setminus \{ v_{i}, v_{i}' \}$ gives $v_i\equiv_{V\setminus \{ v_{i}, v_{i}' \}}v_i'$. So $\pi_{V}\neq\pi_{V\setminus \{ v_{i}, v_{i}' \}}$. By Proposition \ref{prop1} $\pi_{V\setminus \{v_i\}}=\pi_{V}$ hence $\{ v_{i}, v_{i}' \}$ is an essential set. \\
(ii) The proof directly follows from (i).\\
(iii) It is evident from (i) that each subset of cardinality two of each distance-similar class is an essential set. So from each class $\mathcal{B}_i$ with $|\mathcal{B}_i|\geq 2$ there exist $\binom{|\mathcal{B}_i|}{2}$ essential sets. Consequently, the total number of essential sets is given by the sum $\sum_{i=1}^{k}\binom{|\mathcal{B}_i|}{2}$.
%we have $\mathcal{F}(v_i, v_i)\neq\mathcal{F}(v_j, v_i)$, $\mathcal{F}(v_j, v_i)\neq\mathcal{F}(v_j, v_j)$ and $\mathcal{F}(v_i, v_j)=\mathcal{F}(v_j, v_j)$ for all $v_j \in V$ this implies $v_i\not \equiv_{V} v_j$ and $v_i\equiv_{V\setminus \{v_{i}, v_j \}} v_j$ so $\pi_{V}\neq\pi_{V\setminus \{ v_{i}, v_j \}}$. Now $\pi_{V}\neq\pi_{V\setminus \{ v_{i}\}}$ for all $v_i \in \{v_{i}, v_j: gcd(v_i, n)=gcd(v_j, n), \,j\neq i\}$.
\end{proof}
\par
It is interesting to note that, if we remove any element of $DISC(\mathcal{I})$ from $V$ it changes the partition structure, which is similar to removing an essential set from $V$. In \cite{Chiaselotti2} Chiaselotti et al. identified a relationship between the essential sets of an information system and the entries of its corresponding discernibility matrix. They demonstrated that the minimal entries in the discernibility matrix correspond to the essential sets. Analogously, we have the following result for distance-based discernibility matrix.
\begin{Theorem}\label{minimal}
The minimal entries in $DISC(\mathcal{I})$ are essential sets.
\end{Theorem}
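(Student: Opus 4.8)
The plan is to read the statement through the three properties relating the entries of $\Delta_V$ to the indiscernibility relation that are listed just after Definition \ref{def}: namely, if $\mathbb{A}=\Delta_V(v_i,v_j)$ then $v_i\equiv_{V\setminus\mathbb{A}}v_j$ (property (i)); if $v_i\equiv_{V\setminus\mathbb{A}}v_j$ then $\Delta_V(v_i,v_j)\subseteq\mathbb{A}$ (property (ii)); and the elementary observation that whenever $v_i\neq v_j$ one has $v_i,v_j\in\Delta_V(v_i,v_j)$, since $d(v_i,v_i)=0\neq d(v_j,v_i)$ and $d(v_j,v_j)=0\neq d(v_i,v_j)$, so every off-diagonal entry is nonempty. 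Let $\mathbb{A}$ be a minimal nonempty entry of $\Delta_V$, say $\mathbb{A}=\Delta_V(v_i,v_j)$ with $v_i\neq v_j$. I will verify that $E:=\mathbb{A}$ satisfies the two defining conditions of an essential set.

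For the first condition, $\pi_{V\setminus\mathbb{A}}\neq\pi_V$: property (i) applied to $\mathbb{A}=\Delta_V(v_i,v_j)$ gives $v_i\equiv_{V\setminus\mathbb{A}}v_j$, so $v_i$ and $v_j$ lie in a common block of $\pi_{V\setminus\mathbb{A}}$. Since $\pi_V$ is the finest partition $v_1|v_2|\cdots|v_n$ (Remark after Proposition \ref{prop0}), we have $v_i\not\equiv_V v_j$, and hence $\pi_{V\setminus\mathbb{A}}\neq\pi_V$.

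For the second condition — and this is the heart of the argument — I must show $\pi_{V\setminus E'}=\pi_V$ for every proper subset $E'\subsetneq\mathbb{A}$. Suppose not: pick $E'\subsetneq\mathbb{A}$ with $\pi_{V\setminus E'}\neq\pi_V$ (necessarily $E'\neq\emptyset$). Then there are vertices $v_k\neq v_l$ with $v_k\equiv_{V\setminus E'}v_l$, so by property (ii) (with $E'$ in place of $\mathbb{A}$) we get $\Delta_V(v_k,v_l)\subseteq E'\subsetneq\mathbb{A}$. But $\Delta_V(v_k,v_l)$ is a nonempty entry of $\Delta_V$, hence an element of $DISC(\mathcal{I})$ that is strictly contained in $\mathbb{A}$, contradicting the minimality of $\mathbb{A}$. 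Therefore no such $E'$ exists, both conditions hold, and $\mathbb{A}$ is an essential set.

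The main obstacle is not conceptual but one of careful bookkeeping on two points: (a) that a minimal entry is genuinely nonempty, so the associated pair $v_i,v_j$ is distinct and so that the entry $\Delta_V(v_k,v_l)$ produced in the contradiction step is itself nonempty (and thus a bona fide competitor in $DISC(\mathcal{I})$); and (b) that the chain $\Delta_V(v_k,v_l)\subseteq E'\subsetneq\mathbb{A}$ makes the inclusion into $\mathbb{A}$ \emph{strict}, so it really violates minimality rather than merely recovering $\mathbb{A}$. It would also be natural to append a remark that, conversely, every essential set is a minimal nonempty entry (via property (iii)), which in the distance-similar case reduces to Proposition \ref{minimal1} together with Corollary \ref{cor}; but that converse is not required for the stated theorem.
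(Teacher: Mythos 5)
Your proof is correct and follows essentially the same route as the paper's argument for this direction: the properties (i) and (ii) relating entries of $\Delta_V$ to indiscernibility under vertex removal translate minimality of an entry into the two defining conditions of an essential set, and your handling of the key step (that $\pi_{V\setminus E'}\neq\pi_V$ would yield an entry $\Delta_V(v_k,v_l)\subseteq E'\subsetneq\mathbb{A}$, contradicting minimality) is if anything more careful than the paper's shortcut ``$F\notin DISC(\mathcal{I})$ implies $\pi_{V\setminus F}=\pi_V$''. The only difference is that the paper's proof also establishes the converse (every essential set is a minimal entry), which the subsequent remark on twin-free graphs actually relies on; you correctly note that this converse is not part of the literal statement.
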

\begin{prf}
Suppose $E \in ESS(\mathcal{I})$ then $\pi_{V\setminus E}\neq \pi_{V}$ and $\pi_{V\setminus F}= \pi_{V}$ for all $F\subset E$. Let $v_{i}, v_{j}\in V$ such that
$v_{i}\equiv_{V\setminus E}v_{j}$ but $v_{i}\not \equiv_{V}v_{j}$ thus $E \subseteq \Delta_V(v_{i}, v_{j})$. Since $E \in ESS(\mathcal{I})$ then for every proper subset $F\subset E$, we have $\pi_{V\setminus F}=\pi_V$. That is, $F\notin DISC(\mathcal{I})$. Therefore, $\Delta_V(v_{i}, v_{j}) = E$ and $E$ is a minimal element.
\par
Conversely, assume $E$ is a minimal element $DISC(\mathcal{I})$ then for every proper subset $F\subset E$, we have $F\notin DISC(\mathcal{I})$, which implies $\pi_{V\setminus F}=\pi_V$. Moreover, $E\in DISC(\mathcal{I})$ there exist $v_{i}, v_{j}\in V$ such that $\Delta_V(v_{i}, v_{j}) = E$. Thus, $\pi_V \neq \pi_{V\setminus E}$ which completes the proof.
\end{prf}
%The following result is the analogue of the Theorem 2.2 in \cite{2.2}.
%\begin{Proposition}
%A zero-divisor graph $\Gamma(R)$ is $k$-metric dimensional if and only if $k = |E|$ where $E$ is the smallest set in $ESS(\mathcal{I})$.
%\end{Proposition}
\par
If a graph has no twin vertices, then the cardinality of every entry in $DISC(\mathcal{I})$ is at least 3. By the established result \ref{minimal}, all minimal entries in $DISC(\mathcal{I})$ are essential sets. Consequently, the cardinality of any essential set must be greater than or equal to 3, ensuring that no essential set has fewer than three elements.
\begin{Remark}
For $\mathcal{G}$, if $\mathcal{G}$ is twin-free graph then $E_{dim}(\mathcal{I})\geq 3$.
\end{Remark}
\section{Applications in Social Networking}
Networks are fundamental structures used to model and analyze complex systems, particularly in social networking, where they represent relationships among individuals. In this section, we study a practical scenario of networking by using the terminology and concepts introduced in the previous sections.
\subsection{Communication Network}
Consider a network of a data science company, hiring seven individuals for a marketing analytic project, each specializing in different skill set crucial to the company's operations. We model this network with zero-divisor graph, where the nodes are individuals which are represented by a specific skill set. The edge between two nodes represent connections established based on the principle of complementarity in skills and expertise, fostering a collaborative and diverse business environment. In this network, two individuals are connected if they complement each other in skills. There are seven people having three types of expertise: (1) Herry a statistical analyst, (2) Ella possessing skills in both statistical analysis and machine learning, (3) Alice has expertise in machine learning, (4) Bob is a statistical analyst also excels in data visualization, (5) Mark is expert in machine learning and statistical analysis, (6) Jia specializes in statistical analysis, (7) Zoe possesses skills in statistical analysis. She is also expert in data visualization. The network is shown in Figure \ref{fig2}.
\begin{figure}[h!]
  \centering
\includegraphics[width=4cm]{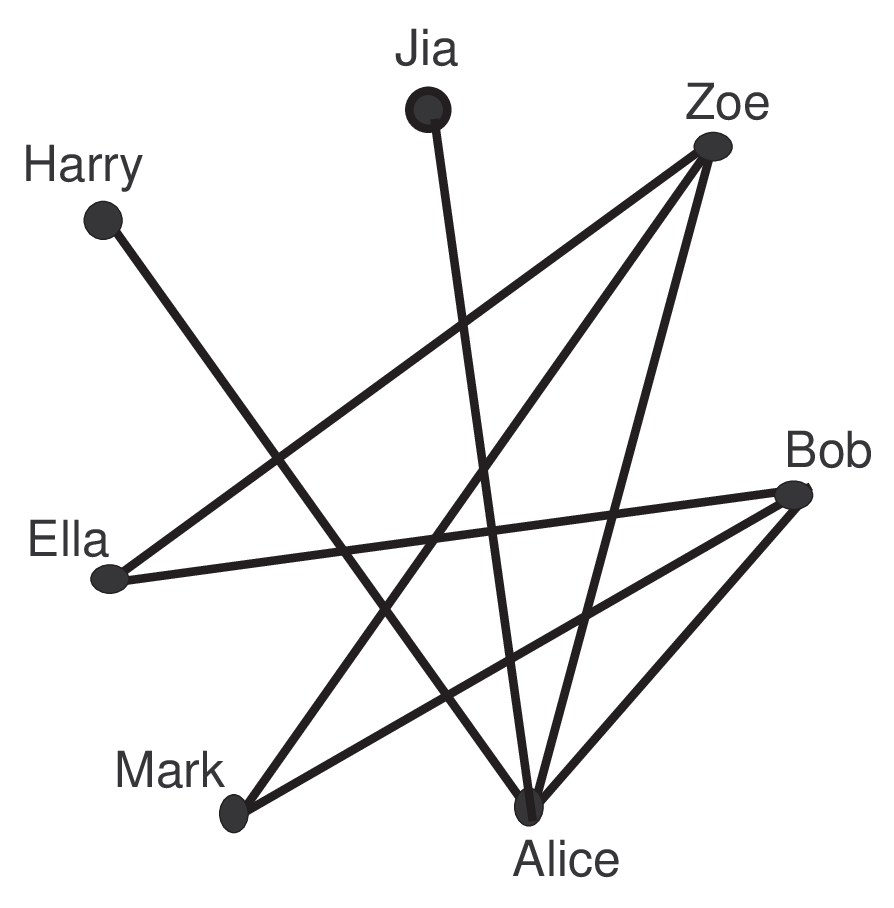}
%    \qquad
%    \subfloat[\centering $\mathcal{I}(\Gamma(\mathbb{Z}_{n}))$]{{\includegraphics[width=4cm]{information table.PNG} }}%
   \caption{Employees Connected by Diverse skills}%
   \label{fig2}%
\end{figure}
\par
%If there are $m$ individuals with all complementing each other skills, they form a complete graph with $m$ vertices.
We can interpret the distances as a measure of how complementary the skills of individuals are. The distance matrix is given in Table \ref{Table1}. This network goes in line with the zero-divisor graph of $\mathbb{Z}_{12}$. The vertices with same skill set are considered distance-similar in $\Gamma(\mathbb{Z}_{12})$. From the above network we observe that Harry and Jia, Ella and Mark, Zoe and Bob posses same skill set so they are distance-similar. The distance-similar classes are $\mathcal{B}_1=\{Harry, Jia\}, \mathcal{B}_2=\{Ella, Mark\}, \mathcal{B}_3=\{Zoe, Bob\}, \mathcal{B}_4=\{Alice\}$.

\begin{table}[ht]\label{Table1}
\centering
\scriptsize
\begin{tabular}{|c|c|c|c|c|c|c|c|}
  \hline
  % after \\: \hline or \cline{col1-col2} \cline{col3-col4} ...
   & Herry & Alice & Mark & Bob & Ella &Zoe& Jia\\
\hline
  Herry & 0 & 1 & 3 & 2 & 3& 2&2\\
  Alice &1 & 0 & 2 & 1 & 2& 1& 1\\
  Mark &3 & 2 & 0 & 1 & 2& 1& 3\\
 Bob &2& 1 & 1 & 0 & 1& 2&2\\
  Ella &3 & 2 & 2 & 1 & 0& 1& 3\\
Zoe &2& 1 & 1 & 2 & 1& 0&2\\
Jia & 2 & 1 & 3 & 2 & 3& 2&0\\
\hline
\end{tabular}
\caption{Distance Matrix}
\end{table}
To understand the network's structure and the representation of individuals in terms of their skills and expertise, lets identify resolving sets. A resolving set is a subset of individuals such that the distance between any two individuals in the network is uniquely determined by their distances to the individuals in the resolving set. The task of finding all resolving sets in a network is generally challenging because it involves an exhaustive search over all possible subsets of vertices, and the number of such subsets can be exponential in the size of the network. For larger graphs, this becomes computationally infeasible. Therefore, we use rough set theory to identify all resolving sets.
\par
To compute all resolving sets which are called reducts in the context of RST, we consider distance-based discernibility matrix of Table \ref{Table1}. Since the discernibility matrix is symmetric and has empty diagonal entries, we only need to consider the lower half of the matrix for computations. For simplicity we assume only the first alphabet of their names that is the vertex set $V=\{H, A, M, B, E, Z, J\}$.
\begin{table}[ht]
\scriptsize
\centering
\begin{tabular}{c|ccccccc}
  % after \\: \hline or \cline{col1-col2} \cline{col3-col4} ...
   & Herry & Alice & Mark &Bob & Ella & Zoe &jia\\
\hline
  Herry & $\emptyset$ &  &  &  &  &\\
  Alice& $ V$ & $\emptyset$ &  &  &  &\\
  Mark & $V$& $\{A, M, H, J\}$ & $\emptyset$ &  &  &\\
 Bob & $\{H, M, B, E\}$ & $V$ & $V$ & $\emptyset$ &  &\\
  Ella &  $V$ &  $\{H, A, E, J\}$ & $\{M, E\}$ & $V$ & $\emptyset$ &\\
 Zoe & $\{H, M, E, Z\}$ & $V$ & $V$ & $\{B, Z\}$ & $V$ & $\emptyset$\\
Jia & $\{H,J\}$ & $V$ & $V$ & $\{M, B, E, J\}$ & $V$ &\{M, E, Z, J\}& $\emptyset$\\
\end{tabular}
\end{table}
\par
We determine the reducts using the discernibility function. By applying the disjunction operation to all entries of the discernibility matrix, we obtain $\{M, E\}\wedge\{B, Z\}\wedge\{H, J\}$ and after simplifying, the obtained reducts are $\{M, B, H\}, \{M, Z, H\}, \{M, B, J\}, \{M, Z, J\}, \{E, B, H\}, \{E, B, J\}, \{E, Z, H\}$ and $\{E, Z, J\}$. When finding resolving sets, one often aims to minimize the size of the resolving set while ensuring that all vertices are uniquely determined. All the vertices in distance-similar class is identical so we consider only $m-1$ vertices from each class with cardinality $m$.
\section{Conclusions}
Network analysis helps to understand how different entities are connected and interact in complex systems. Networks are incredibly complex with numerous interconnected components and dynamic behaviors. Granular computing helps in simplifying complex networks by grouping nodes into granules based on their properties or relationships. This paper introduced a metric-based granular computing technique to study network structure by defining an indiscernibility relation on $V$.
The concept of reducts in RST and resolving sets in a network are both concerned with identifying minimal subsets of vertices that give complete information about the network. By establishing the equivalence between reducts and resolving sets, we proposed algorithms to compute minimal resolving sets in networks. The first algorithm identifies resolving sets using the concept of reducts, while the second employs the discernibility function derived from the distance-based discernibility matrix to determine all the minimal resolving sets.
The foundational concepts were applied to simple undirected graphs and extended to zero-divisor graphs associated with $\mathbb{Z}_n$, which include twin vertices, and $\prod_{i=1}^{n}\mathbb{Z}_{2}$, which represent twin-free graphs. In the future, we plan to expand our research by applying the methodology proposed in this paper to study other types of networks.

%%%%--------------------------------------------------------------------------------
\section*{Author contributions}
All authors have contributed equally in this paper.
%
%%----------------------------------------------------------------------------


\begin{thebibliography}{999}
\bibitem{Livingston}
D. F. Anderson, Zero-divisor graphs in commutative rings, Commutative Algebra, Noetherian and Non-Noetherian Perspectives/Springer-Verlag(2011).
\bibitem{hibba}
H. Arshad,  I. Javaid, A. Fahad, Granular computing in zero-divisor graphs of Zn. Kuwait J. Sci. 51(3)(2024) 100231.
\bibitem{winer}
T. Asir, V. Rabikka, The Wiener index of the zero-divisor graph of $\mathbb{Z}_n$. Disc. Appl. Math. 319(2022) 461-471.
\bibitem{mitu}
M. Basak, L. Saha, K. Tiwary, Metric dimension of zero-divisor graph for the ring Zn. Int. J. Sci. Res. in Mathematical and Statistical Sciences Vol, 6(6)(2019).
\bibitem{Beck}
I. Beck, Coloring of commutative rings. Journal of Algebra, 116(1) (1988) 208-226. https://doi.org/10.1016/0021-8693(88)90202-5
%\bibitem{ba}
%Barabasi, A. L., and Albert, R. (2011). Emergence of scaling in random networks. In The structure and dynamics of networks (349-352). Princeton University Press.


%\bibitem{BP}
% Bargiela, A., and Pedrycz, W. (2022). Granular computing. In Handbook on Computer Learning and Intelligence: Volume 2: Deep Learning, Intelligent Control and Evolutionary Computation (97-132).
%\bibitem{chartand}
%Chartrand, G., Poisson, C., and Zhang, P. (2000). Resolvability and the upper dimension of graphs. Computers and Mathematics with Applications, 39(12), 19-28.
\bibitem{chart1}
G. Chartrand, C. Poisson, P. Zhang, Resolvability and the upper dimension of graphs. Computers and Mathematics with Applications, 39(12)(2000) 19-28.
\bibitem{chart}
G. Chartrand, L. Eroh, M. A.Johnson, O. R. Oellermann, Resolvability in graphs and the metric dimension of a graph, Disc. Appl. Math. 105(1-3)(2000) 99-113.
\bibitem{chenz}
J. Chen, H. Xiong, H. Zheng, D. Zhang, J. Zhang,  M. Jia,  Y. Liu, . EGC2: Enhanced graph classification with easy graph compression. Inf. Sci. 629(2023) 376-397.
\bibitem{Chiaselotti1}
G. Chiaselotti, D. Ciucci, T. Gentile, F. Infusino, Rough set theory applied to simple undirected graphs. In Rough Sets and Knowledge Technology: 10th International Conference, RSKT 2015, Held as Part of the International Joint Conference on Rough Sets, IJCRS 2015, Tianjin, China, November 20-23, 2015, Proceedings 10 (pp. 423-434). Springer International Publishing.
\bibitem{Chiaselotti2}
G. Chiaselotti, D. Ciucci, T. Gentile, F. G. Infusino, Generalizations of rough set tools inspired by graph theory. Fund. Info. 148(1-2)(2016)207-227.
\bibitem{Chiaselotti3}
G. Chiaselotti, D. Ciucci, T. Gentile, Simple graphs in granular computing. Inf. Sci. 340 (2016) 279-304. https://doi.org/10.1016/j.ins.2015.12.042
\bibitem{Chiaselotti4}
G. Chiaselotti, T. Gentile, F. G. Infusino,  P. A. Oliverio, The adjacency matrix of a graph as a data table: A geometric perspective. Annali di Matematica Pura ed Applicata 196(2017) 1073-1112.
\bibitem{Chiaselotti5}
G. Chiaselotti, D. Ciucci, T. Gentile, F. G. Infusino, F. Tropeano, Rough sets on graphs: New dependency and accuracy measures. Discrete Mathematics, Algorithms and Applications, 10(05)(2018) 1850063.
%\bibitem{csb}
%Cordeiro, M., Sarmento, R. P., Brazdil, P., and Gama, J. (2018). Evolving networks and social network analysis methods and techniques. Social media and journalism-trends, connections, implications, 101-134.
%\bibitem{Chiaselotti5}
%Chiaselotti, G., Gentile, T., and Infusino, F. (2017). Knowledge pairing systems in granular computing. Knowledge-Based Systems, 124, 144-163. https://doi.org/10.1016/j.knosys.2017.03.008
%\bibitem{linshu}
%Chen, L., Wang, J., Wang, W., and Li, L. I. (2019). A new granular computing model based on algebraic structure. Chinese Journal of Electronics, 28(1), 136-142. (2019) 136-142.
%\bibitem{c22}
%Cambie, S., and Wagner, S. (2022). The minimum number of maximal independent sets in twin-free graphs. arXiv preprint arXiv:2211.04357.
\bibitem{daud}
N. N. Daud, S. H. Ab Hamid, M. Saadoon, F. Sahran, N. B. Anuar, Applications of link prediction in social networks: A review. Journal of Network and Computer Applications, 166(2020) 102716.
\bibitem{dutta}
S. Dutta, S. Ghatak, R. Dey,  A. K. Das, S. Ghosh, Attribute selection for improving spam classification in online social networks: a rough set theory-based approach. Social Network Analysis and Mining, 8(2018) 1-16.

%\bibitem{2.2}
%Estrada-Moreno, A., Rodríguez-Velázquez, J. A., and Yero, I. G. (2013). The k-metric dimension of a graph. arXiv preprint arXiv:1312.6840.
\bibitem{fatima}
A. Fatima, I. Javaid, Rough set theory applied to finite dimensional vector spaces. Inf. Sci. 659(2024) 120072.
\bibitem{Freeman}
L. C. Freeman, A set of measures of centrality based on betweenness. Sociometry (1977).
%\bibitem{f20}
%Foucaud, F., Heydarshahi, S., and Parreau, A. (2020). Domination and location in twin-free digraphs. Discrete Applied Mathematics, 284, 42-52.
%\bibitem{gg}
%Grabow, C., Grosskinsky, S., Kurths, J., and Timme, M. (2015). Collective relaxation dynamics of small-world networks. Physical Review E, 91(5), 052815.
\bibitem{garg}
A. Gargantini, A. Pastine,  P. Torres, On the structure and diameter of graph associahedra of graphs with a set of true twins (2024). arXiv preprint arXiv:2406.06317.
\bibitem{gari}
D. Garijo, A. González, A. Márquez, On the metric dimension, the upper dimension and the resolving number of graphs, Disc. Appl. Math. 161(10-11)(2013) 1440-1447.
\bibitem{Gupta}
S. Gupta, P. Kumar, An overlapping community detection algorithm based on rough clustering of links. Data and Knowledge Engineering, 125(2020) 101777.
\bibitem{void}
A. Hakanen, V. Junnila, T.Laihonen, I. G. Yero, On vertices contained in all or in no metric basis. Disc. Appl. Math. 319(2022) 407-423.
\bibitem{Harary}
F. Harary, R. A. Melter, On the metric dimension of a graph. Ars combin, 2(1976)(191-195)1.
\bibitem{honkala}
I. Honkala,  O. Hudry, A. Lobstein, On the ensemble of optimal identifying codes in a twin-free graph. Cryptography and Communications, 8 (2016)139-153.
\bibitem{Javaid}
I. Javaid, S. Ali, S. U. Rehman, A. Shah, Rough sets in graphs using similarity relations. AIMS Math. 2104.08743. 7 (4) (2021) 5790-5807.  https://doi.org/ 10.3934/math.2022320
%\bibitem{Khuller}
%Khuller, S., Raghavachari, B., and Rosenfeld, A. (1996). Landmarks in graphs. Discrete Applied Mathematics, 70(3), 217-229.
%\bibitem{kempe}
%Kempe, D. , Kleinberg, J., and Tardos, E. (2003). Maximizing the spread of influence through a social network. In Proceedings of the ninth ACM SIGKDD international conference on knowledge discovery and data mining 137-146.
%\bibitem{kundu}
%Kundu, S., and Pal, S. K. (2015). FGSN: fuzzy granular social networks–model and applications. Information Sciences, 314, 100-117.
%\bibitem{liben}
%Liben-Nowell, D., and Kleinberg, J. (2003). The link prediction problem for social networks. In Proceedings of the twelfth international conference on Information and knowledge management 556-559.
\bibitem{lisenti}
H. Li, T. Yuan, H. Wu, Y. Xue,  X. Hu, Granular computing-based multi-level interactive attention networks for targeted sentiment analysis, Granular Computing, 5(2020) 387-395.
\bibitem{liao}
X. Liao, S. Nazir, J. Shen, B. Shen, S. Khan, Rough set approach toward data modelling and user knowledge for extracting insights, Complexity, (1)(2021) 7815418.
\bibitem{lin}
T. Y. Lin, Y. Y. Yao, L. A. Zadeh, Data mining, rough sets and granular computing (2013)(Vol. 95). Physica.
%\bibitem{jlu}
%Lu, J., Wang, R., Mian, A., Kumar, A., and Sarkar, S. (2018). Distance metric learning for pattern recognition. Pattern Recognition, 75, 1-3.
\bibitem{liu}
D. Liu, X. Yang, T. Li, Three-way decisions: beyond rough sets and granular computing, International Journal of Machine Learning and Cybernetics, 11(2020) 989-1002.

%\bibitem{link}
%Lin, K., Gao, J., Li, Y., Savaglio, C., and Fortino, G. (2022). Multi-granularity collaborative decision with cognitive networking in intelligent transportation systems. IEEE Transactions on Intelligent Transportation Systems, 24(1), 1088-1098.
%\bibitem{Monterde}
%Monterde, H. (2021). Strong cospectrality and twin vertices in weighted graphs. arXiv preprint arXiv:2111.01265.

\bibitem{newman}
M. E.  Newman. Modularity and community structure in networks. Proceedings of the National Academy of Sciences, 103(23)(2006) 8577-8582.

\bibitem{ind}
Z. Pawlak, Rough sets: Theoretical aspects of reasoning about data (Vol. 9). Springer Science and Business Media (2012).
\bibitem{pirzada3}
S. Pirzada, M. Aijaz, S. P. Redmond, On upper dimension of graphs and their bases sets. Discrete Math. Lett, 3(2020) 37-43.
\bibitem{pirzada2}
S. Pirzada, M. Aijaz, S. P. Redmond, Upper dimension and bases of zero-divisor graphs of commutative rings. AKCE International Journal of Graphs and Combinatorics, 17(1)(2020) 168-173.
\bibitem{prabhu}
S. Prabhu,  V. Manimozhi, M. Arulperumjothi, S. Klavžar,  . Twin vertices in fault-tolerant metric sets and fault-tolerant metric dimension of multistage interconnection networks. Appl. Math. Comp. 420(2022)126897.
\bibitem{raj}
E. D. Raj, G. Manogaran, G. Srivastava, Y. Wu, Information granulation-based community detection for social networks. IEEE Transactions on Computational Social Systems, 8(1)(2020) 122-133.
\bibitem{raja}
R. Raja, S. Pirzada, S. Redmond, On locating numbers and codes of zero divisor graphs associated with commutative rings. Journal of Algebra and Its Applications, 15(01)(2016) 1650014.
\bibitem{Rather}
B. Rather, S. Pirzada, I. Bhat, T. Chishti, On Randic spectrum of zero divisor graphs of commutative ring $\mathbb {Z} _ {n} $. Communications in Combinatorics and Optimization, 8(1) (2023)103-113.
\bibitem{redmond}
S. Redmond, S. Szabo, When metric and upper dimensions differ in zero divisor graphs of commutative rings. Discrete Mathematics Letters, 5(2021) 34-40.
\bibitem{pro}
L. Royer, M. Reimann, B. Andreopoulos, M. Schroeder, Unraveling protein networks with power graph analysis. PLoS Computational Biology, 4(7)(2008) 1000108.


\bibitem{Sabidussi}
G. Sabidussi, The centrality index of a graph. Psychometrika, 31(4)(1966) 581-603.
\bibitem{Slater}
P. J. Slater, Leaves of trees. Congr. Numer, 14(1975)549-559.
%\bibitem{sk1992}
%Skowron, A., and Rauszer, C. (1992). The discernibility matrices and functions in information systems. In Intelligent decision support: handbook of applications and advances of the rough sets theory (331-362). Dordrecht: Springer Netherlands.

\bibitem{stell1}
J. G. Stell, Granulation for graphs. In Spatial Information Theory. Cognitive and Computational Foundations of Geographic Information Science: International Conference COSIT’99 Stade, Germany, August 25–29, 1999 Proceedings 4 (pp. 417-432). Springer Berlin Heidelberg.
\bibitem{stell}
J. G. Stell, Relations in mathematical morphology with applications to graphs and rough sets. In International Conference on Spatial Information Theory (2007) (pp. 438-454). Berlin, Heidelberg: Springer Berlin Heidelberg.
\bibitem{stell2}
J. G. Stell, Relational granularity for hypergraphs. In Rough Sets and Current Trends in Computing: 7th International Conference, RSCTC 2010, Warsaw, Poland, June 28-30, 2010. Proceedings 7 (pp. 267-276). Springer Berlin Heidelberg.
\bibitem{west}
D. B. West, Introduction to graph theory. Upper Saddle River: Prentice hall (2)(2001). https://doi.org/10.1007/978-
1-349-13908-8-10
\bibitem{xu}
T. Xu, G. Wang, Finding strongly connected components of simple digraphs based on generalized rough sets theory. Knowledge-Based Systems, 149(2018) 88-98.
\bibitem{yao}
Y. Y. Yao, N. Zhong, Granular computing using information tables. Data Mining, Rough Sets and Granular Computing, (2002)102-124.
%\bibitem{yao1}
%Yao, Y.,  and Zhao, Y. (2009). Discernibility matrix simplification for constructing attribute reducts. Information Sciences, 179(7), 867-882.
%\bibitem{young}
%M. Young, Adjacency matrices of zero-divisor graphs of integers modulo n. Involve, a Journal of Mathematics, 8(5)(2015) 753-761. https://doi.org/10.2140/involve.2015.8.753

\bibitem{zadeh}
L. A. Zadeh, Fuzzy sets and information granularity. In Fuzzy sets, fuzzy logic, and fuzzy systems: selected papers by Lotfi A Zadeh (1996) (pp. 433-448).
\bibitem{zare}
A. Zareie, A. Sheikhahmadi, M. Jalili,  M. S. K. Fasaei, Finding influential nodes in social networks based on neighborhood correlation coefficient. Knowledge-based systems, 194, 105580. 194(2020)105580.

\bibitem{zhang}
P. Zhang, T. Li, G. Wang, C. Luo, H. Chen, J. Zhang, Z. Yu, Multi-source information fusion based on rough set theory: A review. Information Fusion, 68(2021) 85-117.
%\bibitem{sang}
%W. Sang, H. Zhang, X. Kang, P. Nie,  X. Meng, B. Boulet, P. Sun, Dynamic multi-granularity spatial-temporal graph attention network for traffic forecasting. Inf. Sci. 662  (2024)120230.

%\bibitem{48}
%Zhao, X., Liang, J., and Wang, J. (2021). A community detection algorithm based on graph compression for large-scale social networks. Information Sciences, 551, 358-372.

\bibitem{zhux}
Z. Zhu, H. Xing, Y. Xu, Balanced neighbor exploration for semi-supervised node classification on imbalanced graph data. Inf. Sci. 631(2023) 31-44.


\end{thebibliography}
\end{document}